\newcommand{\spara}[1]{\smallskip\noindent{\bf #1}}
\newcommand{\squishlist}{\begin{list}{$\bullet$}
  { \setlength{\itemsep}{0pt}
     \setlength{\parsep}{3pt}
     \setlength{\topsep}{3pt}
     \setlength{\partopsep}{0pt}
     \setlength{\leftmargin}{1.5em}
     \setlength{\labelwidth}{1em}
     \setlength{\labelsep}{0.5em} } }
\newcommand{\squishend}{
  \end{list}  }
\newcommand{\eps}{\varepsilon}
\newcommand{\expt}[1]{2^{\poly(#1)}}
\newcommand{\szem}{{Szemer{\'e}di}}
\newcommand{\balls}{\code{QuickCluster}}
\DeclareMathOperator{\cost}{cost}
\DeclareMathOperator{\ideal}{ideal}
\newcommand{\opti}{\func{OPT}}
\DeclareMathOperator{\cl}{c\ell}
\DeclareMathOperator{\similarity}{sim}
\newtheorem{theorem}{Theorem}[section]
\newtheorem{lemma}[theorem]{Lemma}
\newtheorem{definition}[theorem]{Definition}
\newtheorem{corollary}[theorem]{Corollary}
\newtheorem{remark}{Remark}[section]
\newcommand{\code}[1]{{\textproc{#1}}}
\renewcommand{\ALG@beginalgorithmic}{\small}
\renewcommand{\Statex}{}
\newcommand{\naturals}{{\mathbb N}}
\newcommand{\abs}[1]{\lvert{#1}\rvert}
\newcommand{\norm}[1]{\lVert{#1}\rVert}
\providecommand{\poly}{{\operatorname{poly}}}
\newenvironment{prooftext}[1]{\par\noindent{\bf Proof#1.}\quad}{\nopagebreak$\qed$\\}
\newenvironment{proof}{\begin{prooftext}{}}{\nopagebreak\end{prooftext}}
\newenvironment{proofof}[1]{\begin{prooftext}{ of #1}}{\end{prooftext}}
\newcommand{\qed}{\hfill\hbox{${\vcenter{\vbox{ \hrule height 0.4pt\hbox{\vrule width 0.4pt height 6pt \kern5pt\vrule width 0.4pt}\hrule height 0.4pt}}}$}}
\newcommand{\setnosp}[1]{\{{#1}\}}
\newcommand{\set}{\@ifstar
                     \setnosp%
                     \setnosp%
}
\newcommand{\mycomment}[1]{}
\newcommand{\rst}[1]{{\ensuremath{{\mathbin\upharpoonright}\raise-.9ex\hbox{${\scriptstyle{#1}}$}}}}
\newcommand{\subst}[1]{{\ensuremath{\raise-.9ex\hbox{${\scriptstyle{#1}}$}}}}
\DeclareMathOperator*{\expect}{\mathbb{E}}
\newcommand{\indic}{\mathbb{I}\,}
\renewcommand{\epsilon}{\varepsilon}
\DeclareRobustCommand{\calA}[0]{{\mathcal A}}
\DeclareRobustCommand{\calB}[0]{{\mathcal B}}
\DeclareRobustCommand{\calC}[0]{{\mathcal C}}
\DeclareRobustCommand{\calG}[0]{{\mathcal G}}
\DeclareRobustCommand{\calR}[0]{{\mathcal R}}
\DeclareRobustCommand{\calS}[0]{{\mathcal S}}
  \newcommand{\proceedings}[1]{Proceedings of the \ordinal{#1}}
  \newcommand{\stoc}[1]{\proceedings{#1} ACM Symposium on Theory of Computing (STOC)}
  \newcommand{\focs}[1]{\proceedings{#1} IEEE Symposium on Foundations of Computer Science (FOCS)}
  \newcommand{\soda}[1]{\proceedings{#1} ACM-SIAM Symposium on Discrete Algorithms (SODA)}
  \newcommand{\icalp}[1]{\proceedings{#1} International Colloquium on Automata, Languages and Programming (ICALP)}
  \newcommand{\stacs}[1]{\proceedings{#1} International Symposium on Theoretical Aspects of Computer Science (STACS)}
  \newcommand{\ics}[1]{\proceedings{#1} Symposium on Innovations in Theoretical Computer Science (ITCS)}
  \newcommand{\colt}[1]{\proceedings{#1} Conference on Learning Theory (COLT)}
  \newcommand{\kdd}[1]{\proceedings{#1}
ACM SIGKDD Conference on Knowledge Discovery and Data Mining (KDD)}
  \newcommand{\proceedings}[1]{Proc. of \ordinal{#1}}
  \newcommand{\stoc}[1]{\proceedings{#1} STOC}
  \newcommand{\focs}[1]{\proceedings{#1} FOCS}
  \newcommand{\soda}[1]{\proceedings{#1} SODA}
  \newcommand{\icalp}[1]{\proceedings{#1} ICALP}
  \newcommand{\stacs}[1]{\proceedings{#1} STACS}
  \newcommand{\ics}[1]{\proceedings{#1} ITCS}
  \newcommand{\colt}[1]{\proceedings{#1} COLT}
  \newcommand{\kdd}[1]{\proceedings{#1} KDD}
\title{Local correlation clustering}
\author{
\begin{tabular}{ccc}
 Francesco Bonchi &
\hspace{4mm} David Garc\'ia--Soriano &
Konstantin Kutzkov\\
\multicolumn{2}{c}{Yahoo Labs} & IT University of Copenhagen\\
\multicolumn{2}{c}{Barcelona, Spain} &  Copenhagen, Denmark\\
\multicolumn{2}{c}{\small{\sf \{bonchi, davidgs\}@yahoo-inc.com}} & {\small{\sf konk@itu.dk}}
\end{tabular}
}
\date{}
\begin{document}
\maketitle
\sloppy

\begin{abstract}
\emph{Correlation clustering} is perhaps the most natural formulation of clustering. Given $n$ objects and a pairwise similarity measure, the goal is
to cluster the objects so that, to the best possible extent, similar objects are put in the same cluster and dissimilar objects are put in different clusters.

Despite its theoretical appeal, the practical relevance of correlation clustering still remains largely unexplored. This is mainly due to the fact that correlation clustering requires the $\Theta(n^2)$ pairwise similarities as input. In large datasets this is infeasible to compute or even only to store.

In this paper we initiate the investigation into \emph{local} algorithms  for correlation clustering, laying the theoretical foundations for clustering ``big data''.  In \emph{local correlation clustering} we are
given the identifier of a single object and we want to return the cluster to which it belongs in some globally consistent near-optimal clustering, using a small number of similarity queries.

Local algorithms for correlation clustering open the door to \emph{sublinear-time} algorithms, which are particularly useful when the similarity between items is costly to compute, as it is often the case in many practical application domains.
They also imply $(i)$ distributed and streaming clustering algorithms,  $(ii)$ constant-time estimators and testers for cluster edit
distance, and $(iii)$ property-preserving parallel reconstruction algorithms for clusterability.

Specifically, we devise a local clustering algorithm attaining a $(3, \eps)$-approximation (a solution with cost at most $3\cdot \opti + \eps n^2$, where $\opti$ is the optimal cost). Its running time is $O(1/\eps^2)$ independently of the dataset size. If desired, an explicit approximate clustering for all $n$ objects can be produced in time
$O(n/\eps)$ (which is provably optimal).
We also provide a fully additive $(1,\eps)$-approximation with local query
complexity $\poly(1/\eps)$ and time complexity $\expt{1/\eps}$. The explicit clustering can be found in time $ n\cdot \poly(1/\eps)
+ 2^{\poly(1/\eps)}$. 
The latter yields the fastest polynomial-time approximation scheme for
correlation clustering known to date.

\end{abstract}

\newpage

\section{Introduction}\label{sec:intro}
In \emph{correlation clustering}\footnote{\scriptsize Sometimes called \emph{clustering with qualitative information} or \emph{cluster editing}.} we are given a set $V$ of $n$ objects and a pairwise similarity function~$\similarity: V \times V \rightarrow [0,1]$,
and  the goal is to cluster the items in such a way that, to the best possible extent, similar objects are put in the same cluster and dissimilar objects are put in different clusters. Assuming that cluster identifiers are represented by natural numbers,
a clustering $\cl$ is a function $\cl:V \rightarrow \mathbb{N}$. Correlation clustering aims at
minimizing the cost:
\begin{equation}
\label{equation:correlation-clustering} 	
\mspace{-20.0mu} \sum_{\substack{(x,y) \in V \times V, \\ \cl(x)=\cl(y)}} (1-\similarity(x,y)) \mspace{5.0mu} +
\mspace{-15.0mu} \sum_{\substack{(x,y) \in V \times V, \\ \cl(x)\not=\cl(y)}} \similarity(x,y).
\end{equation}
The intuition underlying the above problem definition is that
if two objects $x$ and $y$ are assigned to the same cluster we
should pay the amount of their dissimilarity $(1-\similarity(x,y))$, while
if they are assigned to different clusters we should pay the
amount of their similarity $\similarity(x,y)$.

In the most widely studied setting, the similarity function is binary, i.e.,  $\similarity: V \times V \rightarrow \{0,1\}$.
This setting can be viewed very conveniently trough graph-theoretic lenses: the $n$ items correspond to the vertices of a \emph{similarity graph} $G$, which is a complete undirected graph with edges labelled ``+'' or ``-''.
An edge $e$ causes a \emph{disagreement} (of \emph{cost} 1) between the similarity graph and a
clustering when it is a ``+'' edge connecting vertices in different clusters, or a ``--''
edge connecting vertices within the same cluster.
If we were given a \emph{cluster graph} \cite{cluster_editing} (or \emph{clusterable} graph), i.e., a graph whose set of positive edges is the union
of vertex-disjoint cliques, we would be able to produce a perfect (i.e., cost 0) clustering simply by computing the connected components of the positive graph.
However, similarities will generally be inconsistent with one another, so incurring a certain cost is unavoidable.
Correlation clustering aims at minimizing such cost.
The problem can be viewed as an agnostic learning problem, where we try to approximate the adjacency function of $G$ by the hypothesis class of
cluster graphs; alternatively, it is the task of finding the equivalence relation that most closely resembles a given symmetric relation $R$.

Correlation clustering provides a general framework in which one only needs to define a suitable similarity function.
This makes it particularly appealing for the task of clustering structured objects, where the similarity function is domain-specific and does not rely on an ad hoc specification of some suitable metric such as the Euclidean distance of vectors. Thanks to this generality, the technique is applicable to a multitude of problems in different domains, including duplicate detection and  similarity joins~\cite{duplicate_detection,corr_weighted}, biology~\cite{clustering_genes}, image segmentation~\cite{image_segmentation} and social networks~\cite{chromatic_clustering}.

Another key feature of correlation clustering is that it does not require a prefixed number of clusters, instead it automatically finds the optimal number.

Despite its appeal, correlation clustering has been, so far, mainly of theoretical interest. This is due to its  scaling behavior with the size of the
input data: given $n$ items to be clustered, building the complete similarity graph $G$ requires $\Theta(n^2)$ similarity computations. For a large $n$, the the similarity graph $G$ might unfeasible to construct, or even only to store. This is the main bottleneck of correlation clustering and the reason why
its  practical relevance still remains largely unexplored.

The high-level contribution of our work is to overcome the main drawback of correlation clustering, making it scalable. We achieve this by designing algorithms that can construct a clustering in a \emph{local} and distributed manner.

The input of a local  clustering algorithm is the identifier of one of the $n$ objects to be clustered, along with a short random seed. After making a small number of oracle similarity queries (probes into the pairwise similarity matrix), a local algorithm outputs the cluster
to which the object belongs, in some globally consistent near-optimal clustering.

\subsection{A model for local correlation clustering}
In the following we focus on the binary case: we will discuss the non-binary case together with other extensions in Section \ref{sec:extensions}.
We work with the adjacency matrix model, which assumes oracle access to the input graph $G$. Namely, given $x, y \in V(G)$, we can ask whether $\{x, y\}$ is a positive edge of $G$; each query is charged with unit cost.
By \emph{explicitly finding} a clustering we mean storing $\cl(v)$ for every $v \in V(G)$.
In this explicit model a running time of $\Omega(n)$ is necessary as it requires to specify all values.
An algorithm with complexity $O(n)$ for (approximate) correlation clustering is already a significant improvement over the complexity of most current solutions, but we take a step further and ask whether the
dependence on $n$ may be avoided altogether by producing \emph{implicit} representations of the cluster mapping.

It is for this reason that we define local clustering as follows.
Let us fix, for each finite graph $G$, a collection $\calC^G$ of ``high quality'' clusterings for $G$.
\begin{definition}[Local clustering algorithm]\label{def:local}
Let $t \in \naturals$. A clustering algorithm $\calA$ for $\calG^n$ is said to be \emph{local} with time (resp., query) complexity~$t$  if having
oracle access to any graph $G$, and taking as input $|V(G)|$ and a vertex $v \in V(G)$, $\calA$ returns  a cluster label $\calA^G(v)$ in
time $t$ (resp., with $t$ queries).

Algorithm $\calA$ implicitly defines a clustering, described by the cluster label function $\cl(v) =
\calA^G(v)$, where the {same} sequence $r$ of random bits is used by $\calA$ to calculate $\calA^G(v)$ for each~$v$.
The success probability of~$\calA$ is the infimum (over all graphs $G$) of the probability (over $r$) that the clustering implicitly defined by $\calA^G$ belongs to $\calC^G$.
\end{definition}
Note that $t$ does not depend on $n = |V(G)|$: this means that the cluster label of each vertex can be computed
in constant time independently of the others. On the other hand, $t$ could have a (hopefully mild) dependence on the desired \emph{quality} of the
clustering produced  (which defines the set~$\calC^G$ for a given~$G$), and the success probability of~$\calA$.
Finally, it is important to note that, in order to define a unique ``global'' clustering across different
vertices, the same sequence $r$ of random coin flips must be used.

Sometimes we also allow \emph{local algorithms with preprocessing $p$}, meaning (when $p$ denotes time complexity) that $\calA$ is
allowed to perform computations and queries using total time $p$ before reading the input vertex~$v$. This preprocessing computation/query set is
common to all vertices and may only depend on the outcome of $\calA$'s internal coin tosses and the edges probed.

\subsection{Contributions and practical implications}
We focus on approximation algorithms for local correlation clustering with sublinear time and query complexity.
Since any multiplicative approximation needs to make $\Omega(n^2)$ queries (Section~\ref{sec:lb}),
we need less stringent requirements.\footnote{\scriptsize
We remark that in a different model that uses \emph{neighborhood oracles}~\cite{correlation_revisited}, it is possible to bypass the $\Omega(n^2)$ lower bound for multiplicative approximations that holds for edge queries. In fact from our analysis we can derive the first sublinear-time constant-factor approximation algorithm for this case; see Section~\ref{sec:extensions}.
}
One way is to allow an additional $\eps$-fraction of edges to be violated, compared to the optimal clustering
of cost $\opti$. Following Parnas and Ron~\cite{approx_vc}, we study $(c, \eps)$ approximations: solutions with at most $c\cdot \opti
+ \eps\cdot n^2$ disagreements. These solutions form the set $C^G$ of ``high-quality'' clusterings in Definition~\ref{def:local}. Here~$c$ is a small
constant and $\eps \in (0,1)$ is an accuracy parameter specified by the user. Essentially $\eps$ handles the trade-off between the desired accuracy and the run-time:  the larger $\eps$ the faster then algorithm, but also the further from $\opti$.

While we provide the formal statement of our results in Section \ref{sec:results}, here we highlight
the main message of this paper: there exist efficient local clustering algorithms with good approximation guarantees. Namely, in time $t = \poly(1/\eps)$
it is possible to obtain $(O(1),\eps)$-approximations locally. (Typically we think of $\eps$ as a user-defined constant.)
This yields many practical contributions as by-products:
\squishlist
    \item \textbf{Explicit clustering in time $\boldmath{O(n)}$.} Given that $\cl(v)$ can be computed in time $t$ for each $v \in
    V(G)$, one  can produce an explicit clustering in time  $O(n \cdot t)$. Since $t = O(1)$, this is
    linear in the number of vertices (not edges) of the graph.
    More generally, the complexity of finding clusters of a subset $S\subseteq V$ of vertices requested by the user  is proportional to the size of this subset.

    \item \textbf{Distributed algorithms.} We can assign vertices to different processors and compute their cluster labels in parallel, provided that the same random seed is passed along to all
         processors. 

    \item \textbf{Streaming algorithms.} Similarly, local clustering algorithms can cluster graphs in the streaming setting, where edges arrive in arbitrary order. In this case the sublinear behaviour is lost because we still need to process every edge. However, the memory footprint of the algorithm can be brought down from $\Omega(n^2)$ to
    $O(n)$ (called the \emph{semi-streaming} model~\cite{semi_stream}).
    Indeed, note that given a fixed random seed,  for every vertex $v$ the set of all possible queries  $Q_v$ that can be made  during the computation of $\cl(v)$   has size\footnote{\scriptsize This bound  can in fact be reduced to $t$ for the non-adaptive algorithms we devise.} at most $2^t$.
    This set can be computed before any edge arrives. From then on it suffices to keep  in memory the edges $(v, w)$ where $w \in Q_v$, and there are $n \cdot 2^t =     O(n)$ of them.
In fact, the running time of the local-based algorithm will be dominated by the time it takes to discard the unneeded edges.

    \item \textbf{Cluster edit distance estimators and testers.} We can estimate the degree of clusterability of the input data in  constant time by sampling pairs of
    vertices and using  the local clustering algorithm to see
    how many of them disagree with the input graph.  We believe this can be an important primitive to develop new algorithms.
    Moreover, estimators for cluster edit distance give (tolerant) testers for the property of being clusterable, thereby allowing us to quickly detect data instances where any attempt to obtain a good clustering is bound to failure.

    \item \textbf{Local clustering reconstruction.} Queries of the form \emph{``are $x, y$ in the same cluster?''} can be answered in constant time without having to partition the whole graph: simply compute $\cl(x)$ and $\cl(y)$, and check for equality.
          This means that we can ``correct'' our input graph $G$ (a ``corrupted'' version of a clusterable graph) so that the
          modified graph we output is close to the input and satisfies the property of being clusterable.
This fits the paradigm of \emph{local property-preserving data reconstruction} of
\cite{data_reconstr} and~\cite{mono_reconstr}.
\squishend

To the best of our knowledge, this is the first work about local algorithms for correlation clustering.

\section{Background and related work}
\spara{Correlation clustering.}
Minimizing disagreements is the same as maximizing agreements for exact algorithms, but the two tasks differ with regard to
approximation. Following~\cite{fixed_clusters}, we refer to these two problems as $\code{MaxAgree}$ and $\code{MinDisagree}$, while  $\code{MaxAgree}[k]$ and $\code{MinDisagree}[k]$ refer to the variants of the problem with a bound $k$ on the number of clusters.
Not surprisingly $\code{MaxAgree}$ and $\code{MinDisagree}$
are $\NP$-complete~\cite{correlation_clustering, cluster_editing}; the same holds for their bounded
counterparts, provided that $k \ge 2$. Therefore approximate solutions are of interest. For $\code{MaxAgree}$, there is
a (randomized) \PTAS:
the first such result was due to Bansal \emph{et al.}~\cite{correlation_clustering} and ran in time $n^2 \exp{(O(1/\eps))}$, later
improved to $n \cdot 2^{\poly(1/\eps)}$ by Giotis and Guruswami~\cite{fixed_clusters}. The latter also presented a $\PTAS$ for $\code{MaxAgree}[k]$ that runs in time $n \cdot k^{O(\eps^{-3} \log(k/\eps))}$.
In contrast, $\code{MinDisagree}$ is $\APX$-hard~\cite{cluster_qualitative}, so we do not expect a
\PTAS. Nevertheless, there are constant-factor approximation algorithms~\cite{correlation_clustering,cluster_qualitative,balls}.
The best factor ($2.5$) was given by Ailon \emph{et al.}~\cite{balls}, who also present a simple, elegant algorithm that achieves
a slightly weaker expected approximation ratio of $3$, called $\balls$ (see
Section~\ref{sec:main}). 
For $\code{MinDisagree}[k]$, $\PTAS$ appeared in ~\cite{fixed_clusters} and~\cite{lt_gb}.
There is also work on correlation clustering on incomplete
graphs~\cite{correlation_clustering,cluster_qualitative,clustering_sdp,fixed_clusters,corr_weighted}.

%

\spara{Sublinear clustering algorithms.}
Sublinear clustering algorithms for geometric data sets are known~\cite{testing_clustering,sublinear_approx_cluster,clustering_similarity,sublinear_clustering,sublinear_clustering2}.
Many of these find implicit representations of the clustering they output.  There is a natural implicit
representation for most of this problems, e.g., the set of $k$ cluster centers. 
By contrast, in correlation clustering there may be no clear way to define a clustering for the whole graph based on a small set of
vertices. The only sublinear-time algorithm known for correlation clustering is the aforementioned result of ~\cite{fixed_clusters};
it runs in time~$O(n)$, but the multiplicative constant hidden in the notation has an exponential dependence on the approximation parameter.

The literature on \emph{active clustering} also contains algorithms with sublinear query complexity (see, e.g., \cite{active_clustering}); many of
them are heuristic or do not apply to correlation clustering. Ailon \emph{et al.}~\cite{active_queries} obtain algorithms for $\code{MinDisagree}[k]$ with sublinear query complexity, but the running time of their solutions is exponential in $n$.

\spara{Local algorithms.}
The following notion of locality is used in the distributed computing literature. Each vertex of a sparse graph is assigned a processor, and
each processor can compute a certain function in a constant number of rounds by passing messages to its neighbours (see Suomela's survey~\cite{survey_local}). Our algorithms are also local in this sense.

 Recently, Rubinfeld \emph{et al.}~\cite{fast_local} introduced a model that encompasses notions from several
algorithmic subfields, such as locally decodable codes, local reconstruction and local distributed computation.
Our definition fits into their framework:  it corresponds to \emph{query-oblivious, parallelizable, strongly local} algorithms that compute a cluster label function in constant time.

Finally, we point out the work of Spielman and Teng~\cite{spielman_local} pertaining local clustering algorithms. In their papers an algorithm is
``local'' if it can, given a vertex
$v$, output $v$'s cluster in time nearly linear in the cluster's size. Our
local clustering algorithms also have this ability (assuming, as they do, that for each vertex we are given a list of its neighbours), although
the results are not comparable because
~\cite{spielman_local} attempt to minimize the cluster's conductance.

\spara{Testing and estimating clusterability.}
Our methods can also be used for quickly testing clusterability of a given input graph $G$, which is related to the task of estimating the \emph{cluster edit distance},
    i.e., the minimum number of edge label swaps (from ``+'' to ``--'' and viceversa) needed to transform $G$ into a cluster graph. Note that this corresponds to the optimal cost of correlation clustering for the given input $G$.
Clusterability is a hereditary graph property (closed under
removal and renaming of vertices), hence it can be tested with one-sided error using a constant number of queries by the powerful result of Alon
and Shapira~\cite{charac_graphs}.
Combined with the work of Fischer and
Newman~\cite{testing_estimation}, this also yields estimators for cluster edit distance that run in time independent of the graph size. 
Unfortunately, the query complexity of the algorithm given by these results would be a tower exponential of height $\poly(1/\eps)$, where
$\eps$ is the approximation parameter.

Approximation algorithms for \code{MIN-2-CSP} problems~\cite{approx_csp} also give estimators for cluster edit distance.  However, they provide no way
of computing each variable assignment in constant time. Moreover, they use time $\sim n^2$ to calculate all assignments, and hence do not lend
themselves to sublinear-time clustering algorithms.

\section{Statement of results}\label{sec:results}
All our graphs are undirected and simple. For a vertex $v$, $\Gamma^+(v)$ is the set
of positive edges incident with $v$; similarly define $\Gamma^-(v)$. We extend this notation to sets of vertices in the obvious manner.
The  \emph{distance} between two graphs $G=(V,E)$ and $G'=(V,E')$ is  $|E \oplus E'|$.
Their fractional distance is their distance divided by $n^2$ (note this is in the interval $[0, 1/2)$).
Two graphs are \emph{$\eps$-close} to each other if their distance is at
most most $\eps n^2$.
A \emph{$k$-clusterable} graph is a union of at most $k$ vertex-disjoint cliques.  A graph is \emph{clusterable} if it is
$k$-clusterable for some $k$.

The following folklore lemma says that approximate $k$-clustering algorithms yield approximate clustering
algorithms with an unbounded number of clusters:
\begin{lemma}\label{fromkton}
If $G$ is clusterable, then it is $\eps$-close to $(1 + 1/\eps)$-clusterable.
\end{lemma}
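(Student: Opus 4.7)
The plan is to use the probabilistic method with a clique-preserving random partition. Write $G$ as a disjoint union of cliques $C_1,\ldots,C_m$ with sizes $n_1,\ldots,n_m$. Set $k = \lfloor 1+1/\eps \rfloor$; since $k = 1 + \lfloor 1/\eps \rfloor \geq 1/\eps$, we have $1/k \leq \eps$. I would construct a random cluster graph $H$ as follows: independently assign each clique $C_i$ to one of $k$ labels chosen uniformly at random, and take $H$ to be the graph whose positive edges are exactly the pairs of vertices receiving the same label. By construction, $H$ is a disjoint union of at most $k$ cliques, hence $(1+1/\eps)$-clusterable (empty bins only reduce the cluster count).

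Next I would bound the expected edit distance $\EX[|E(G)\oplus E(H)|]$. Because each $C_i$ is assigned as a whole, no positive edge of $G$ is ever deleted in $H$, so the only disagreements are pairs $(u,v)$ lying in distinct cliques $C_i \neq C_j$ of $G$ that happen to receive the same label. The number of such cross-clique pairs is $\binom{n}{2} - \sum_i \binom{n_i}{2} \leq \binom{n}{2}$, and each of them collides in the same bin with probability exactly $1/k$ by independence. Therefore
\[
\EX{|E(G)\oplus E(H)|} \;\leq\; \frac{1}{k}\binom{n}{2} \;\leq\; \frac{n^2}{2k} \;\leq\; \frac{\eps n^2}{2}.
\]

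By averaging, some outcome of the random assignment yields a concrete $H$ with $|E(G)\oplus E(H)| \leq \eps n^2/2 < \eps n^2$, and this $H$ witnesses that $G$ is $\eps$-close to a $(1+1/\eps)$-clusterable graph. There is no substantive obstacle — this is exactly the folklore argument referred to in the statement, and the only bookkeeping issue is the choice of $k = \lfloor 1+1/\eps \rfloor$, which is forced by requiring simultaneously that the number of clusters be at most $1+1/\eps$ and that $1/k \leq \eps$.
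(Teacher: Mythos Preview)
Your argument is correct, but it takes a different route from the paper's. The paper proceeds deterministically: it keeps every clique of size at least $\eps n$ intact, collects the vertices in the remaining small cliques into a set $B$, and repacks $B$ arbitrarily into new clusters of size $\lfloor \eps n\rfloor$ (plus possibly one leftover). Since every new disagreement lies inside $B$ and each $v\in B$ sees at most $\eps n$ changed neighbors, the total extra cost is at most $\eps n\,|B|\le \eps n^2$, and a count of clusters by size gives the $1+1/\eps$ bound.

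Your random-hashing argument is a genuine alternative: it merges \emph{all} cliques (not just the small ones) via a uniform $k$-coloring and uses linearity of expectation plus averaging. It is slightly slicker and even shaves a factor of two in the additive error ($\eps n^2/2$ rather than $\eps n^2$). What the paper's approach buys in return is an explicit, fully constructive witness together with the structural side guarantee that all but one of the resulting clusters have size at least $\lfloor \eps n\rfloor$; this large-cluster property is sometimes handy downstream, though for the lemma as stated your proof is entirely sufficient.
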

\begin{proof}
Take the optimal clustering for $G$. Let $B$ be the set of vertices in clusters of size~$< \eps
n$. Now re-cluster the elements of $B$ arbitrarily into clusters of size $\lfloor \eps n \rfloor$ (except
possibly one). This introduces at most
$\eps n \cdot |B| \le \eps n^2$ additional errors. All but one of the clusters of the resulting
clustering have size $\ge \lfloor
\eps n \rfloor$, hence it has at most $1 + 1/\eps$ clusters.
\end{proof}

\begin{corollary}\label{cor:boundedk}
Any $(1,\eps/2)$ approximation to the optimal $(1+2/\eps)$-clustering is also
a $(1,\eps)$ approximation to the optimal clustering.
\end{corollary}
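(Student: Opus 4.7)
The plan is to reduce the claim to Lemma \ref{fromkton} by passing through the cluster graph of the optimal clustering. Let $\opti$ denote the cost of an optimal clustering of $G$, and let $\opti_k$ denote the cost of an optimal $k$-clustering of $G$. The corollary amounts to showing that $\opti_{1+2/\eps} \le \opti + (\eps/2) n^2$, because then any clustering $\cl$ with $\cost(\cl) \le \opti_{1+2/\eps} + (\eps/2) n^2$ automatically satisfies $\cost(\cl) \le \opti + \eps n^2$.

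First I would rewrite cost in graph-distance terms. Fix an optimal clustering $\cl^\ast$ of $G$ and let $G^\ast$ be the associated cluster graph (the union of cliques induced by $\cl^\ast$). By definition of correlation-clustering cost, $\cost_G(\cl^\ast) = |E(G) \oplus E(G^\ast)| = \opti$. The point of moving to $G^\ast$ is that $G^\ast$ is itself clusterable, so Lemma \ref{fromkton} applies to it directly.

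Next I would invoke Lemma \ref{fromkton} on $G^\ast$ with parameter $\eps/2$: this produces a $(1+2/\eps)$-clusterable graph $G^{\ast\ast}$ with $|E(G^\ast) \oplus E(G^{\ast\ast})| \le (\eps/2) n^2$. Let $\cl^{\ast\ast}$ be the corresponding clustering (with at most $1 + 2/\eps$ clusters). Then by the triangle inequality for symmetric difference,
\[
\cost_G(\cl^{\ast\ast}) = |E(G) \oplus E(G^{\ast\ast})| \le |E(G) \oplus E(G^\ast)| + |E(G^\ast) \oplus E(G^{\ast\ast})| \le \opti + (\eps/2) n^2,
\]
which yields the desired bound $\opti_{1+2/\eps} \le \opti + (\eps/2) n^2$. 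Combining this with the hypothesis $\cost(\cl) \le \opti_{1+2/\eps} + (\eps/2) n^2$ gives $\cost(\cl) \le \opti + \eps n^2$, as required.

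There is no real obstacle here: the proof is essentially bookkeeping on edge symmetric differences, together with the observation that the cluster graph of an optimal clustering is clusterable and hence eligible for Lemma \ref{fromkton}. The only thing worth double-checking is the parameter choice: we apply the lemma with $\eps/2$ so that the extra $(\eps/2) n^2$ loss from reducing to $(1+2/\eps)$ clusters, plus the $(\eps/2) n^2$ slack in the hypothesis on $\cl$, add up to exactly $\eps n^2$.
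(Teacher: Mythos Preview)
Your argument is correct and is precisely the paper's intended proof, which it summarizes as ``immediate from the triangle inequality for graph distances.'' You have simply unpacked that sentence: pass to the cluster graph $G^\ast$ of an optimal clustering (which is clusterable), apply Lemma~\ref{fromkton} with parameter $\eps/2$, and add the two $(\eps/2)n^2$ losses via the triangle inequality for symmetric difference.
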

\begin{proof}
Immediate from the triangle inequality for graph distances.
\end{proof}

We are now ready to  summarize our results.
All our algorithms are (necessarily) randomized, and succeed with probability no less than $2/3$ (which can be amplified).
Our first result concerns the standard setting where the clusters of all vertices need to be
explicitly computed. We present a $(4,\eps)$-approximation\footnote{\scriptsize We can also produce an \emph{expected} $(3,\eps)$-approximation. Because we insist on algorithms that work with constant success probability, we talk about $(4,\eps)$-approximations, where the constant $4$ could be replaced with any number $\geq 3$.}
that runs in time $O(n/\eps)$; compare the $\Omega(n^2)$ complexity
of most other clustering methods. Our algorithm is optimal up to constant factors.
\begin{theorem}\label{explicit3}
Given $\eps \in (0, 1)$, a $(4,\eps)$-approximate clustering for $\code{MinDisagree}$ can be found in time $O(n/\eps)$.
Moreover, finding an $(O(1),\eps)$-approximation with constant success probability requires $\Omega(n/\eps)$ queries.
\end{theorem}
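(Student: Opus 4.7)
\textbf{Upper bound.} My plan is to run a truncated variant of the QuickCluster algorithm of~\cite{balls}. Draw a uniformly random permutation $\pi$ of $V$, scan it from the start, and iteratively select pivots: $p_1$ is the first vertex in $\pi$, and $p_i$ is the first vertex in $\pi$ after $p_{i-1}$ that has no positive edge to any of $p_1,\ldots,p_{i-1}$. Each non-pivot vertex is then assigned to the cluster of the earliest pivot with which it shares a positive edge. I would stop as soon as $k=\lceil c/\eps\rceil$ pivots have been created and declare every remaining vertex to be a singleton. Since the label of any vertex $v$ can be decided from the $\leq k$ edge queries $\{v,p_1\},\ldots,\{v,p_k\}$ (plus a one-time $O(k^2)$ computation to identify the pivots themselves), the total query and time cost is $O(n/\eps)$.

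For the analysis I would bound the algorithm's cost by
$$\cost \leq \cost_{\text{full}} + |E^+(R)|,$$
where $\cost_{\text{full}}$ is the cost of the untruncated QuickCluster and $R$ is the set of vertices left unassigned after the $k$-th pivot. The Ailon--Charikar--Newman charging argument gives $\mathbb{E}[\cost_{\text{full}}]\leq 3\,\opti$, so it suffices to show $\mathbb{E}[|E^+(R)|]=O(\eps n^2)$. I would split the positive edges into those with both endpoints of positive degree $\geq\eps n$ and those with at least one endpoint of positive degree $<\eps n$: the first group contributes at most $n^2\cdot(1-\Omega(\eps))^k=O(\eps n^2)$ once $c$ is taken large enough (since each high-degree endpoint is hit by some pivot with probability at least $1-(1-\Omega(\eps))^k$), while the second group has total size $\leq\eps n^2$ trivially. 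Summing yields expected cost $\leq 3\,\opti+O(\eps n^2)$; taking the best of a constant number of independent trials (whose cost can be approximated to additive $\eps n^2$ by sampling $O(1/\eps^2)$ pairs) converts this into a $(4,\eps)$-approximation with constant success probability, as anticipated by the footnote. The main obstacle is the tail estimate on $|E^+(R)|$, which is exactly where the factor $1/\eps$ in the running time is paid; Lemma~\ref{fromkton} is the qualitative reason this bound should hold, since a graph with many high-degree vertices cannot be too far from a clustering that admits many pivot-hittable cliques.

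\textbf{Lower bound.} For the matching $\Omega(n/\eps)$ lower bound I would invoke Yao's minimax principle against the following distribution: choose a uniformly random partition $\calP$ of $V$ into $k=\lceil 1/(2\eps)\rceil$ blocks of size $2\eps n$, and let $G_{\calP}$ be the disjoint union of the $k$ cliques induced by $\calP$. Since $\opti=0$ for such an input, any $(O(1),\eps)$-approximation must output a clustering of cost at most $\eps n^2$. A short cost accounting shows that a vertex $v$ placed in a cluster disagreeing with the majority cluster of its block contributes $\Omega(\eps n)$ to the cost, so an $\Omega(1)$-fraction of vertices must end up in the ``right'' cluster. The hard part is turning this into a query bound: by symmetry, conditional on the answers observed so far, the posterior distribution on the block of any vertex $v$ is essentially uniform over the $k$ blocks until the algorithm has queried an edge witnessing $v$'s block membership, and a standard birthday-style argument shows that $\Omega(1/\eps)$ queries are required per vertex to produce such a witness with constant probability. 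Summing over the $\Omega(n)$ vertices that must be correctly clustered yields the claimed $\Omega(n/\eps)$ total.
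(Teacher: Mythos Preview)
Your algorithm and the top-level decomposition $\cost\le 3\cdot\opti+|E^+(R)|$ match the paper's. The gap is in your bound on $|E^+(R)|$. The claim ``each high-degree endpoint is hit by some pivot with probability at least $1-(1-\Omega(\eps))^k$'' is not justified and is in fact false. Take a vertex $v$ with $\Gamma^+(v)=W$, $|W|=\eps n$, together with a large set $S$ (say $|S|=n/2$) of vertices each adjacent to all of $W$ but not to $v$. With probability $|S|/n\approx 1/2$ the first pivot lands in $S$, which absorbs all of $W$; from then on $v$ is isolated in the residual and is hit only if it is itself picked as a pivot (probability $O(k/n)$). So a high-degree vertex can remain in $R$ with constant probability. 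This does not hurt $|E^+(R)|$ (all of $v$'s edges have already left $R$), which is precisely the point: you should be bounding \emph{edge} survival, not vertex survival. The paper does this with a potential argument: letting $G_i$ be the residual after $i$ uniform samples, one shows (Lemma~\ref{lem:del_edges}) that the expected drop in $2|E(G_i)|+|\{v:\deg_{G_i}(v)>0\}|$ is at least a quadratic function of the average degree, whence $\expect[|E(G_r)|]<n^2/(2r)$ (Theorem~\ref{lem:indep}). Your degree-threshold split does not give this. A minor related issue: ``$O(k^2)$ to identify the pivots'' only holds if you sample a fixed $O(1/\eps)$-size set and take an independent set there; scanning $\pi$ \emph{until $k$ pivots appear} can traverse $\Theta(n)$ vertices.

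\textbf{Lower bound.} Your hard distribution does not force any queries: with $k=\lceil 1/(2\eps)\rceil$ cliques of size $2\eps n$ and $\opti=0$, the all-singletons clustering has cost $k\binom{2\eps n}{2}\approx \eps n^2$, which already meets the $(O(1),\eps)$ target using zero queries. Even after rescaling the block size, the ``birthday-style'' sketch glosses over the two real difficulties: the algorithm's output need not align with the hidden blocks (so ``wrong cluster'' must be defined via cost, not labels), and queries about one vertex reveal information about others. The paper avoids both by localising the randomness: a \emph{fixed}, known partition of a set $A$ into $k$ cliques, plus a small set $B$ of vertices each attached by a single random edge into $A$. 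Then the only uncertainty is the target cluster of each $b\in B$, these are independent, and one can cleanly lower-bound the expected number of $b$'s the algorithm misplaces as a function of its per-vertex query budget (Theorem~\ref{lb_neps}).
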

In other words, with a ``budget'' of $q$ queries we can obtain a $(4,O(n/q) )$-approximation.
In fact, the upper bound of Theorem~\ref{explicit3} can be derived from our next result.
It states that the same approximation
can be \emph{implicitly} constructed in constant time, regardless of the size of the graph.
\begin{theorem}\label{main_local}
Given $\eps \in (0, 1)$, a $(4,\eps)$-approximate clustering for $\code{MinDisagree}$ can be found locally in time $O(1/\eps^2)$,
  or in time $O(1/\eps)$ after preprocessing that uses $O(1/\eps^2)$ non-adaptive queries and time.
Moreover, finding an $(O(1),\eps)$-approximation with constant success probability requires $\Omega(1/\eps)$ adaptive queries.
\end{theorem}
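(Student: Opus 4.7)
My plan is to build on the \balls\ (Pivot) algorithm of Ailon \emph{et al.}~\cite{balls}, which yields an expected $3$-approximation for \code{MinDisagree} by iteratively picking a random pivot, forming a cluster with its positive neighbourhood, and recursing. The reformulation that unlocks locality is to fix a uniformly random permutation $\pi$ of $V(G)$ and declare a vertex $v$ to be a \emph{pivot} iff no earlier vertex in $\pi$ is simultaneously a pivot and a positive neighbour of $v$; every non-pivot then joins the cluster of the earliest pivot in $\pi$ that is one of its positive neighbours.

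To make this local I would truncate $\pi$ to its first $m=\Theta(1/\eps)$ positions. Concretely, sample an ordered tuple $S=(u_1,\ldots,u_m)$ uniformly at random, and determine the pivot status of each $u_i$ using only the $\binom{m}{2}=O(1/\eps^2)$ non-adaptive queries on pairs inside $S$; this computation depends only on the shared random seed and fits the preprocessing phase. On input $v$, sweep $u_1,\ldots,u_m$ in order, probe $(v,u_i)$ for each $u_i$ certified as a pivot, and output the label of the first positively connected such $u_i$; if none exists return a deterministic singleton label derived from $v$'s identifier. The per-query cost is $O(1/\eps)$ after preprocessing, or $O(1/\eps^2)$ queries without preprocessing (folding the two phases together). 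Sharing the random seed across queries guarantees that the per-vertex outputs glue into one globally consistent clustering, satisfying Definition~\ref{def:local}.

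The main obstacle, I expect, is establishing the $(4,\eps)$ approximation guarantee for the truncated variant. Coupling the truncated run to the full \balls\ run through the same permutation $\pi$, every vertex captured by one of the first $m$ pivots receives the same label in both runs, so the only extra cost is attributable to the set $L$ of vertices left uncaptured; each such $v$ contributes at most $d^+(v)$ extra disagreements (the positive edges to its true cluster in the full run are now broken). The probability $P(v\in L)$ decays like $e^{-\Omega(m\,d^+(v)/n)}$ once $m$ is a large enough constant multiple of $1/\eps$ (a constant factor absorbs the requirement that the covering sample point be itself a pivot), so $E[\sum_{v\in L}d^+(v)]\le \sum_v d^+(v)\,e^{-\Omega(m d^+(v)/n)}=O(n^2/m)$ yields expected extra cost $O(\eps n^2)$. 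Combined with the Ailon--Charikar--Newman bound $E[\cost(\text{full run})]\le 3\opti$, applying Markov to $\cost-3\opti$ (after tuning the constant in $m$) upgrades the expected bound $3\opti+O(\eps n^2)$ to a $(4,\eps)$-approximation with probability $\ge 2/3$.

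For the matching $\Omega(1/\eps)$ lower bound I would invoke Yao's minimax principle with a hard distribution consisting of uniformly random disjoint unions of $1/(c\eps)$ cliques of size $c\eps n$ for a sufficiently large constant $c$. On these instances $\opti=0$, so any $(O(1),\eps)$-approximation must incur total cost at most $\eps n^2$. An algorithm making $o(1/\eps)$ queries per vertex encounters no positive neighbour of $v$ with probability $1-o(1)$, so its output label for $v$ is independent of $v$'s hidden clique for a $1-o(1)$ fraction of vertices; a direct counting argument shows this forces an expected disagreement cost of $\Omega(c\,\eps n^2)$, which exceeds the allowed $\eps n^2$ budget once $c$ is sufficiently large, yielding the claimed lower bound.
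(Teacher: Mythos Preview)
Your algorithm is essentially the paper's (truncate \balls\ to a random sample of size $m=\Theta(1/\eps)$, find its maximal independent set as pivots, assign each $v$ to its earliest pivot neighbour, singletons otherwise), and the coupling to the full \balls\ run correctly shows that the partial clustering on $W=P\cup\Gamma^+(P)$ has expected cost at most $3\,\opti$. The gap is in how you control the additive $\eps n^2$ term.

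The assertion $P(v\in L)\le e^{-\Omega(m\,d^+(v)/n)}$ is false, and the parenthetical ``a constant factor absorbs the requirement that the covering sample point be itself a pivot'' is precisely where it fails. Consider a clique $K$ of size $n/3$, a set $B$ of $n/3$ vertices each joined to all of $K$ (and to nothing else), and $n/3$ isolated vertices. With probability $1/3$ the first sample $u_1$ lies in $B$; then $u_1$ becomes a pivot capturing $K$, and every subsequent sample landing in $K$ is a \emph{non}-pivot (its neighbour $u_1$ is already a pivot). Hence any $v\in B\setminus\{u_1\}$ can only be captured if $v$ itself is sampled, so $P(v\in L)\ge\tfrac{1}{3}(1-m/n)$, which does not decay with $m$ even though $d^+(v)=n/3$. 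Consequently $\expect\bigl[\sum_{v\in L}d^+(v)\bigr]=\Omega(n^2)$ here, not $O(n^2/m)$, so your route to the $\eps n^2$ bound breaks down. What \emph{does} remain small is the number of positive edges with \emph{both} endpoints in $L$ (zero in this example): the paper bounds exactly this quantity, observing that those are the only disagreements not already charged to the $3\,\opti$ partial cost. The argument is a potential-function recursion: after $i$ random pivots the expected ``actual size'' $\alpha_i n^2=2|E(G_i)|+|\widetilde V(G_i)|$ of the residual graph satisfies $\expect[\alpha_i]\le\expect[\alpha_{i-1}](1-\expect[\alpha_{i-1}])$, whence $\expect[\alpha_i]<1/(i{+}1)$ and the leftover edge count is $<n^2/(2m)$ (Theorem~\ref{lem:indep}).

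Two smaller points. First, ``apply Markov to $\cost-3\,\opti$'' does not work as stated because that variable may be negative; the paper splits the cost as $|A|+|B|$ with $\expect|A|\le 3\,\opti$ and $\expect|B|\le\eps n^2$ and applies Markov to each, then amplifies by running $O(1)$ independent pivot sets and selecting the best via sampled cost estimates (your proposal omits this amplification step). Second, your lower-bound distribution (random equal cliques) is different from the paper's; it can be made to work, but the sketch glosses over the shared randomness across vertices: with preprocessing the algorithm may learn the clique structure on a sample, so the argument must quantify that for all but $O(p(\eps))$ vertices the only information about $v$'s clique comes from the $o(1/\eps)$ per-vertex queries. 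The paper instead proves a global $\Omega(n/\eps)$ bound (via a distribution with a small set of ``secret'' vertices each attached to one random clique) and derives the per-vertex $\Omega(1/\eps)$ bound from it.
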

As a corollary we obtain a partially tolerant tester of clusterability. We stress that the tester is
efficient both in terms of query complexity and time complexity, unlike many results in property testing.
\begin{corollary}\label{cor:prop_test}
There is a non-adaptive, two-sided error tester which accepts graphs that are $\eps / 5$-close to clusterable and rejects
graphs that are $\eps$-far from clusterable. It runs in time $O(1/\eps^2)$.
\end{corollary}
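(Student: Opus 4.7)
The plan is to use the implicit clustering from Theorem \ref{main_local} as a surrogate whose cost can be estimated by sampling pairs of vertices. I would first invoke the local algorithm with accuracy parameter $\eps' := \eps/10$, implicitly defining a clustering $\cl$ satisfying $\cost(\cl) \le 4\,\opti + (\eps/10)\,n^2$. In the yes case $G$ is $(\eps/5)$-close to clusterable, hence $\opti \le (\eps/5)\,n^2$ and therefore $\cost(\cl) \le (9\eps/10)\,n^2$. In the no case $\opti \ge \eps\,n^2$, and trivially $\cost(\cl) \ge \opti \ge \eps\,n^2$ as well. Thus the tester reduces to distinguishing fractional clustering cost at most $9\eps/10$ from at least $\eps$, an additive gap of $\eps/10$.

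To detect that gap, I would draw $m = \Theta(1/\eps^2)$ pairs $\{u_i,v_i\}$ of vertices uniformly at random, fixed in advance. For each pair I would query the edge $\{u_i,v_i\}$ in $G$ and compute $\cl(u_i), \cl(v_i)$ using the local algorithm, marking the pair as a disagreement whenever the edge label contradicts ``$\cl(u_i)=\cl(v_i)$''. By Hoeffding's inequality, the empirical disagreement fraction is within $\eps/40$ of $\cost(\cl)/n^2$ with probability at least $2/3$ once the hidden constant in $m$ is large enough. The tester then accepts iff this empirical fraction is below the threshold $19\eps/20$, which lies strictly between the two regimes.

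Non-adaptivity is inherited directly from Theorem \ref{main_local}: its preprocessing queries are non-adaptive, it commits to a $O(1/\eps)$-size pivot structure before any $\cl$-evaluation, and the sampled pairs are drawn from the algorithm's own coin tosses, so the entire query set is determined by randomness alone. For the time budget, the $O(1/\eps^2)$ preprocessing is paid once and shared across all $m = O(1/\eps^2)$ samples; with the pivot adjacencies cached from preprocessing, each endpoint's cluster label is read off in amortised constant time, keeping the total cost at $O(1/\eps^2)$. The main technical obstacle I anticipate is the constant calibration: the factor-$4$ blow-up, the additive $\eps'\,n^2$ slack from the approximation, and the Hoeffding tolerance together consume almost all of the yes/no distance gap, so the choices $\eps'=\eps/10$ and threshold $19\eps/20$ are tight and any looser approximation ratio from Theorem \ref{main_local} would require a different balance or a bootstrapping step to recover the separation.
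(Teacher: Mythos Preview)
Your overall approach matches the paper's: run the local clustering algorithm, estimate the fractional cost of the resulting clustering by sampling random pairs, and threshold. The correctness analysis (yes case gives cost $\le 9\eps/10$, no case gives cost $\ge \eps$) is fine.

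The gap is in the running-time accounting. After the $O(1/\eps^2)$ preprocessing of Theorem~\ref{main_local}, evaluating $\cl(u)$ for a \emph{fresh} vertex $u$ still costs $\Theta(1/\eps)$ queries and time: one must check adjacency of $u$ with each of the $\Theta(1/\eps)$ pivots, and none of these adjacencies are known from preprocessing (which only looked at the induced subgraph on the sample). Your claim that ``with the pivot adjacencies cached from preprocessing, each endpoint's cluster label is read off in amortised constant time'' is therefore incorrect. With $m=\Theta(1/\eps^2)$ sampled pairs and $\Theta(1/\eps)$ work per endpoint, your tester runs in $\Theta(1/\eps^3)$, not $O(1/\eps^2)$.

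The paper avoids this by taking only $O(1/\eps)$ sampled pairs rather than $O(1/\eps^2)$. This suffices because the two regimes to be separated, fractional cost at most $9\eps/10$ versus at least $\eps$, differ by a \emph{constant multiplicative factor}; a multiplicative Chernoff bound (as in Corollary~\ref{cor:aggr}) then needs only $O(1/\eps)$ samples to distinguish them with constant probability. With $O(1/\eps)$ pairs and $O(1/\eps)$ work per cluster-label evaluation, the total is $O(1/\eps^2)$ as claimed. Your additive Hoeffding argument is not wrong, just wasteful; swapping it for the multiplicative bound closes the gap.
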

So far these results do not allow us to obtain clusterings that are arbitrarily close to the optimal one.
To overcome this issue, we also show (using different techniques) that a purely additive approximation can still be found
with $\poly(1/\eps)$ queries, but with an exponentially larger running time.

\begin{theorem}\label{main_dense}
Given $\eps \in (0,1)$, there is a local clustering algorithm that achieves an $(1,\eps)$ approximation to the cost of the optimal clustering.
Its local time complexity is $\poly(1/\eps)$ after preprocessing that uses $\poly(1/\eps)$ queries and $\expt{1/\eps}$ time.
\end{theorem}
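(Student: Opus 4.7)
The plan is to reduce to the bounded-$k$ setting via Corollary~\ref{cor:boundedk} and adapt the sampling PTAS of Bansal et al.~\cite{correlation_clustering} / Giotis--Guruswami~\cite{fixed_clusters} to the local model. Setting $k = 1 + \lceil 2/\eps \rceil = O(1/\eps)$, it suffices to find a $(1,\eps/2)$-approximation to the best $k$-clustering. In the preprocessing phase I sample uniformly at random a ``witness set'' $S\subseteq V$ of size $s=\poly(1/\eps)$ and a ``test set'' $T\subseteq V$ of size $t=\poly(1/\eps)$, and query all edges inside $S\cup T$ and between $S$ and $T$ ($\poly(1/\eps)$ non-adaptive queries in total). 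For each of the $k^s=\expt{1/\eps}$ possible labelings $\ell\colon S\to[k]$ I extend $\ell$ to $T$ by the greedy rule
\[
\widetilde{\ell}(v)=\arg\min_{j\in[k]}\bigl(|\Gamma^-(v)\cap \ell^{-1}(j)|+|\Gamma^+(v)\cap(S\setminus\ell^{-1}(j))|\bigr),\qquad v\in T,
\]
and estimate the global cost of the induced extension-to-$V$ by counting disagreements of $\ell\cup\widetilde{\ell}$ against the graph on $S\cup T$ and rescaling. I keep the labeling $\ell^\star$ of minimum estimated cost. A local query on input $v$ probes $v$'s $s$ edges to $S$ and returns $\ell^\star(v)$ if $v\in S$, otherwise the greedy assignment computed from $\ell^\star$. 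This yields preprocessing time $k^s\cdot\poly(s,t)=\expt{1/\eps}$ and local time and query complexity $\poly(1/\eps)$, as required.

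The analysis rests on two Hoeffding-type sampling lemmas applied to a fixed optimal $k$-clustering $\cl^{\mathrm{OPT}}$ with clusters $C_1,\ldots,C_k$. Writing $f(v,j)=|\Gamma^-(v)\cap C_j|+|\Gamma^+(v)\setminus C_j|$ for the true cost of placing $v$ in cluster $j$, global optimality gives $\cl^{\mathrm{OPT}}(v)=\arg\min_j f(v,j)$ for every $v$. First, a Chernoff bound with a union bound over the $nk$ pairs $(v,j)$ shows that with high probability over $S$ the sampled estimator $\widehat f(v,j,\ell)/s$ is within $\tau=\eps/\poly(k)$ of $f(v,j)/n$ simultaneously for all $v$ and $j$, where $\ell=\cl^{\mathrm{OPT}}\rst{S}$. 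This lets me conclude that the greedy extension $\cl^{\mathrm{can}}$ of the ``correct'' labeling satisfies $\cost(\cl^{\mathrm{can}})\le \opti+\eps n^2/4$. Second, for any \emph{fixed} labeling $\ell$ the restriction of its extension to the independent random set $T$ is an unbiased estimator of the global cost, and a Hoeffding bound plus a union bound over all $k^s$ candidates (requiring $t=\Omega(s\log k/\eps^2)=\poly(1/\eps)$) guarantees every estimate is within additive $\eps n^2/4$ of the truth. Hence $\ell^\star$ gives a $(1,\eps/2)$-approximation to the optimal $k$-clustering, which Corollary~\ref{cor:boundedk} promotes to a $(1,\eps)$-approximation for $\code{MinDisagree}$.

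The main obstacle is the first lemma, because $\cl^{\mathrm{can}}$ is measured against \emph{its own} clusters and not those of $\cl^{\mathrm{OPT}}$: the per-vertex inequality $f(v,\cl^{\mathrm{can}}(v))\le f(v,\cl^{\mathrm{OPT}}(v))+2\tau n$ that concentration immediately yields does not by itself upper bound $\cost(\cl^{\mathrm{can}})$. I plan to close this gap with the standard dense-PTAS argument: vertices whose ``cost margin'' $\min_{j\neq \cl^{\mathrm{OPT}}(v)}f(v,j)-f(v,\cl^{\mathrm{OPT}}(v))$ exceeds $\Omega(\tau n)$ are correctly classified with overwhelming probability in $s$, while the remaining vertices contribute only $O(\tau n)$ per vertex of slack to $\sum_v f(v,\cl^{\mathrm{can}}(v))$ even when they are misassigned. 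A Markov/Hoeffding step then bounds the Hamming distance $|\{v:\cl^{\mathrm{can}}(v)\neq\cl^{\mathrm{OPT}}(v)\}|$, and hence the symmetric differences $|\widehat C_j\triangle C_j|$, which is exactly what is needed to translate a bound on $\sum_v f(v,\cl^{\mathrm{can}}(v),\cl^{\mathrm{OPT}})$ into the desired bound on $\cost(\cl^{\mathrm{can}})=\tfrac12\sum_v f(v,\cl^{\mathrm{can}}(v),\cl^{\mathrm{can}})$. Everything else---driving the overall failure probability below $1/3$ by constant-factor adjustments and taking $s,t$ of the appropriate polynomial size---is routine.
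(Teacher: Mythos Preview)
Your high-level plan---sample a constant-size set $S$, enumerate all $k^{|S|}$ labelings, greedily extend each, estimate costs on a second sample $T$, and keep the best---is a genuine alternative to what the paper does. The paper's proof is entirely different: it invokes the Frieze--Kannan cut decomposition (Theorem~\ref{thm:fk}) to obtain an implicit weak-regular partition into $\ell=2^{\poly(k/\eps)}$ atoms, proves via a quadratic-program argument (Lemma~\ref{lem:opt_cl}) that some near-optimal $k$-clustering is a coarsening of those atoms, and then enumerates all $k^\ell$ colourings of the atoms. The local query for $v$ consists of determining $v$'s atom (a $\poly(1/\eps)$-time computation) and looking up the precomputed colour. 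The cut-decomposition route sidesteps exactly the self-referential difficulty you identify, at the price of importing the Frieze--Kannan machinery.

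Your sketched analysis, however, has a real gap. You assert that ``a Markov/Hoeffding step then bounds the Hamming distance $|\{v:\cl^{\mathrm{can}}(v)\neq\cl^{\mathrm{OPT}}(v)\}|$''; this is false in general, because there is no control on how many vertices have small margin. Take $k=2$, $|C_1|=|C_2|=n/2$, and arrange that every vertex has exactly the same number of positive (and of negative) neighbours in each $C_j$: then every margin is $0$, the greedy extension may disagree with $\cl^{\mathrm{OPT}}$ on \emph{every} vertex, and no concentration bound on $S$ changes this. (In this example $\cost(\cl^{\mathrm{can}})$ is nevertheless close to $\opti$, which is precisely the point: the Hamming distance is the wrong quantity.) This is not a cosmetic issue---it is the reason Giotis and Guruswami resort to a multi-round scheme for $\code{MinDisagree}[k]$ rather than one-shot sampling. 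A second, smaller issue: your union bound over the $nk$ pairs $(v,j)$ would force $s=\Omega(\log n)$; the usual repair (bound the \emph{expected fraction} of bad vertices and apply Markov) is easy but should be stated. To salvage your approach you must either replace the Hamming-distance step by a direct argument relating $\sum_v f(v,\cl^{\mathrm{can}}(v);\cl^{\mathrm{OPT}})$ to $\cost(\cl^{\mathrm{can}})$ without passing through $|\widehat C_j\triangle C_j|$, or adopt a phased scheme \`a la~\cite{fixed_clusters,lt_gb}; as written, the proof is incomplete.
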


For the explicit versions we obtain the following.

\begin{corollary}\label{ptas}
There is a $(1,\eps)$-approximate clustering algorithm for
$\code{MinDisagree}$ (and hence $\code{MaxAgree}$ too) that runs in time $n\cdot \poly(1/\eps) + 2^{\poly(1/\eps)}$.
In particular there is a $\PTAS$ for $\code{MaxAgree}$ with the same running time.
\end{corollary}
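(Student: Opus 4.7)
The plan is to obtain Corollary~\ref{ptas} as a direct consequence of Theorem~\ref{main_dense}. Let $\calA$ be the local $(1,\eps)$-approximation algorithm guaranteed there. First I would fix a single random seed $r$ and run the preprocessing step once, at cost $\poly(1/\eps)$ queries and $\expt{1/\eps}$ time. Then, for each of the $n$ vertices $v \in V(G)$, I invoke $\calA$ under the same seed $r$ and the same preprocessing data to compute the label $\calA^G(v)$ in time $\poly(1/\eps)$, and store these labels explicitly. By Definition~\ref{def:local}, because the same seed is used across all vertices, the labels assemble into a single globally consistent clustering, which is $(1,\eps)$-approximate with the success probability guaranteed by Theorem~\ref{main_dense}. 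The total running time is $n\cdot\poly(1/\eps) + \expt{1/\eps}$, exactly the bound in the statement.

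The remaining point is the $\PTAS$ claim for $\code{MaxAgree}$. Here I would invoke the standard observation that the trivial clusterings (every vertex in its own cluster, or all vertices in one cluster) already achieve at least $\binom{n}{2}/2$ agreements, so the optimum for $\code{MaxAgree}$ satisfies $\opti_{\code{MaxAgree}} \ge n(n-1)/4 \ge n^2/5$ for $n$ sufficiently large. Writing $M = \binom{n}{2}$ and $D^{\ast}$ for the optimum of $\code{MinDisagree}$, we have $M - D^{\ast} = \opti_{\code{MaxAgree}}$, while our clustering achieves at least $M - (D^{\ast} + \eps n^2) \ge (1 - 5\eps)\,\opti_{\code{MaxAgree}}$ agreements. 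Rescaling $\eps \leftarrow \eps/5$ in the call to Theorem~\ref{main_dense} yields a multiplicative $(1-\eps)$-approximation for $\code{MaxAgree}$ in the same running time.

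The main obstacle, such as it is, is purely bookkeeping: one must be careful that the preprocessing computations and the random seed are shared across the $n$ vertex queries, since otherwise the per-vertex labels could be drawn from distinct global clusterings with no joint quality guarantee. This sharing is exactly what Definition~\ref{def:local} mandates, so the only thing left to verify is the additive error accounting, which is immediate because the clustering returned by $\calA$ is, with constant probability, a single solution of cost at most $D^{\ast} + \eps n^2$. Standard repetition boosts the success probability to any desired constant at the cost of only a constant factor in the running time.
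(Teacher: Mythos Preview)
Your proposal is correct and matches the paper's approach: the explicit algorithm is obtained by running the preprocessing of Theorem~\ref{main_dense} once and then invoking the local procedure on each vertex with a shared random seed, and the $\PTAS$ claim follows from the standard fact that $\opti_{\code{MaxAgree}} = \Omega(n^2)$ (the paper cites~\cite[Theorem~3.1]{fixed_clusters} for this, while you supply the two-trivial-clusterings argument directly).
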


The ``in particular'' part follows from the observation that the optimum value for
$\code{MaxAgree}[2]$ is $\Omega(n^2)$ (see, e.g.,~\cite[Theorem 3.1]{fixed_clusters}).
The best $\PTAS$ in the literature~\cite{fixed_clusters} ran in time $n \cdot 2^{\Omega(\eps^{-3} \log^2 (1/\eps))}$.
In our result, the dominating term (depending on $n$) has an exponentially
smaller multiplicative constant (polynomial in $1/\eps$), and then we have an additive term exponential in $1/\eps$ (and independent of $n$).
As for lower bounds, observe that the $\Omega(n/\eps)$ bound from Theorem~\ref{explicit3} still applies,
   while the presence of a term of the form $2^{(1/\eps)^{\Omega(1)}}$ for very small $\eps$ seems hard to avoid
due to the $\NP$-completeness of the problems, as an optimal
solution can be found upon setting $\eps = 1/n^2$.

These results are established via the study of the corresponding problems with a prespecified number~$k$ of clusters; such
algorithms yield additive approximations to the general case upon setting $k = O(1/\eps)$\opt{full,soda}{ in view of Lemma~\ref{fromkton}}. For fixed $k$,
the bounds for our algorithms have the same form after replacing $\eps$ by $k/\eps$ (see Section~\ref{main_dense}). For example, we get a $\PTAS$ for
$\code{MaxAgree}[k]$ in time $n\cdot \poly(k/\eps) + 2^{\poly(k/\eps)}$.
\begin{corollary}
For any $0 < \eps_1 < \eps_2$, there is a non-adaptive, one-sided error tester which accepts graphs that are $\eps_1$-close to clusterable and rejects
graphs that are $\eps_2$-far from clusterable. It has query complexity $\poly(1/\eps)$ and runs in time $\expt{1/\eps}$, where $\eps =\eps_2-\eps_1$.
\end{corollary}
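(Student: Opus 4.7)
The plan is to combine the local $(1,\varepsilon')$-approximation of Theorem~\ref{main_dense} with a sampling-based estimator of its cost. Write $\varepsilon = \varepsilon_2 - \varepsilon_1$ and set $\varepsilon' = \varepsilon/4$. Invoking Theorem~\ref{main_dense} with parameter $\varepsilon'$ produces a local algorithm that implicitly defines a clustering $\cl$ satisfying $\cost(\cl) \le \opti + \varepsilon' n^2$ with probability at least $5/6$, after $\poly(1/\varepsilon')$ non-adaptive preprocessing queries and $\expt{1/\varepsilon'}$ preprocessing time; each subsequent cluster label takes $\poly(1/\varepsilon')$ time and queries.

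Next I would estimate the fractional distance between $G$ and the cluster graph induced by $\cl$. Draw $s = \Theta(1/\varepsilon'^2)$ uniformly random pairs $(x_i,y_i) \in V\times V$ in advance, query each edge label, compute $\cl(x_i)$ and $\cl(y_i)$ via the local algorithm, and let $\hat p$ be the empirical fraction of pairs on which the edge label disagrees with the ``same cluster'' indicator. Hoeffding's inequality gives $|\hat p - \cost(\cl)/n^2| \le \varepsilon'$ with probability at least $5/6$. The tester accepts iff $\hat p \le \varepsilon_1 + 2\varepsilon'$.

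Conditioning on both high-probability events (union bound gives overall success $\ge 2/3$), the correctness analysis is then routine: if $G$ is $\varepsilon_1$-close to clusterable then $\cost(\cl)/n^2 \le \varepsilon_1 + \varepsilon'$ and $\hat p \le \varepsilon_1 + 2\varepsilon'$, so we accept; if $G$ is $\varepsilon_2$-far then every clustering (in particular $\cl$) has cost at least $\varepsilon_2 n^2 = (\varepsilon_1 + 4\varepsilon')n^2$, so $\hat p \ge \varepsilon_1 + 3\varepsilon'$ and we reject. The query complexity is $\poly(1/\varepsilon') + s\cdot \poly(1/\varepsilon') = \poly(1/\varepsilon)$, and the time is dominated by the $\expt{1/\varepsilon}$ preprocessing. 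Non-adaptivity follows because the preprocessing in Theorem~\ref{main_dense} is non-adaptive and the sampled pairs are chosen independently of the answers to any edge queries.

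The one subtle point I expect to encounter is making the error truly one-sided in a tolerant setting, which a purely sampling-based cost estimator cannot achieve in the strictest sense. The natural resolution is to view the explicit clustering $\cl$ itself as the certificate: by shifting the acceptance rule to $\hat p + \varepsilon' \le \varepsilon_2$ and invoking the one-sided Hoeffding bound on the upper tail, every accepted input witnesses $\cost(\cl) \le \varepsilon_2 n^2$ with confidence arbitrarily close to $1$, so the $\varepsilon_2$-far side is essentially never accepted and all residual error concentrates on the ``close'' side. This is the step I expect to require the most careful bookkeeping of thresholds; the $2/3$ success probability can then be boosted in the usual way.
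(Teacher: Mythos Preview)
The paper gives no proof of this corollary; it is stated as an immediate consequence of Theorem~\ref{main_dense}. Your approach---run the local $(1,\eps')$-approximation to implicitly define a clustering, then estimate its fractional cost by sampling $O(1/\eps'^2)$ random pairs and threshold---is precisely the intended derivation, and your complexity bookkeeping and non-adaptivity argument are correct.

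Your worry about one-sidedness is the right worry, and your proposed fix does not resolve it. Shifting the threshold and invoking a one-tailed Hoeffding bound still leaves a nonzero probability of accepting an $\eps_2$-far graph (and, symmetrically, the randomized clustering step itself can fail), so the tester you describe is two-sided no matter where you put the cutoff. This is not a flaw in your argument so much as in the corollary's wording: the analogous Corollary~\ref{cor:prop_test}, derived the same way from the $(4,\eps)$ algorithm, is explicitly stated as two-sided, and nothing in Theorem~\ref{main_dense} changes that. Your proof is a complete and correct proof of the two-sided tolerant tester with the stated query and time bounds; do not expect to squeeze genuine one-sided error out of this method.
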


\spara{Techniques and roadmap.} 
Our first local algorithm (Theorem~\ref{main_local}) is inspired by the $\balls$ algorithm of Ailon \emph{et al.}~\cite{balls}, which
resembles the greedy procedure for finding maximal independent sets. The main idea to make a local version is to define the clusters ``in reverse''.
We find a small set $P$ of ``cluster centers'' or ``pivots'' by looking at a small induced subgraph, and then we show a simple rule to define
an extended clustering for the whole graph in terms of the adjacencies of each particular vertex with $P$. As it turns out,
such $P$ can be obtained by a procedure that finds a constant-sized ``almost-dominating'' set of  vertices that are within distance two of most other vertices in the graph, in such a way that we can combine the expected 3-approximation guarantee of~\cite{balls}
with an additive error term. The algorithm and its analysis are given in Section~\ref{sec:main}.

The second local algorithm (Theorem~\ref{main_dense}) borrows ideas from the PTAS for dense $\code{MaxCut}$ of Frieze and Kannan~\cite{lowrank_approx}  and uses low-rank approximations to the adjacency matrix of the graph.
(Interestingly, while such approximations have been known for a long time, their implications for correlation clustering have been overlooked.)
 Notably, implicit descriptions of these
approximations are locally computable in constant time (polynomial in the inverse of the approximation parameter). We show that in order to look for near-optimal clusterings, we can restrict the search
to clusterings that ``respect'' a sufficiently fine weakly regular partition
of the graph. Then we argue that this can be used to implicitly define a good approximate clustering: to cluster a given vertex, we first determine its
piece in a regular partition, and then we look at which cluster contains this piece in the best coarsening of the partition. The details are in Section~\ref{sec:additive}.

The lower bounds, proven in Section~\ref{sec:lb}, are applications of Yao's lemma~\cite{Yao77}. Broadly speaking, we give the candidate algorithm a
perfect clustering of most vertices of the graph into
$t = O(1/\eps)$ clusters of equal size, and for each of the remaining vertices a ``secret'' cluster is chosen at random among these $t$. The optimal
clustering of the resulting graph has fractional cost $\eps / c$ for some constant $c > 1$. We then ask the algorithm to find clusters for the
remaining vertices, and show that it must make $\Omega(n/\eps)$ adaptive queries if it is to output a clustering with fractional cost no larger than $\eps$.

Finally, in Section~\ref{sec:extensions} we discuss several extensions, including the case of non-binary similarity measure.
%

\section{{\large $(3,\eps)$}-approximations}\label{sec:main}
First we describe the \balls\ algorithm of Ailon et al.~\cite{balls}. It selects a random pivot, creates a cluster with it and its positive
neighborhood, removes the cluster, and iterates on the induced subgraph remaining. Essentially it finds a maximal independent set in the positive graph.
%
%
%
When the graph is clusterable, it makes no errors. In~\cite{balls}, the authors show that the expected cost of the clustering found is at most
three times the optimum.

Note that determining the positive neighborhood $\Gamma^+(v)$ of a pivot $v$ takes
$n - 1$ queries to the adjacency oracle. The algorithm's worst-case complexity is
$\Theta(n^2)$: consider the graph with no positive edges. In fact its time and query
complexity is $O(n
        c)$, where $c$ is the average number of clusters found. This  suggests
attempting to partition the data into a small number of
clusters to minimize query complexity.

\mycomment{ False! need to get edges outside given set of vertices.
    It is interesting to note that the set of \emph{positive} edges queried by \balls is the union of~$c$
    star graphs, each with the chosen pivot as center.  In particular the graph of positive edges
    queried is a forest with $c$ connected components; this
    implies that the number of edges queried that turn out to be positive is exactly $n - c$.
    Thus, given access to a more powerful neighbourhood oracle that, on input a vertex $v$ and an index $i$, told us the $i$th
    positive neighbor of $v$, $\balls$ would be much more efficient as it would run in $O(n)$.
}


We know from Lemma~\ref{fromkton} that any clustering can be $\eps$-approximated by a clustering with pieces of size $\Omega(\eps n)$.
So an idea would be to modify~$\balls$ so that most
clusters output are sufficiently large. Fortunately, $\balls$ tends to do just that on average, provided that the graph of positive edges is
sufficiently dense, because the expected size of the next cluster found is
precisely one plus the average degree of the remaining graph. Once the graph becomes too
        sparse, a low-cost clustering of the remaining vertices can be found without even looking at
        the edges, for example by putting each of them into their own cluster.\footnote{\scriptsize Another
            possibility that works is to cluster all remaining vertices into clusters of size $\eps
                n$, eliminating the
                need for singleton clusters.}

Another advantage of finding a small number of clusters is locality.
Let $P = P_1, \ldots, P_t$ denote the first $t$ elements of the sequence of pivots found by \balls.
Let us pick an arbitrary vertex $v$ contained in the neighbourhood of $\{P_1, \ldots, P_t\}$; all other
vertices can be safely ignored because as we shall see they usually will be incident to few edges (for suitably chosen $t$). Then the
pivot of $v$'s cluster is the \emph{first} element of $P$ that is a positive neighbour of $v$: therefore it can
be determined in time $O(t)$, assuming we are given the pivot sequence $P_1, \ldots, P_t$.

\begin{small}
\begin{algorithm}[t!]\label{alg:local}
\begin{algorithmic}[0]

\smallskip

\Function{LocalCluster}{$v,\eps$}
\State $P \gets \Call{FindGoodPivots}{\eps}$ \Comment{This is the preprocessing stage and can be taken outside}
\State \Return $\Call{FindCluster}{v, P}$
\EndFunction
\Statex
\smallskip

\Function{FindCluster}{$v, P$}
\If {$v \notin \Gamma^+(P)$} \Return $v$ \Comment{Cluster $v$ by itself}
\Else $\;\; i \gets \min \{j \mid v \in \Gamma^+(P_j)\}$; \Return $P_i$ \Comment{Find first positive neighbour in $P$}
\EndIf
\EndFunction
\Statex
\smallskip

\Function{FindGoodPivots}{$\eps$}
\For {$i \in [16]$}
    \State $P^i \gets \Call{FindPivots}{\eps/12}$;
    \State $\tilde{d}^i \gets$ estimate of the cost of $P^i$ with $O(1/\eps)$ local clustering calls (see Appendix~\ref{sec:main})
\EndFor
\State $j \gets \arg \min\{\tilde{d}^i \mid i \in [16]\}$
\State \Return $P^j$
\EndFunction
\Statex
\smallskip

\Function{FindPivots}{$\eps$}
\State Let $Q \subseteq V$ be a random sample without replacement of size $\min\big(n,\frac{1}{2\eps}\big)$
\State \Return $\Call{IndependentSet}{Q}$
\EndFunction
\Statex
\smallskip

\Function{IndependentSet}{$Q$}
\State $P \gets []$ (empty sequence)
\For {$v \in Q$}
    \If {\Call{FindCluster}{$v, P$} $= v$} append $v$ to $P$
    \EndIf \EndFor
\Return $P$
\EndFunction

\caption{\code{LocalCluster}}
\end{algorithmic}
\end{algorithm}
\end{small}

Therefore we propose the scheme whose pseudocode is given in Algorithm 1 (the analysis is presented in the next section).  Assuming we know a good sequence $P$, an implicit clustering is defined
deterministically in the way described above; two vertices $v$ and $v'$ belong to the same cluster if and
only if \Call{FindCluster}{$v$} = \Call{FindCluster}{$v'$}.
Similarly to \balls, we can find a set of pivots by finding an independent set of vertices in
the graph; to keep it small we restrict the search to an induced subgraph of size $O(1/\eps)$.
This is done by~\Call{FindPivots}{}, which can
be seen as a ``preprocessing stage'' for the local clustering algorithm \Call{FindCluster}{}.
In the next section the following key lemma will be shown.
\begin{lemma}\label{lem:eight}
Let $\epsilon \in (0,1)$ and $r, s > 1$. The expected cost of the clustering determined by
$\Call{FindPivots}{\eps}$ is at most $3 \cdot \opti + \epsilon n^2$, and the probability that it
exceeds $3r \cdot \opti + s \cdot \epsilon n^2$ is less than $\frac{1}{r} + \frac{1}{s}$.
\end{lemma}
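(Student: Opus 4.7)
The plan is to analyse \code{FindPivots} by coupling it with a full execution of $\balls$ on the entire vertex set $V$. View the random sample $Q$ as the first $q = \lceil 1/(2\eps)\rceil$ elements of a uniformly random permutation $\pi$ of $V$, and imagine running $\balls$ on $V$ in the order prescribed by $\pi$: the pivots it produces form a sequence $P^{\mathrm{full}} = (p_1, p_2, \dots)$ whose prefix consisting of those at positions $\le q$ is exactly the list $P$ returned by \code{IndependentSet}$(Q)$. Call these the ``early'' pivots and the rest ``late''. By the analysis of Ailon \emph{et al.}~\cite{balls}, the clustering $\calC_{\mathrm{qc}}$ induced by $P^{\mathrm{full}}$ satisfies $\EX{\cost(\calC_{\mathrm{qc}})} \le 3\cdot\opti$.

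Both $\calC_{\mathrm{qc}}$ and the local clustering $\calC_{\mathrm{loc}}$ produced by \code{FindCluster}$(\cdot, P)$ send each vertex to its first positive pivot neighbour in the respective pivot sequence. Since early pivots precede late ones in $P^{\mathrm{full}}$, the two clusterings agree on every vertex having at least one positive neighbour in $P$, and differ only on the ``uncovered'' set $B := V \setminus (P \cup \Gamma^+(P))$, which $\calC_{\mathrm{loc}}$ breaks into singletons while $\calC_{\mathrm{qc}}$ groups around late pivots. Hence every ordered pair with at most one endpoint in $B$ incurs identical cost under both clusterings, and since $\calC_{\mathrm{loc}}$ charges $\similarity(u,w)$ on each off-diagonal pair in $B \times B$ while $\calC_{\mathrm{qc}}$ pays a non-negative amount, we obtain
\[
  \cost(\calC_{\mathrm{loc}}) - \cost(\calC_{\mathrm{qc}}) \;\le\; 2\,|E^+(B)|,
\]
where $E^+(B)$ is the set of unordered positive edges with both endpoints in $B$.

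The key remaining step, and the one I expect to be the main obstacle, is to show $\EX{|E^+(B)|} \le \eps n^2/2$. Intuitively, $\{u,w\} \subseteq B$ for a positive edge $\{u,w\}$ forces $Q$ to miss $u$, $w$, and enough of their positive neighbourhoods that no positive neighbour of either is elected as an early pivot. I would formalise this with an exchangeability/coupling argument on $\pi$: extend $Q$ to $Q \cup \{u,w\}$, process the augmented set greedily in a uniformly random order, and observe that (after placing $u$ and $w$ last) $\{u,w\} \subseteq B(Q)$ is equivalent to greedy selection promoting one of $\{u,w\}$ to a new pivot and absorbing the other into its cluster; the symmetry of $\pi$ then yields a per-edge probability bound which, summed over the at most $\binom{n}{2}$ positive edges with the choice $q = 1/(2\eps)$, delivers the claimed $\eps n^2/2$.

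Combining the three pieces gives $\EX{\cost(\calC_{\mathrm{loc}})} \le 3\opti + \eps n^2$, which is the expectation part of the lemma. For the tail bound, use the decomposition $\cost(\calC_{\mathrm{loc}}) \le \cost(\calC_{\mathrm{qc}}) + 2|E^+(B)|$ and apply Markov's inequality separately to each summand: the first gives $\Pr[\cost(\calC_{\mathrm{qc}}) > 3r\opti] < 1/r$, the second gives $\Pr[2|E^+(B)| > s\eps n^2] < 1/s$, and a union bound completes the estimate.
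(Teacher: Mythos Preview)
Your overall architecture matches the paper's: couple with a full run of $\balls$ via a random permutation so that the pivots of \code{FindPivots} are a prefix of $\balls$'s pivots; decompose the cost of the local clustering into the (partial) cost inherited from $\balls$, bounded in expectation by $3\cdot\opti$, plus the positive edges surviving inside the uncovered set $B$; then apply Markov separately to the two pieces. That part is fine and is exactly what the paper does in its proof of the lemma (via Lemma~\ref{lem:prefix} and Theorem~\ref{lem:indep}).

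The gap is in your ``main obstacle'', the bound $\EX{|E^+(B)|}\le \eps n^2/2$. The exchangeability sketch you give cannot deliver a uniform per-edge probability bound of the kind you need to sum over $\binom{n}{2}$ edges. Consider the graph whose positive edges form a perfect matching: for any fixed edge $\{u,w\}$, the event $\{u,w\}\subseteq B$ is simply $\{u,w\}\cap Q=\emptyset$, which happens with probability $\approx (1-q/n)^2\to 1$ for fixed $q=1/(2\eps)$ and large $n$. So no per-edge bound of order $O(1/q)$ is available; the reason $\EX{|E^+(B)|}$ is still small here is that the \emph{total} number of positive edges is only $n/2$, not that each survives with small probability. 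Your ``place $u,w$ last and use symmetry of $\pi$'' does not convert into the needed inequality.

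The paper handles this step by an entirely different, aggregate potential argument (Theorem~\ref{lem:indep}): it tracks the normalized quantity $\alpha_i = (2|E(G_i)|+|\widetilde V(G_i)|)/n^2$ for the residual graph after $i$ random pivot removals, proves via a degree-averaging computation and Jensen that $\EX{\alpha_i}\le \EX{\alpha_{i-1}}(1-\EX{\alpha_{i-1}})$, and concludes $\EX{\alpha_r}<1/(r+1)$, whence $\EX{|E^+(B)|}<n^2/(2r)=\eps n^2$. This is the missing ingredient; once you plug it in, the rest of your argument goes through unchanged.
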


For example, setting $r = 4/3$, $s = 8$ we see that with probability $1/8$, the clustering
determined by $\Call{FindPivots}{\eps / 4}$ is a $(4, \eps)$-approximation to the optimal one.
Although this low bound on the success probability may be overly pessimistic, we can amplify it in
order to obtain better theoretical guarantees. To do this with confidence $2/3$ we try several samples $Q$
and estimate the cost of the associated local clusterings by sampling random edges.



\begin{lemma}\label{lem:cost}
Let $d$ denote the fractional cost of the optimal clustering. With probability at least~$5/6$, $\Call{FindGoodPivots}{\eps}$ returns a pivot set $P^i$ with
fractional cost at most $4d + \eps$. 
Its running time is $O(1/\epsilon^2)$.
\end{lemma}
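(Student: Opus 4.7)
The plan is to amplify the guarantees of Lemma~\ref{lem:eight} over the sixteen independent trials of $\Call{FindPivots}{\eps/12}$, and then use concentration to ensure that the trial with the smallest estimated cost is close to the best of the sixteen, which in turn is close to optimal.

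First, I would control the best of the sixteen trials. The expectation bound of Lemma~\ref{lem:eight} applied with $\eps' = \eps/12$ gives $\mathbb{E}[d^i] \le 3d + \eps/12$, so Markov's inequality yields
\[
  \Pr[d^i > 4d + \eps/2] \;\le\; \frac{3d + \eps/12}{4d + \eps/2} \;\le\; \tfrac{3}{4}
\]
uniformly in $d\ge 0$ (the worst case being $d\to\infty$). By independence of the trials, the probability that every one of the sixteen pivot sets overshoots the threshold $4d+\eps/2$ is at most $(3/4)^{16}<1/100$.

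Next, I would bound the accuracy of the estimators $\tilde d^i$. Each $d^i$ is exactly the expectation, over a uniformly random pair $(u,v)\in V\times V$, of the $\{0,1\}$-valued random variable $\indic[\Call{FindCluster}{u,P^i}=\Call{FindCluster}{v,P^i}]\oplus\indic[\{u,v\}\in E^+]$. Hoeffding's inequality therefore shows that $\poly(1/\eps)$ random pair samples (the precise estimator, designed to use only $O(1/\eps)$ local clustering calls per trial, is described in Appendix~\ref{sec:main}) drive the failure probability of a single estimator below $1/200$; a union bound over the sixteen iterations then yields simultaneous $\eps/4$-accuracy with probability at least $9/10$.

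On the intersection of both events, which has probability at least $1 - 1/100 - 1/10 > 5/6$, let $i^\star$ be a trial with $d^{i^\star}\le 4d+\eps/2$ and let $j=\arg\min_i \tilde d^i$ be the index returned by the algorithm. Then
\[
  d^j \;\le\; \tilde d^j + \tfrac{\eps}{4} \;\le\; \tilde d^{i^\star} + \tfrac{\eps}{4} \;\le\; d^{i^\star} + \tfrac{\eps}{2} \;\le\; 4d + \eps,
\]
as required. For the running time, each $\Call{FindPivots}{\eps/12}$ call samples a set $Q$ of size $O(1/\eps)$ and runs $\Call{IndependentSet}{Q}$ at a cost of $O(|Q|\cdot|P|)=O(1/\eps^2)$; the cost estimation uses $O(1/\eps)$ local clustering calls, each of cost $O(1/\eps)$, for another $O(1/\eps^2)$ per trial. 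The constant sixteen is absorbed, giving the claimed $O(1/\eps^2)$ overall. The most delicate point is fitting the two failure probabilities within a combined budget of $1/6$ while keeping the estimators accurate enough that the argmin step loses only an additional $\eps/2$ against the best of the sixteen trials; the slack in the Markov step above ($1/100$ versus the needed $1/6$) leaves ample room to absorb the union bound for the estimators.
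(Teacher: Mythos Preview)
Your argument is correct and follows the same amplify-then-estimate scheme as the paper: bound the per-trial failure probability via Lemma~\ref{lem:eight}, raise it to the sixteenth power, then use concentration on sampled disagreements to certify that the $\arg\min$ trial is good. The only cosmetic difference is that you apply plain Markov to the expectation bound of Lemma~\ref{lem:eight} (getting per-trial failure $\le 3/4$), while the paper invokes the split tail bound in the second half of that lemma (getting $7/8$); both suffice with room to spare.

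One small inconsistency worth flagging in your running-time step: an additive Hoeffding guarantee of $\eps/4$ requires $\Theta(1/\eps^2)$ pair samples, not $O(1/\eps)$, so your own estimator analysis would yield $O(1/\eps^3)$ total time rather than $O(1/\eps^2)$. The paper closes this gap by appealing to the \emph{multiplicative} Chernoff statement of Corollary~\ref{cor:aggr}, which needs only $O((1/\eps)\log(1/\delta))$ samples per clustering; your parenthetical deferral to the paper's estimator is fine, but the Hoeffding sentence as written does not by itself justify the $O(1/\eps)$ call count you use two lines later.
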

\opt{full}{
\begin{proof}
Denote by $d^i$ the fractional cost of the clustering defined by $P^i$. For any $\tau \in (0, 1)$, it holds that with probability $1-1/400$,
       $\Call{ClusteringError}{P^i, \tau}$ returns $\tilde{d^i}$ such that
$|\tilde{d^i}-d^i| \le \tau$. This is a standard application of the Chernoff bound: the error probability is at most
$ \exp(-2m \tau ^2) \le \exp(-6) \le 1/400. $
By the union bound, $|\tilde{d^i}-d^i| \le \eps / 3$ for all $i \in [16]$ except with probability $\le 1/25$.
By Lemma~\ref{lem:eight}, the probability that none of the clusterings defined $P^i$ is a $(4, \eps/3)$-approximation  is at most $(7/8)^{16}$.
Altogether we have that with probability at least $5/6$, both $d_i \le 4 d + \eps / 3$ and $\tilde{d}^i \le 4d + 2 \eps / 3$ hold for at least one $i$, whereas
for all $i$ with $d^i > 4d + \eps$ we have $\tilde{d}^i > 4d + 2 \eps / 3$. This means that the set
of pivots returned defines a clustering with fractional error at most $4d + \eps$.
\end{proof}
}

Finally, to obtain a purely local clustering algorithm with no preprocessing we need each vertex to
find the good set of pivots by itself, as per  \Call{LocalCluster}{$v,\eps$}.
            Note it is crucial here that all vertices have access to the same source of randomness so the same set of
            pivots is found on each separate call. In practical
            implementations this means introducing an additional parameter of short length, for
            instance a common random seed to be used.

From Lemmas~\ref{lem:eight} and~\ref{lem:cost} we can easily deduce two of our main results.

\begin{corollary}[Upper bound of Theorem~\ref{main_local}]$\Call{LocalCluster}{v, \epsilon}$ is a local clustering algorithm for $\code{MinDisagree}$ achieving a $(4,\eps)$ approximation to the optimal clustering with probability $2/3$.
The preprocessing runs in time $O(\min(n/\eps, 1/\eps^2))$, and the
clustering time per vertex is $O(1/\eps)$.
\end{corollary}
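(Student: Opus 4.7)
The plan is to derive the corollary directly from Lemma~\ref{lem:cost} together with a careful accounting of the cost of each subroutine of \Call{LocalCluster}. The approximation guarantee is essentially a restatement: Lemma~\ref{lem:cost} gives that, with probability at least $5/6 \ge 2/3$, \Call{FindGoodPivots}{\eps} returns a pivot set whose induced clustering has fractional cost at most $4d + \eps$, where $d = \opti/n^2$. Multiplying by $n^2$ turns this into at most $4\cdot\opti + \eps n^2$ disagreements, which is by definition a $(4,\eps)$-approximation. So the correctness and probability bound require no further work.

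The bulk of the proof is then time accounting. I would analyze \Call{FindGoodPivots}{\eps} layer by layer. Each of its $16$ iterations runs \Call{FindPivots}{\eps/12} followed by a cost estimate. The former draws a set $Q$ of size $q := \min(n, O(1/\eps))$ and feeds it to \Call{IndependentSet}, which tests each candidate in $Q$ against the growing pivot list via \Call{FindCluster}; since the list has size at most $q$ throughout, this stage uses $O(q^2)$ queries and time. The cost estimate uses $O(1/\eps)$ local-clustering calls, each costing $O(q)$, contributing $O(q/\eps)$. Summing over the $16$ iterations gives $O(q^2 + q/\eps)$ overall. When $n > 1/(2\eps)$ we have $q = O(1/\eps)$, so the bound is $O(1/\eps^2)$; when $n \le 1/(2\eps)$ we have $q = n \le 1/\eps$, so $q^2 \le n/\eps$ and the bound collapses to $O(n/\eps)$. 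Combining the two regimes yields the stated $O(\min(n/\eps, 1/\eps^2))$ preprocessing cost.

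For the per-vertex clustering time, once $P$ has been fixed by the preprocessing stage, \Call{FindCluster}{v, P} simply walks through $P$ in order and issues one adjacency query per pivot until it finds the first positive neighbour of $v$ (or clusters $v$ by itself if none is found). Since $|P| \le q = O(1/\eps)$, this costs $O(1/\eps)$ queries and $O(1/\eps)$ time per vertex, as claimed.

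I do not expect any real obstacle beyond assembling these pieces cleanly; the one invariant worth stressing is that all vertices must share the random seed used by \Call{FindGoodPivots}, so that every invocation produces the \emph{same} pivot sequence $P$. This is what makes the per-vertex calls to \Call{FindCluster} describe a single globally consistent implicit clustering, and it is built into the definition of the algorithm by design.
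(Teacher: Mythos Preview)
Your proposal is correct and follows the same route the paper indicates: the corollary is stated as an immediate consequence of Lemma~\ref{lem:cost} (which already encapsulates Lemma~\ref{lem:eight}), and your time accounting for \Call{FindGoodPivots}{} and \Call{FindCluster}{} simply spells out the $O(1/\eps^2)$ bound asserted there, together with the $O(n/\eps)$ refinement for small $n$.
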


\begin{corollary}[Upper bound of Theorem~\ref{explicit3}]\label{explicit4}
An explicit clustering attaining a $(4, \eps)$ approximation can be found with probability $2/3$
in time $O(n/\eps)$.
\end{corollary}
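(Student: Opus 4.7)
The plan is to derive the explicit statement directly from Lemma~\ref{lem:cost} together with the per-vertex cost of $\Call{FindCluster}{}$. The algorithm I have in mind is the obvious one: run $\Call{FindGoodPivots}{\eps}$ once to obtain a pivot sequence $P$, then invoke $\Call{FindCluster}{v, P}$ at every $v \in V(G)$ and record the resulting label as an explicit cluster assignment.

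Correctness is immediate. By Lemma~\ref{lem:cost}, with probability at least $5/6 \geq 2/3$ the returned $P$ induces (via $\Call{FindCluster}{\cdot,P}$) a clustering whose fractional cost is at most $4d+\eps$, i.e., a $(4,\eps)$-approximation to the optimum, and materialising the labels preserves this clustering since $\Call{FindCluster}{\cdot, P}$ is deterministic once $P$ is fixed. For the running time, I will observe that $\Call{FindCluster}{v, P}$ simply scans the edges from $v$ to the pivots in order until it finds a positive one, so its cost is $O(|P|)$. Since $P$ is chosen inside a sample $Q$ of size $\min(n, 1/(2\eps))$, we have $|P| \leq |Q| \leq \min(n, 1/(2\eps))$, so each call uses $O(\min(n, 1/\eps))$ time. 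Summed over the $n$ vertices, the explicit phase therefore contributes $O(n \cdot \min(n, 1/\eps))$ time, and combined with the $O(1/\eps^2)$ preprocessing bound of Lemma~\ref{lem:cost} the total cost is $O(n/\eps + 1/\eps^2)$.

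The only remaining point, and the one tiny place where care is needed, is showing that this simplifies to $O(n/\eps)$ in every regime. When $n \geq 1/\eps$ the preprocessing is absorbed since $1/\eps^2 \leq n/\eps$. When $n < 1/\eps$ the sample $Q$ coincides with $V$, so every step of the preprocessing and of the explicit sweep together touches at most $O(n^2) \leq O(n/\eps)$ edges and runs in $O(n/\eps)$ time, which is again within budget. In both regimes the total running time is $O(n/\eps)$, completing the argument. The only real obstacle is this small-$n$ accounting; everything else is a direct application of the lemmas already established.
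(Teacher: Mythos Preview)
Your proof is correct and follows essentially the same route as the paper: run \Call{FindGoodPivots}{$\eps$} once, then apply \Call{FindCluster}{$v,P$} to every vertex, invoking Lemma~\ref{lem:cost} for correctness and bounding the per-vertex work by $O(|P|)=O(\min(n,1/\eps))$. Your explicit treatment of the small-$n$ regime is in fact more careful than the paper's, which simply asserts that the preprocessing takes time $O(\min(n,\eps^{-2}))\le O(n)$ and adds this to the $O(n/\eps)$ sweep.
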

By a very similar argument we can produce an \emph{expected} $(3,\eps)$-approximate clustering in time $O(n/\eps)$.
           \opt{full}{\\
\begin{proof}
Call \Call{FindGoodPivots}{$\eps$} once to obtain a good  pivot sequence $P$ with probability
$2/3$ in time $O(\min(n,\eps^{-2})) \le O(n)$. Then run $\Call{FindCluster}{v, P}$ sequentially for each vertex $v$ in order to determine
its cluster label $l$, appending $v$ to the list of vertices in cluster labelled $l$. Finally output
the resulting clusters. The whole process runs in time $O(n) + O(n/\epsilon) = O(n/\epsilon)$.
\end{proof}

           }
A time-efficient property tester for clusterability (Corollary~\ref{cor:prop_test}) is also a simple consequence of the above.

\opt{full}{
\begin{corollary}
There is a tester 
that accepts with probability at least $5/6$ if the graph is $\eps/5$-close to clusterable,
and rejects with probability at least $5/6$ if the graph is $\eps$-clusterable.
It runs in time $O(1/\eps^2)$.
\end{corollary}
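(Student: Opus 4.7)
The plan is to reduce the tester directly to the local clustering subroutine already developed. I would run $\code{FindGoodPivots}(\eps_0)$ once with $\eps_0 := \eps/10$, obtaining a pivot sequence $P$ that defines a clustering, together with the internal estimate $\tilde d$ of its fractional cost; then declare the graph clusterable iff $\tilde d \le 9\eps/10$. Since no additional queries or computation are performed, the total running time is that of a single call to $\code{FindGoodPivots}$, namely $O(1/\eps_0^2)=O(1/\eps^2)$.

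For correctness I would appeal directly to the bookkeeping inside the proof of Lemma~\ref{lem:cost}. That proof establishes that, with probability at least $5/6$ (after a constant-factor amplification if needed), the following two events hold simultaneously: (a) the chosen pivot set $P$ satisfies $\tilde d \le 4d + 2\eps_0/3$, where $d$ is the fractional cost of the optimal clustering, because some sampled $P^i$ meets this bound and we take the argmin; and (b) $|\tilde d - d^P| \le \eps_0/3$, so in particular $\tilde d \ge d^P - \eps_0/3$. In the \emph{yes} regime ($d \le \eps/5$), event (a) gives
\[
\tilde d \;\le\; 4\eps/5 + \eps/15 \;=\; 13\eps/15 \;\le\; 9\eps/10,
\]
and the tester accepts. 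In the \emph{no} regime ($d > \eps$) any clustering, in particular the one defined by $P$, satisfies $d^P \ge d > \eps$; combined with event (b) this gives
\[
\tilde d \;>\; \eps - \eps/30 \;=\; 29\eps/30 \;>\; 9\eps/10,
\]
so the tester rejects.

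The only conceptual step is choosing $\eps_0$ small enough that the \emph{yes} upper bound $4\eps/5 + O(\eps_0)$ stays strictly below the \emph{no} lower bound $\eps - O(\eps_0)$; the factor-of-five separation between the two regimes makes this immediate, which is why $\eps_0 = \eps/10$ suffices. I expect the main obstacle, if any, to be purely procedural: the two $5/6$-style events inside $\code{FindGoodPivots}$ (the existence of a good sampled $P^i$ and the accuracy of the internal Chernoff estimator) must be amplified so that they hold \emph{jointly} with the required probability on each side. Since this inflates only constants, the $O(1/\eps^2)$ time bound is preserved.
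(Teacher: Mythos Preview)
Your proposal is correct and follows essentially the same approach as the paper. The paper's proof lets $P \gets \Call{FindGoodPivots}{\eps/5}$ and accepts if a separate call to $\Call{ClusteringError}{P,\eps/5}$ is at most $(1-1/15)\eps$; you instead call $\Call{FindGoodPivots}{\eps/10}$ and reuse its internal estimate $\tilde d$ with threshold $9\eps/10$, which is a cosmetic difference (and your worry about joint amplification is unfounded: the proof of Lemma~\ref{lem:cost} already establishes that events (a) and (b) hold simultaneously with probability $\ge 5/6$).
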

\begin{proof}
The tester lets $P \gets \Call{FindGoodPivots}{\eps / 5}$ and accepts if
$\Call{ClusteringError}{P,\eps / 5}$ is at most $(1-1/15)\eps$.
\end{proof}
}

\subsection{Analysis of the local algorithm}
We prove the approximation guarantees of the algorithm (Lemma~\ref{lem:eight}) by comparing it to the clustering found by \balls, which is known to achieve an expected 3-approximation~\cite{correlation_clustering}.
In this section we consider the input graph $G=G^+$ with negative edges removed, so $\Gamma(v) = \Gamma^+(v)$
for all $v \in G$.

    The following is a straightforward consequence of the multiplicative Chernoff bounds:
    \begin{lemma}
    Let $c > 1$, $\epsilon, \delta \in (0, 1)$, $m \in \naturals^+$ and $n = \frac{3\cdot c}{(c-1)^2}\frac{\log(m/\delta)}{\eps}$.
    Suppose $\{X^i_j \mid i \in [m], j \in [n]\}$ are random variables in $[0,1]$ such that for all $i \in [m]$,
            the variables $X^i_1, \ldots, X^i_n$ are independent with common mean $\mu_i$.
     Define $\tilde{\mu}_i = \frac{1}{n} \sum_{j=1}^n X_i^j$. Then with probability at least $1-\delta$, the following holds:
    \begin{itemize}
        \item If $\min_{i \in [m]}\mu_i \le \frac{1}{c} \cdot \eps$, then $\min_{i\in[m]} \tilde{\mu}_i \le \eps$.
        \item For all $i \in[m]$, if $\mu_i > c \cdot \eps$, then $\tilde{\mu}_i > \eps$.
    \end{itemize}

    \begin{corollary}\label{cor:aggr}
    Let $d, \delta \in (0, 1), c > 1$.
    Let $C_1, \ldots, C_m$ be $m$ clusterings such that at least one of them has fractional cost $\le d$. Then with probability $1-\delta$ we can select
    $i \in [m]$ such that $C_i$ has fractional cost at most $c \cdot d$
    using a total of $\frac{3c}{(c-1)^2} \frac{m\, \log(m/\delta)}{\eps}$ edge queries to $C_1,\ldots,C_m$.
    \end{corollary}

    \end{lemma}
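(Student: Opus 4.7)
The plan for the Lemma is a textbook application of multiplicative Chernoff bounds coupled with a union bound; I would handle the two conclusions separately. For the first conclusion, fix $i^{*} \in \arg\min_{i} \mu_{i}$ and assume $\mu_{i^{*}} \le \eps/c$. The event $\tilde\mu_{i^{*}} > \eps$ corresponds to an upper-tail deviation of $\tilde\mu_{i^{*}}$ from $\mu_{i^{*}}$ by a multiplicative factor of $1+\gamma$ where $\gamma = \eps/\mu_{i^{*}} - 1 \ge c-1$. The multiplicative Chernoff bound $\Pr[\tilde\mu_{i^{*}} \ge (1+\gamma)\mu_{i^{*}}] \le \exp(-\mu_{i^{*}} n \cdot \min(\gamma,\gamma^{2})/3)$ then yields a bound of the form $\exp(-\Theta((c-1)^{2}\eps\, n/c))$, and the choice $n = 3c\log(m/\delta)/((c-1)^{2}\eps)$ makes this at most $\delta/(2m)$.

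For the second conclusion, fix any $i$ with $\mu_{i} > c\eps$. The event $\tilde\mu_{i} \le \eps$ is a lower-tail deviation of $\tilde\mu_{i}$ from $\mu_{i}$ by a multiplicative factor of at least $1 - 1/c = (c-1)/c$. The lower-tail Chernoff bound $\Pr[\tilde\mu_{i} \le (1-\gamma)\mu_{i}] \le \exp(-\gamma^{2}\mu_{i} n/2)$ applied with $\gamma \ge (c-1)/c$ gives $\exp(-(c-1)^{2}\eps\, n/(2c))$, again bounded by $\delta/(2m)$ with the same $n$. Union-bounding over the at most $m+1$ potential bad events yields total failure probability at most $\delta$, as claimed.

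The Corollary is obtained by instantiating the Lemma on edge-indicator variables. Sample $n$ edges $e_{1},\ldots,e_{n}$ uniformly and independently (with replacement) from $V\times V$, and set $X^{i}_{j}$ to be the indicator that clustering $C_{i}$ disagrees with the input graph on edge $e_{j}$. For each fixed $i$, the $X^{i}_{j}$ are i.i.d.\ $\{0,1\}$-valued with common mean $\mu_{i}$ equal to the fractional cost of $C_{i}$, so the hypotheses of the Lemma hold. Apply the Lemma with $\eps$ chosen so that $\eps/c \ge d$ (say $\eps = c\cdot d$) and return the index $j = \arg\min_{i}\tilde\mu_{i}$: the first bullet ensures $\tilde\mu_{j} \le \eps$, and the contrapositive of the second bullet then forces $\mu_{j} \le c\eps$, which is within the claimed constant factor of $d$ (one may replace $c$ by $\sqrt{c}$ in the Lemma to match the precise constant $cd$ in the Corollary). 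Because the same $n$ random edges are shared across all $m$ clusterings, each sampled edge is evaluated against each $C_{i}$ in $O(1)$ time, yielding total query complexity $m\cdot n$, which matches the advertised bound.

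The only delicate step will be threading the Chernoff constants to obtain the advertised $\eps^{-1}$ (rather than $\eps^{-2}$) dependence: this requires using the \emph{multiplicative} upper-tail Chernoff in its appropriate regime, choosing between the $\gamma^{2}$ and $\gamma$ forms depending on whether $c \le 2$ or $c > 2$. This is routine and introduces no new ideas; the overall argument is just Chernoff plus union bound.
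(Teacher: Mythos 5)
The paper never actually writes this proof out: it introduces the lemma as ``a straightforward consequence of the multiplicative Chernoff bounds,'' and your argument is precisely the intended route --- an upper-tail Chernoff bound for the single minimizing index, a lower-tail Chernoff bound with a union bound over the at most $m$ indices having $\mu_i > c\eps$, and, for the corollary, one shared sample of uniformly random pairs evaluated against all $m$ clusterings with the empirical minimizer returned. The lower-tail half is clean: there $\gamma^2\mu_i \ge (\mu_i-\eps)^2/\mu_i \ge \eps(c-1)^2/c$, so the stated $n$ gives failure probability at most $(\delta/m)^{3/2}$ per index, and the corollary follows as you describe (your $\sqrt{c}$ reparametrization changes the advertised query constant by at most a factor $(1+1/\sqrt{c})^2 \le 4$, and the corollary's ``$\eps$'' is really the cost scale $d$, so this is within the spirit of the statement).

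The one step you should not wave off as routine is the upper tail. When $\gamma = \eps/\mu_{i^*}-1 > 1$ (i.e.\ $\mu_{i^*} < \eps/2$), the bound $\exp(-\mu_{i^*} n\,\min(\gamma,\gamma^2)/3)$ has exponent $n(\eps-\mu_{i^*})/3 \ge n\eps(c-1)/(3c)$, which scales like $(c-1)/c$ rather than the $(c-1)^2/c$ you claim; with the lemma's $n$ this gives only $(\delta/m)^{1/(c-1)}$, not $\delta/(2m)$, and the discrepancy is real for large $c$ (even the exact Chernoff exponent $n\eps(\ln c - 1 + 1/c)$ drops below $\log(m/\delta)$ once $c\gtrsim 3$), so it cannot be fixed merely by ``choosing between the $\gamma$ and $\gamma^2$ forms depending on whether $c\le 2$ or $c>2$.'' Two mitigating remarks: the first bullet concerns a single index, so it only needs failure probability about $\delta/2$ rather than $\delta/(2m)$; and the lemma's own constant $3c/(c-1)^2$ is evidently calibrated for the bounded-$c$ regime in which the paper applies it (small constant $c$), where your calculation does go through. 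So your proof is correct in the regime the paper needs, but as a verbatim proof of the statement for all $c>1$ the constant-threading in the upper-tail case is a genuine gap, not a routine detail.
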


\begin{proofof}{Lemma~\ref{lem:cost}}
\mycomment{
    Denote by $d^i$ the fractional cost of the clustering defined by $P^i$. For any $\tau \in (0, 1)$, it holds that with probability $1-1/400$,
           $\Call{ClusteringError}{P^i, \tau}$ returns $\tilde{d^i}$ such that
    $|\tilde{d^i}-d^i| \le \tau$. This is a standard application of the Chernoff bound: the error probability is at most
    $ \exp(-2m \tau ^2) \le \exp(-6) \le 1/400. $
    By the union bound, $|\tilde{d^i}-d^i| \le \eps / 3$ for all $i \in [16]$ except with probability $\le 1/25$.
    By Lemma~\ref{lem:eight}, the probability that none of the clusterings defined $P^i$ is a $(4, \eps/3)$-approximation  is at most $(7/8)^{16}$.
    By Corollary~\ref{cor:aggr},
    Altogether we have that with probability at least $5/6$, both $d_i \le 4 d + \eps / 3$ and $\tilde{d}^i \le 4d + 2 \eps / 3$ hold for at least one $i$, whereas
    for all $i$ with $d^i > 4d + \eps$ we have $\tilde{d}^i > 4d + 2 \eps / 3$. This means that the set
    of pivots returned defines a clustering with fractional error at most $4d + \eps$.
}
Use Lemma~\ref{lem:eight} and Corollary~\ref{cor:aggr}.
\end{proofof}

\begin{proofof}{Corollary~\ref{explicit4}}
Call \Call{FindGoodPivots}{$\eps$} once to obtain a good  pivot sequence $P$ with probability
$2/3$ in time $O(\min(n,\eps^{-2})) \le O(n)$. Then run $\Call{FindCluster}{v, P}$ sequentially for each vertex $v$ in order to determine
its cluster label $l$, appending $v$ to the list of vertices in cluster labelled $l$. Finally output
the resulting clusters. The whole process runs in time $O(n) + O(n/\epsilon) = O(n/\epsilon)$.
\end{proofof}

A \emph{partial clustering} of $V$ is a clustering of a subset $W$ of $V$.
Its \emph{partial cost} is the number of disagreements between edges that have at least one endpoint in $W$.

Now consider a clustering $\calC$ of $V$ into $C_1, \ldots, C_m \subseteq V$. For $S \subseteq [m]$, the
$S$th \emph{partial subclustering} of
$\calC$ is the partition of $V_S = \bigcup_{i \in S} C_i$ into $\{C_i\}_{i \in S}$.
Clearly the cost of a clustering upper bounds the partial cost of any of its partial
subclusterings.

\begin{lemma}\label{lem:prefix}
Let $P_1, \ldots, P_t$ denote the sequence of pivots found
by $\Call{FindPivots}{\eps}$. The expected  number of edge violations involving vertices within distance
$\le 2$ from $P_1, \ldots, P_{t}$ is at most $3 \cdot \opti$.
\end{lemma}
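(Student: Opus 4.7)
The plan is to couple FindPivots with a full execution of $\balls$ via a common random permutation. Let $\pi$ be uniform on $V$ and set $Q$ to its length-$|Q|$ prefix; then the pivots $P_1,\ldots,P_t$ returned by FindPivots are exactly the first $t$ pivots chosen by $\balls$ executed on $\pi$. Write $\cl^{QC}$ for the full $\balls$ clustering (with later clusters $C_{t+1},\ldots,C_m$ and pivots $P^*_{t+1},\ldots,P^*_m$) and $\cl^{P}$ for the partial clustering defined by $\{P_1,\ldots,P_t\}$ together with the rule in \textsc{FindCluster}.

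The central observation is that $\cl^{P}$ and $\cl^{QC}$ agree on the cluster label of every vertex in $W_1$---both send it to its first positive pivot neighbour in $\{P_1,\ldots,P_t\}$, or to itself if it is already a pivot---and differ only on $V\setminus W_1$, where $\cl^{P}$ uses singletons while $\cl^{QC}$ folds each such vertex into some later cluster $C_j$. A case analysis on the location of the endpoints of an edge $(u,v)$ in $\{W_1,\, W_2\setminus W_1,\, V\setminus W_2\}$ then shows that for any edge with at least one endpoint in $W_1$ the violation status is identical under $\cl^{P}$ and $\cl^{QC}$; the only possible discrepancies are on edges whose endpoints both lie in $V\setminus W_1$ and that lie in the same later cluster of $\cl^{QC}$, in which case a positive such edge becomes a new violation in $\cl^{P}$ and a negative such edge stops being a violation.

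Consequently the expected cost of $\cl^{P}$ restricted to edges involving $W_2$ equals $\expect[\cost(\cl^{QC})]$ plus a correction coming from these later clusters (the extra positive intra-$C_j$ violations, minus the negative intra-$C_j$ violations avoided, minus the purely-outside-$W_2$ positive edges counted by $\cl^{QC}$ but not relevant here). The $\balls$ term is already controlled by $\expect[\cost(\cl^{QC})]\le 3\,\opti$ from~\cite{balls}. The plan to bound the correction non-positively in expectation is a triangle-based charging in the spirit of Ailon--Charikar--Newman: every vertex $u\in W_2\setminus W_1$ is at positive-distance exactly two from some pivot $P_i$ via a positive neighbour $u^{+}\in W_1$, and $(u,P_i)$ must be negative (else $u\in W_1$), so $\{u,u^{+},P_i\}$ is a bad triangle that always contributes a violation to $\cl^{QC}$ and whose LP value can absorb the extra $\cl^{P}$-discrepancy through the LP-duality argument of~\cite{balls}.

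The main obstacle is making the triangle charging rigorous without double counting: a single discrepancy edge in $W_2\setminus W_1$ may be associated with several bad triangles through different choices of $u^{+}$ or $P_i$, so the accounting must be embedded directly inside the LP-rounding argument of~\cite{balls}---treating the vertices of $V\setminus W_1$ as suppressed pivots---in a way that preserves the factor of~three in the final bound $\expect[\cost_{W_2}(\cl^{P})]\le 3\,\opti$.
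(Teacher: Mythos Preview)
Your coupling via a common random permutation is exactly right, and your observation that $\cl^{P}$ and $\cl^{QC}$ agree on $W_1$ while differing only on $V\setminus W_1$ is correct. But from that point on you make life much harder than it needs to be.

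The quantity the lemma actually requires (and the way it is used in the proof of Lemma~\ref{lem:eight}) is the \emph{partial cost} of the subclustering $\{C_1,\ldots,C_t\}$ on $W_1=\bigcup_{i\le t}C_i$: the number of disagreements on edges with at least one endpoint in $W_1$. For such edges the two clusterings are indistinguishable. If both endpoints lie in $W_1$, the clusterings are literally identical there. If $u\in W_1$ and $v\notin W_1$, then in $\cl^{QC}$ the vertex $v$ sits in some $C_j$ with $j>t$, so $u$ and $v$ are separated; in $\cl^{P}$ the vertex $v$ is a singleton, so $u$ and $v$ are again separated. Hence the disagreement status of every edge touching $W_1$ is the same under $\cl^{P}$ and under $\cl^{QC}$, and the partial cost of $\cl^{P}$ on $W_1$ is \emph{exactly} the contribution of $W_1$-touching edges to $\cost(\cl^{QC})$, which is trivially at most $\cost(\cl^{QC})$. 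Taking expectations and invoking the $3$-approximation of~\cite{balls} finishes the proof.

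In other words, the ``correction term'' you identify lives entirely on edges with both endpoints in $V\setminus W_1$, and those edges are simply not counted in the partial cost on $W_1$. There is no need for any triangle charging or for re-entering the LP-duality argument of~\cite{balls}; the step you flag as the ``main obstacle'' is vacuous once you bound the right quantity. Your detour through $W_2$ arises from reading ``vertices within distance $\le 2$'' as delimiting the edge set directly, whereas the intended (and used) reading is that an edge touching $W_1$ automatically has both endpoints in $W_2$; the relevant bound is on the $W_1$ partial cost.
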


\begin{proof}
To simplify the analysis, in the proof of this lemma we modify~$\balls$ and ~\code{FindPivots} slightly
so that they run deterministically provided that a random permutation~$\pi$ of the vertex set $V$
is chosen in advance. Concretely, we consider a deterministic version of \balls, denoted
$\balls^\pi$, that uses
pivot set $\Call{IndependentSet}{Q}$, where $Q$ lists all vertices of $V$ in ascending order of $\pi$.
Similarly, deterministic $\code{FindCluster}^\pi$ takes for $Q$ the set of the \emph{first}
$O(1/\epsilon)$ elements in increasing order of $\pi$.
Clearly running $\code{FindCluster}^{\pi}$ on a random permutation~$\pi$ is the same as
running the original~\code{FindCluster}, and likewise for~\balls.

Observe that the set $P_1,
        \ldots, P_{t(\pi)}$ of pivots returned by $\code{FindCluster}^\pi$ is a prefix of
        the set of pivots returned by $\balls^\pi$. Therefore the
first~$t(\pi)$ clusters are the same as well, i.e., $P_1, \ldots, P_{t(\pi)}$ define a partial
subclustering of the one found by $\balls^\pi$. Hence the partial cost of the
subclustering determined by $\code{FindCluster}^\pi$ is in expectation at most $3 \cdot \opti$. This is
equivalent to the statement of the lemma.
\end{proof}

Next we show that \code{FindCluster} returns a small ``almost-dominating'' set of
vertices, in the sense quantified in the following result.
\begin{theorem}\label{lem:indep}
Let $G = (V, E)$ be a graph and $Q$ be an ordered sample of $r$ independent vertices uniformly
chosen with replacement from~$V$.
Let $P = \Call{IndependentSet}{Q}$. Then the expected number of edges of $G$ not incident with an
element of $P \cup \Gamma(P)$
is less than $\frac{n^2}{2r}$.
\end{theorem}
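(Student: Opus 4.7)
My plan is to analyse the sampling process one round at a time and derive a self-bounding recurrence for $\mu_r := \mathbb{E}[m_r]$, where $m_r$ denotes the number of edges of $G$ not incident with $P \cup \Gamma(P)$ after $r$ samples have been processed. Write $V_r = V \setminus (P \cup \Gamma(P))$, let $G_r$ be the subgraph of $G$ induced by $V_r$, and set $U_r = |V_r|$ (so $m_r = |E(G_r)|$). I aim to establish the one-step estimate $\mu_{r+1} \leq \mu_r - 2\mu_r^2/n^2$; together with monotonicity $\mu_{r+1} \leq \mu_r$, taking reciprocals then yields $1/\mu_{r+1} - 1/\mu_r \geq 2/n^2$, which iterates to $1/\mu_r > 2r/n^2$, i.e.\ $\mu_r < n^2/(2r)$ (the case $\mu_0 = 0$ being trivial).

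The heart of the argument is the one-round bound. Conditional on $G_r$, the next sample $q_{r+1}$ is uniform over $V$; whenever it lands in $V_r$ it passes the independence test, is appended to $P$, and thereby removes from $G_r$ every edge incident to its closed neighbourhood $N_{G_r}[q_{r+1}]$. Writing $\mathrm{cov}(q)$ for that removed-edge count, I would switch the order of summation to get
\[
\sum_{q \in V_r} \mathrm{cov}(q) \;=\; \sum_{e = (x,y) \in E(G_r)} \bigl|N_{G_r}[x] \cup N_{G_r}[y]\bigr|.
\]
Bounding each union by $\max(|N[x]|, |N[y]|) \geq (\deg(x) + \deg(y))/2 + 1$, using the identity $\sum_{e=(x,y)}(\deg x + \deg y) = \sum_v \deg(v)^2$, and finally applying Cauchy--Schwarz $\sum_v \deg(v)^2 \geq (2m_r)^2/U_r$, I reach $\sum_q \mathrm{cov}(q) \geq 2m_r^2/U_r + m_r$. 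Dividing by $n$ and exploiting $U_r \leq n$ gives $\mathbb{E}[m_r - m_{r+1} \mid G_r] \geq 2m_r^2/n^2$; taking the outer expectation and invoking Jensen's inequality ($\mathbb{E}[m_r^2] \geq \mu_r^2$) produces the target recurrence $\mu_{r+1} \leq \mu_r - 2\mu_r^2/n^2$.

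The main obstacle I anticipate is obtaining a per-round decrease that is \emph{quadratic} in $m_r$. The naive estimate $\mathbb{E}[m_r - m_{r+1} \mid G_r] \geq 2m_r/n$, which counts only the $\deg_{G_r}(q_{r+1})$ edges directly incident to the new pivot, yields the weaker geometric decay $\mu_r \leq (n^2/2)(1-2/n)^r$; this matches $n^2/(2r)$ only once $r = \Omega(n \log r)$, so for small $r$ it is insufficient. The quadratic improvement arises from crediting each candidate pivot with covering \emph{all} edges of $G_r$ incident to its closed neighbourhood, not merely its own incident edges, and it is precisely this strengthening --- combined with the degree-sequence Cauchy--Schwarz bound --- that licenses the reciprocal trick and delivers the uniform bound $n^2/(2r)$ across all regimes of $r$.
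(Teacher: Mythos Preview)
Your proposal is correct and follows essentially the same route as the paper. Both arguments analyse the process one round at a time, use the double-counting identity $\sum_{e=\{x,y\}}(\deg x+\deg y)=\sum_v \deg(v)^2$ together with Cauchy--Schwarz to show that the expected one-step drop in the residual edge count is at least $\binom{\tilde d+1}{2}=2m_r^2/n^2+m_r/n$, and then solve the resulting self-bounding recurrence $x_{r+1}\le x_r(1-x_r)$ (your ``reciprocal trick'' is exactly the paper's induction that $x_r<1/(r+1)$). The only cosmetic difference is that the paper tracks the augmented potential $S(G_r)=2|E(G_r)|+|\widetilde V(G_r)|$, whereas you work directly with $|E(G_r)|$; your version is arguably slightly cleaner since the vertex term is not needed for the bound.
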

Observe that an existential result for an almost-dominating set $P$ is easy to establish by picking pivots in order of decreasing
degree in the residual graph. However, doing so would invalidate the approximation guarantees of
$\balls$ we are relying on.
We defer the proof of Theorem~\ref{lem:indep}. Assuming the result, we are ready to prove Lemma~\ref{lem:eight}.

\smallskip

\begin{proofof}{Lemma~\ref{lem:eight}}
Lemma~\ref{lem:prefix} says that the set of pivots found define a partial clustering with expected
cost bounded by $3\cdot \opti$.
           Let $Q$ be the random sample used by $\Call{FindPivots}{\eps}$.
           Theorem~\ref{lem:indep} is stated for sampling with replacement, but this implies the
           same result for sampling without replacement, so its conclusion still holds.
           Combining the two results and setting $r=1/(2\eps)$, we see that the set of
disagreements in the clustering produced can be written as the union of two disjoint random sets $A, B
\in V \times V$ with $\expect[A] \le 3 \cdot \opti$ and $\expect[B] \le \epsilon n^2$. Thus the total
cost is $|A \cup B| = |A| + |B|$.
By linearity of expectation, the expected cost is  $\expect[|A| + |B|] \le 3 \cdot \opti + \epsilon
n^2,$ and by applying Markov's inequality to the non-negative variables $|A|$ and $|B|$ separately we conclude that
$$\Pr[(|A| > r \expect[|A|]) \text{ or } (|B| > s \expect[|B|]) ]
< \frac{1}{r} + \frac{1}{s} .$$
\end{proofof}
The rest of this section is
devoted to prove Theorem~\ref{lem:indep}.
For any non-empty graph~$G$ and pivot $v \in V(G)$, let $N_v(G)$ denote the subgraph of $G$
resulting from removing all edges incident to $\{v\} \cup \Gamma(v)$ (keeping all vertices).
Define a random sequence $G_0, G_1, \ldots$ of graphs by $G_0 = G$ and $G_{i+1} =
N_{v_{i+1}}(G_i)$, where $v_1, v_2, \ldots$ are chosen independently at random from $V(G_0)$ (note
        that sometimes $G_{i+1} = G_i$).

\begin{lemma}\label{lem:del_edges}
Let $G_i$ have average degree $\tilde{d}$. When going from $G_i$ to $G_{i+1}$, the number of edges
decreases in expectation by at least $\binom{\tilde{d}+1}{2}$, and the number of degree-0 vertices
increases in expectation by at least $\tilde{d} + 1$.
\end{lemma}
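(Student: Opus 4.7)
My plan is to establish the two claims separately, using an edge-level analysis for the first and a direct vertex-level count for the second. Throughout let $m = |E(G_i)|$ and $N = |V(G_i)|$, so that $\tilde{d} = 2m/N$, and write $d(\cdot)$ and $N[\cdot]$ for degrees and closed neighborhoods in $G_i$.

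For the edge bound, the starting observation is that an edge $e = \{a,b\}$ of $G_i$ disappears in $G_{i+1}$ exactly when $v_{i+1} \in N[a]\cup N[b]$, since these are precisely the vertices whose closed neighborhoods intersect $e$. Hence
\[
\Pr[e \text{ removed}] \;=\; \frac{|N[a]\cup N[b]|}{N} \;\geq\; \frac{\max(d(a),d(b))+1}{N} \;\geq\; \frac{1}{N}\!\left(\tfrac{d(a)+d(b)}{2}+1\right).
\]
Summing over $e \in E(G_i)$ and using the identity $\sum_{\{a,b\}\in E}(d(a)+d(b)) = \sum_v d(v)^2$, this yields
\[
\mathbb{E}[\text{edges removed}] \;\geq\; \frac{1}{N}\!\left(\tfrac{1}{2}\sum_v d(v)^2 + m\right) \;\geq\; \frac{2m^2}{N^2} + \frac{m}{N} \;=\; \binom{\tilde{d}+1}{2},
\]
where the penultimate step is Cauchy--Schwarz in the form $\sum_v d(v)^2 \geq (2m)^2/N$.

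For the isolated-vertex bound, I would argue vertex by vertex: whenever the chosen pivot $v$ has $d(v)\geq 1$, every $u \in N[v]$ loses all its incident edges and hence becomes isolated in $G_{i+1}$, producing at least $d(v)+1$ new isolated vertices. Averaging uniformly over $v \in V(G_i)$ therefore gives
\[
\mathbb{E}[\text{new isolated}] \;\geq\; \frac{1}{N}\sum_{v\in V(G_i)}(d(v)+1) \;=\; \frac{2m+N}{N} \;=\; \tilde{d}+1.
\]

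The only delicate issue is to read $N$ as the number of currently active (positive-degree) vertices of $G_i$---the sampling domain for the iteration as used in the proof of Theorem~\ref{lem:indep}---so that the simple count $\sum_v (d(v)+1)$ matches $2m+N$ exactly. With this convention both bounds follow cleanly; the only nontrivial step is the Cauchy--Schwarz estimate in part one, which is what converts a first-moment degree bound into the quadratic $\binom{\tilde{d}+1}{2}$ decrease.
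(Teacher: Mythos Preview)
Your edge-count argument is essentially identical to the paper's: both observe that an edge $\{u,v\}$ is deleted when the pivot lands in $\Gamma(u)\cup\Gamma(v)$, lower-bound this probability by $\frac{1}{n}\bigl(1 + \tfrac{d_u+d_v}{2}\bigr)$, sum using $\sum_{\{u,v\}\in E}(d_u+d_v) = \sum_u d_u^2$, and finish with Cauchy--Schwarz (the paper phrases it as Jensen).

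The isolated-vertex argument has a genuine gap, and your attempted fix in the last paragraph is incorrect. You propose reading $N$ as $|\widetilde{V}(G_i)|$, the number of positive-degree vertices, claiming this is ``the sampling domain for the iteration as used in the proof of Theorem~\ref{lem:indep}.'' It is not: the paper explicitly defines the pivots $v_1, v_2, \ldots$ as independent uniform draws from $V(G_0)$ (all $n$ vertices, kept throughout), and this is precisely what makes the edge-deletion probability $|\Gamma(u)\cup\Gamma(v)|/n$ in your first part correct. If you changed the sampling domain to $\widetilde{V}(G_i)$, your edge argument would break too.

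With the true sampling, your sum $\frac{1}{N}\sum_{v}(d(v)+1)$ overcounts: when the pivot $v$ is already isolated, it contributes $0$ new isolated vertices, not $d(v)+1 = 1$. The honest lower bound is
\[
\frac{1}{n}\sum_{v:\,d(v)\ge 1}\bigl(d(v)+1\bigr) \;=\; \tilde{d} + \frac{|\widetilde{V}(G_i)|}{n},
\]
which can be strictly smaller than $\tilde{d}+1$ (e.g.\ $n=4$ with a single edge gives expected value $1$, not $3/2$). So the lemma's second clause, read literally, is slightly too strong; fortunately the weaker bound above is exactly what is needed downstream, since $n\alpha_i = \tilde{d} + |\widetilde{V}(G_i)|/n$ and one checks directly that $\tilde{d}(\tilde{d}+1) + \tilde{d} + |\widetilde{V}(G_i)|/n \ge (n\alpha_i)^2$, yielding $\expect[\alpha_{i+1}\mid\cdots]\le\alpha_i(1-\alpha_i)$ in Lemma~\ref{lem:alpha_i}. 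Your instinct that something delicate was happening here was right; the resolution is not to change the sampling domain but to carry the $|\widetilde{V}(G_i)|/n$ term through.
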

\opt{soda}{
\begin{proof}
Let $V = V(G_0)$, $E = E^+(G_i)$ and let $d_u = |\Gamma(u)|$ denote the positive degree of $u \in V$
on $G_i$.
The claim on the number of degree-0 vertices is easy, so 
we prove the claim on the number of edges. Consider an edge $\{u, v\} \in E$. It is deleted if the chosen pivot is an element of $\Gamma(u)
\cup \Gamma(v)$ (this contains $u$ and $v$); let $X_{uv}$ be the 0-1 random variable associated with this event. It occurs with probability
$$ \expect[X_{uv}] = \frac{|\Gamma(u) \cup \Gamma(v)|}{n} \ge \frac{1 + \max(d_u, d_v)}{n} \ge
\frac{1}{n} + \frac{d_u + d_v}{2n}. $$
Let $D = \sum_{u < v \mid {\{u,v\}} \in E} X_{uv}$ be the number of edges deleted.
By linearity of expectation, its average is
\begin{eqnarray*}
 \expect[D] &=& \sum_{\substack{u < v\\ \{u, v\} \in E}} \expect[X_{uv}]
    \ge \frac{1}{2}  \sum_{\substack{u, v\\ \{u, v\} \in E}}\left(\frac{1}{n} + \frac{d_u+d_v}{2n}\right)\\
   &=&  \frac{\tilde{d}}{2} + \frac{1}{4n} \sum_{\substack{u, v\\ \{u, v\} \in E}} (d_u + d_v).
\end{eqnarray*}
Now we compute
\begin{eqnarray*}
       \frac{1}{4n} \sum_{\substack{u, v\\ \{u, v\} \in E}} (d_u + d_v)
    &=& \frac{1}{2n} \sum_{\substack{u, v\\ \{u, v\} \in E}} d_u
    = \frac{1}{2n} \sum_{u} d_u^2 \\
    &=& \frac{1}{2} \expect_u [d_u^2]
    \ge \frac{1}{2} \left(\expect_u [d_u]\right)^2
  \ge \frac{1}{2} \tilde{d}^2 ,
\end{eqnarray*}
hence
$ \expect[D] \ge \frac{\tilde{d}}{2} + \frac{\tilde{d}^2}{2} = \binom{\tilde{d}+1}{2}. $
\end{proof}
}

Now let $\widetilde{V}(G) = \{v \in V(G) \mid \deg(v) > 0 \}$ and define the ``actual size'' $S(G)$ of a
graph by $ S(G) = 2 \cdot |E(G)| + |\widetilde{V}(G)| = \sum_{v \in \widetilde{V}(G)} (1 + \deg(v)). $
Let $n = |V(G_0)|$ and define $\alpha_i \in [0,1]$ by \opt{full}{
$$\alpha_i = \frac{s(G_i)}{n^2}.$$ 
}\opt{soda}{$\alpha_i = \frac{s(G_i)}{n^2}.$
}

\begin{lemma}\label{lem:alpha_i}
For all $i \ge 1$ the following inequalities hold:
\begin{align}
\expect[ \alpha_i \mid v_1,\ldots,v_{i-1}] &\le \alpha_{i-1} (1 -
        \alpha_{i-1})\label{eq:cond_alpha_i},\\
\expect[ \alpha_i ] &\le \expect[\alpha_{i-1}] (1 - \expect[\alpha_{i-1}]) \label{eq:alpha_i},\\
\expect[ \alpha_i ] &< \frac{1}{i+1} \label{eq:alpha_i2}.
\end{align}
\end{lemma}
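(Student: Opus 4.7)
The plan is to reduce all three inequalities to a single potential decrement estimate derived from Lemma~\ref{lem:del_edges}, then pass from the conditional to unconditional statements via concavity, and finally iterate to get the explicit $1/(i+1)$ bound.

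First, for \eqref{eq:cond_alpha_i}, I would condition on $v_1,\dots,v_{i-1}$, which determines $G_{i-1}$ entirely. Writing $\tilde d$ for the average (over the fixed vertex set $V = V(G_0)$ of size $n$) of the degrees in $G_{i-1}$, Lemma~\ref{lem:del_edges} gives
\[
\expect\bigl[s(G_{i-1}) - s(G_i) \,\big|\, G_{i-1}\bigr] \;\ge\; 2\binom{\tilde d+1}{2} + (\tilde d+1) \;=\; (\tilde d+1)^2.
\]
The key supporting observation is that $s(G_{i-1}) = 2|E(G_{i-1})| + |\widetilde V(G_{i-1})| \le n\tilde d + n = n(\tilde d+1)$, so $\tilde d + 1 \ge n\alpha_{i-1}$. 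Dividing the displayed bound by $n^2$ yields
\[
\alpha_{i-1} - \expect[\alpha_i \mid G_{i-1}] \;\ge\; \alpha_{i-1}^2,
\]
which is exactly \eqref{eq:cond_alpha_i}.

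For \eqref{eq:alpha_i}, I would take expectations in \eqref{eq:cond_alpha_i} and invoke Jensen's inequality: the function $x \mapsto x(1-x)$ is concave on $[0,1]$, so
\[
\expect[\alpha_i] \;\le\; \expect[\alpha_{i-1}(1-\alpha_{i-1})] \;\le\; \expect[\alpha_{i-1}]\bigl(1 - \expect[\alpha_{i-1}]\bigr).
\]

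Finally, for \eqref{eq:alpha_i2}, I would induct on $i$. The base case $i=1$ is immediate: since $\alpha_0 \in [0,1]$, \eqref{eq:alpha_i} gives $\expect[\alpha_1] \le \alpha_0(1-\alpha_0) \le 1/4 < 1/2$. For the step, assume $\expect[\alpha_{i-1}] < 1/i$ with $i \ge 2$, so that $\expect[\alpha_{i-1}] < 1/2$ lies in the region where $x(1-x)$ is increasing; then
\[
\expect[\alpha_i] \;<\; \tfrac{1}{i}\bigl(1 - \tfrac{1}{i}\bigr) \;=\; \tfrac{i-1}{i^2} \;<\; \tfrac{1}{i+1},
\]
the last inequality being equivalent to $i^2 - 1 < i^2$. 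The only mildly delicate point is ensuring the monotonicity direction of $x(1-x)$ applies (i.e., that the inductive hypothesis keeps us in $[0,1/2]$), which is automatic from $i \ge 2$; this is the one spot where one must be slightly careful, but it is not a real obstacle.
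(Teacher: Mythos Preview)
Your proof is correct and follows the same route as the paper: derive \eqref{eq:cond_alpha_i} from Lemma~\ref{lem:del_edges} via the potential $s(G)=2|E|+|\widetilde V|$, then apply Jensen to the concave map $x\mapsto x(1-x)$ for \eqref{eq:alpha_i}, and finish \eqref{eq:alpha_i2} by induction using monotonicity of $x(1-x)$ on $[0,1/2]$. Your treatment of \eqref{eq:cond_alpha_i} is actually more explicit than the paper's (which just calls it ``a restatement'' of Lemma~\ref{lem:del_edges}), and your inductive step avoids a sign slip present in the paper's write-up of the comparison $g(1/i)$ versus $1/(i+1)$.
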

\opt{soda}{
\begin{proof}
Inequality~\eqref{eq:cond_alpha_i} is a restatement of Lemma~\ref{lem:del_edges}.
Inequality~\eqref{eq:alpha_i} follows from Jensen's inequality: since
     $ \expect[\alpha_i] = \expect\big[ \expect[\alpha_i \mid v_1,\ldots,v_{i-1}] \big] \le \expect[
     \alpha_{i-1} (1 - \alpha_{i-1})] $
and the function $g$ mapping $x$ to $g(x) = x (1 - x)$ is concave, we have
$ \expect[\alpha_i] \le \expect[g(\alpha_{i-1})] \le g(\expect[\alpha_{i-1}]) = \expect[\alpha_{i-1}] (1 -
        \expect[\alpha_{i-1}]). $

Finally we prove $\expect[\alpha_i] < 1/(i+1)$ for all $i\ge1$. We know that
$\expect[\alpha_1] \le \max_{x \in [0,1]} g(x) = g\left(\frac{1}{2}\right) = \frac{1}{4},$
so the claim follows by induction on $i$ as $g$ is increasing
on $[0, 1/2]$ and
$ g\left(\frac{1}{i}\right) = \frac{1}{i} - \frac{1}{i^2} \ge \frac{1}{i} - \frac{1}{i (i + 1)} = \frac{1}{i +
    1} . $
\end{proof}
}

\begin{remark}
With a finer analysis (Appendix~\ref{sec:sharper}), Equation~\eqref{eq:alpha_i2} can be strengthened to
$ \expect[\alpha_i] \le \frac{1}{\frac{1}{\widehat{a}} + i + \Omega(\ln{(\widehat{a}\cdot i)}) } , $
where $\widehat{a} = \min(\alpha_0, 1 - \alpha_0)$. (This does not affect the asymptotics.)
\end{remark}

\begin{proofof}{Theorem~\ref{lem:indep}}
Note that after sampling $r$ vertices $v_1, \ldots, v_r$ with replacement from $G_0$, the subgraph
of $G_0$ resulting from removing all edges incident to $\bigcup_{i=1}^r \{v_i\} \cup \Gamma(v_i)$ is
    distributed according to $G_r$. Using Equation~\eqref{eq:alpha_i2}, we bound $\expect[|E(G_r)|] \le
    \frac{n^2}{2} \expect[\alpha_r] < \frac{n^2}{2r}.$
\end{proofof}

\section{Fully additive approximations}\label{sec:additive}
Here we study $(1,\eps)$-approximations. By Lemma~\ref{fromkton} and its corollary, it is enough to
consider $k$-clusterings for $k = O(1/\eps)$.

\subsection{The regularity lemma.} One of the cornerstone results in graph theory is the regularity lemma of \szem, which has found a myriad
applications in combinatorics, number theory and theoretical computer science~\cite{reg_appl}. It asserts that every graph $G$ can be approximated by a small collection of random bipartite graphs;
in fact from $G$ we can construct a small ``reduced'' weighted graph $\widetilde{G}$ of
constant size which inherits many properties of $G$.
If we select an approximation parameter $\epsilon$, it gives
us an equitable partition of the vertex set $V$ of $G$ into a constant number $m = m(\eps)$ of classes
$C_1,\ldots, C_m$ such that the following holds:
  for any two large enough $A, B\subseteq V$, the number of edges between $A$ and
  $B$ can be estimated by thinking of $G$ as a random graph where the probability of an edge
  between $v \in C_i$ and $w \in C_j$ is $d(C_i, C_j) = |E(C_i, C_j)|/(|C_i||C_j|)$. (The precise notion of approximation we need will be
          explained later.)
Moreover, it is possible to choose a minimum partition size $m_{\min}$; often $m_{\min}$ is chosen such that
``internal'' edges among
vertices from the same class are few enough to be ignored.

The original result was existential, but algorithms to construct a regular
partition  are known~\cite{regalg_alon,weak_regularity,regalg_hyper} which run
in time polynomial in $|V|$ (for constant $\eps$). This naturally suggests trying to use the partition classes in order to obtain an approximation of the optimal clustering. Nevertheless, to the best of our knowledge, the only prior attempts to exploit the regularity lemma for
clustering are the papers of Speroto and Pelillo~\cite{regularity_clustering0} and
S{\'a}rk{\"o}zy, Song, \szem\  and Trivedi~\cite{regularity_clustering}. They use the
constructive versions of the lemma to find the reduced graph~$\widetilde{G}$, and apply standard clustering algorithms to $\widetilde{G}$.
Since the partition size~$m$ required by the lemma is an $1/\eps^5$-level
iterated exponential of $m_{min}$ (and this kind of growth rate is necessary~\cite{gowers_lb}), they propose
heuristics to avoid this tower-exponential behaviour. However, the running time of their algorithms is at least~$n^\omega$, where
$\omega \in [2, 2.373)$ is the exponent for matrix multiplication. Moreover, no theoretical bounds are provided on the quality of the clustering
found by working with the reduced graph, even if no heuristics were applied.

To address these issues, we opt to use a weaker variant of the regularity lemma due to Frieze and
Kannan~\cite{approx_matrices,approx_matrices_conf}. It  has better quantitative parameters and
gives an implicit description of the partition, which opens the door for local clustering.

\subsection{Cut decompositions of matrices}
The idea of Frieze and Kannan is to take any real matrix $A$ with row set $\calR$ and column set $\calS$ with bounded
entries and approximate it by a low-rank matrix of a certain form. (The case of interest for us is when $A$ is the adjacency matrix of a graph.)
Let $m = |\calR|$ and
$n = |\calS|$. Given $R \subseteq \calR, S \subseteq \calS$ and a real density $d$, the \emph{cut matrix} $D =
CUT(R, S, d)$
is the rank-1 matrix defined by $D_{ij} = d$ if $(i, j) \in R \times S$, and $0$ otherwise. We identify $R$ and
$S$ with with column indicator vectors of length $m$ and $n$, respectively. Then we can write
$CUT(R, S, d) = d \cdot R S^t$. We define
    $ A(R, S) = R^t A S = \sum_{i \in R} \sum_{j \in S} A_{ij}. $
A \emph{cut decomposition} of a matrix $A$ \emph{with relative error} $\eps$ is a set of cut matrices $\{ D_i \}_{i \in [s]}$, where
$D_i = CUT(R_i, S_i, d_i)$, such that for all $R \subseteq \calR, S \subseteq \calS$,
    $$ \vspace{-1 mm} \Bigl| A(R, S) - \sum_{i \in [s]} D_i(R, S) \Bigr| \le \epsilon  \cdot \norm{R}_2 \norm{S}_2
    \cdot \sqrt{m n}. $$
    Such a decomposition is said to have \emph{width} $s$ and \emph{coefficient length} $\sqrt{d_1^2+\ldots d_s^2}$.
\begin{theorem}[Cut decompositions~{\cite[Th. 1]{approx_matrices}}]\label{thm:fk}
Suppose $A$ is a~$\calR \times \calS$ matrix with entries in $[-1,1]$ and $\eps,\delta\in (0, 1)$
are reals.
Then in time
$\poly(1/\eps)/\delta$
we can, with probability $1-\delta$, find
implicitly a
    cut decomposition of width $\poly(1/\eps)$, relative error at most $\eps$ and coefficient length at most 6.
\end{theorem}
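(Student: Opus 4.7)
The plan is to build the decomposition greedily. Start with $A_0 = A$ and, at each step $t$, peel off a single cut matrix $D_t = CUT(R_t, S_t, d_t)$ that captures a large portion of the cut mass remaining in the residual $A_{t-1}$, set $A_t = A_{t-1} - D_t$, and iterate. The key algebraic identity is that, for any $R, S$ and the optimal density $d = A_{t-1}(R,S) / (|R|\cdot |S|)$,
\[
\norm{A_t}_F^2 \;=\; \norm{A_{t-1}}_F^2 \;-\; \frac{A_{t-1}(R,S)^2}{|R|\cdot |S|}.
\]
Thus, so long as the residual fails the cut-norm condition---equivalently, some $(R,S)$ satisfies $|A_{t-1}(R,S)| \ge \eps \sqrt{|R|\,|S|\,mn}$---a single iteration reduces the squared Frobenius norm by at least $\eps^2\,mn$. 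Since $\norm{A_0}_F^2 \le mn$, this forces termination within $O(1/\eps^2)$ steps, which controls the width. The coefficient-length bound comes from the telescoping estimate $\sum_t d_t^2\, |R_t|\,|S_t| \le \norm{A_0}_F^2 \le mn$; combined with a built-in lower bound on $|R_t|\,|S_t|$ (the greedy step will not choose tiny sets, as those cannot produce a large cut value in the first place), careful constant-tracking yields the stated bound of $6$.

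The hard part is producing an approximately optimal cut $(R,S)$ in each iteration in time $\poly(1/\eps)$, since exact maximization of $|A_{t-1}(R,S)|$ over indicator vectors is $\NP$-hard. The Frieze--Kannan trick is randomized sampling: draw uniform row and column samples $T_R, T_S$ of size $O(1/\eps^2)$. For each of the $2^{O(|T_R|)}$ candidate bipartitions of $T_R$, compute the induced column-sum vector and recover a candidate $S$ by coordinate-wise thresholding; then plug this $S$ back in, threshold the row-sum vector $A\,S$, and recover a candidate $R$. A Hoeffding-type argument shows that the (unknown) true optimal $(R,S)$ is well approximated by one of these enumerated candidates within additive error $\eps\sqrt{|R|\,|S|\,mn}$, with high probability. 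Repeating the construction $O(\log(1/\delta))$ times amplifies the confidence to $1-\delta$, which accounts for the $1/\delta$ factor in the running time. The technically hardest piece is the uniform concentration: controlling the sampling error \emph{simultaneously} over the combinatorial family of threshold-induced candidates is what forces the sample size $|T| = \Theta(1/\eps^2)$ and dictates the constants that ultimately appear in the width and coefficient-length bounds.

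Finally, the decomposition is implicit in the following sense. Storing $R_t$ and $S_t$ explicitly would cost $\Theta(m+n)$ bits per step, but each is fully determined by the small sample $T_R \cup T_S$ together with the scalar thresholds used at step $t$. Retaining only these objects costs $\poly(1/\eps)$ bits in total, and any single coordinate of $R_t$ or $S_t$ can be produced on demand with a constant number of queries to $A$ followed by a threshold test. This local-access property is precisely what is needed for the downstream local clustering application in Section~\ref{sec:additive}, which touches individual rows and columns of the implicit decomposition but never materializes them in full.
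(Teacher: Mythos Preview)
Your high-level greedy scheme---peel off cut matrices, track the squared Frobenius-norm drop, terminate in $O(1/\eps^2)$ rounds---matches the paper's sketch in Appendix~\ref{app:fk}. The gap is in the per-round subroutine that is supposed to find a good cut in $\poly(1/\eps)$ time.

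You propose sampling $T_R$ of size $O(1/\eps^2)$ and enumerating all $2^{O(|T_R|)}$ bipartitions of $T_R$. That enumeration alone costs $2^{\Theta(1/\eps^2)}$ time, which is exponential in $1/\eps$ and contradicts the stated running time $\poly(1/\eps)/\delta$. What you have outlined is essentially the Arora--Karger--Karpinski / Goldreich--Goldwasser--Ron style of dense approximation scheme; it is a valid route to a cut decomposition, but it proves a strictly weaker time bound ($2^{\poly(1/\eps)}$) than the one being cited. A second, related problem is the implicit access: in your scheme $R_t$ is obtained by thresholding the full row-sum vector $A_{t-1} S$, so deciding $x\in R_t$ on demand requires summing $A_{t-1}(x,y)$ over all $y\in S$, potentially $\Theta(n)$ queries, not $\poly(1/\eps)$.

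The paper (following Frieze and Kannan) avoids enumeration altogether. At round $i$ it draws a \emph{single} random column $v_i$ and a random threshold $\nu_i\in[-1,1]$, and sets $R_i=\{x:\ W_i(x,v_i)\cdot\nu_i\ge\nu_i^2\}$; then $S_i=\{y:\ W_i(U_i\cap R_i,y)\cdot\nu_i\ge 0\}$ for a fresh sample $U_i$ of size $\poly(1/\eps)$. One shows that with probability at least $\poly(\eps)$ this single random shot already achieves a $\poly(\eps)$-fractional Frobenius-norm decrease; repeating $\poly(1/\eps)/\delta$ times yields the stated confidence and is where the $1/\delta$ (not $\log(1/\delta)$) factor actually comes from. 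This definition also makes the implicit representation genuine: membership in $R_i$ costs one residual-matrix evaluation, and membership in $S_i$ costs $|U_i|=\poly(1/\eps)$ evaluations, each of which unwinds recursively in $\poly(1/\eps)$ time.
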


\spara{Regarding the meaning of ``implicit''.} By \emph{implicitly finding} a cut decomposition
$B = \sum_{i\in[s]} CUT(R_i, S_i, d_i)$  of a matrix $A$ in time $t$, we mean that
for any given pair $(x,y)\in\calC\times\calR$,
    we can compute all of the following in time $t$ by making queries to $A$:
\squishlist
    \item the rational values $d_1, \ldots, d_s$;
    \item the indicator functions $\indic[x \in R_i]$ and $\indic[y \in S_i]$, for all $i \in [s]$;
    \item the value of the entry $B_{x,y}$.
\squishend
In Appendix~\ref{app:fk} we give a sketch of how Frieze and Kannan achieve this. We also observe that their algorithm is non-adaptive.

\spara{Specifying a maximum cut-set size.} Suppose we start with arbitrary equitable partitions of the row set and column set of~$A$ into $t$ pieces. We can then find cut decompositions of the $t^2$ submatrices induced by the partition, and combine them into a cut decomposition of the original matrix that satisfies $|S_i| \le m/t, |T_i| \le
n/ t$; the reader may verify that this preserves the bound on relative error. This process can only increase the query and time
complexities by an $O(t^2)$ factor (c.f.~\cite[Section 5.1]{approx_matrices}).

\spara{Application to adjacency matrices.} Suppose $A$ is the adjacency matrix of an unweighted graph $G=(V,E)$, and identify the sets $\calR$
and $\calS$ with~$V$.  Then
$|\calR| = |\calS| = |V| = n$. 
Let $E(R, S)$ denote the number of edges between $R \subseteq V$ and
$S \subseteq V$. Then $A(R, S) = E(R, S) + E(R \cap S, R \cap S)$ and the conclusion of Theorem~\ref{thm:fk} can be written as
\begin{center} for all $R \subseteq \calR, S\subseteq \calS$, \end{center}
\begin{align}\label{eq:weak_reg}
E(R, S) + & E(R \cap S,  R \cap S)  =  \nonumber \\
& \Big(\sum_{i \in [s]}  d_i \cdot \lvert R\cap R_i \rvert \cdot |S \cap S_i|\Big) \pm \epsilon n \sqrt{|R| |S|}.
\end{align}
The last term can be bounded by $\eps n^2$.
While the standard regularity lemma supplies a much stronger notion of approximation,
      this bound suffices for certain applications.

\spara{Weakly regular partitions.}
A \emph{weakly $\eps$-pseudo-regular partition} of $V$ is
a partition of $V$ into classes $V^1, \ldots, V^\ell$ such that for all
disjoint $R, S\subseteq V$,
$\bigl| E(R, S) - \sum_{i,j \in [\ell]}  d(V^i, V^j) \cdot \lvert R\cap V^i \rvert \cdot |S \cap V^j| \bigr| \le \epsilon n^2,$
where $d(V^i, V^j)= \frac{E(V^i, V^j)}{|V^i| |V^j|}$. If, in addition, the partition is equitable, it is said to be \emph{weakly $\eps$-regular}.

Given a cut decomposition of a graph with relative error $\eps$ and size $s$, we get an $2\eps$-weakly pseudo-regular partition of size
$\ell \le 2^{2s} $
by taking the classes of the Venn diagram of $R_1, S_1, \ldots, R_s, S_s$ with universe $V$.
So we can enforce the condition that the sets $R_1, S_1,
   R_2, S_2,\ldots$ partition the vertex set of $G$, at an exponential increase in the number of such sets.
Furthermore, any weakly $\eps$-pseudo-regular partition of size $\ell$ may be refined to obtain a weakly $3 \eps$-regular partition of slightly larger
size; see~\cite[Section 5.1]{approx_matrices}.

Often the weak regularity lemma is stated thusly in terms of {weakly regular} partitions, but the formulation of Theorem~\ref{thm:fk} is
stronger in that it allows us to estimate the number of edges between two
sets in time $\poly(1/\eps)$ provided that we know the sizes of their intersections with all $R_i,
S_i$, even though the weakly regular partition has size
$\ell = 2^{\poly(1/\eps)}$.

\subsection{Near-optimal clusterings and the local algorithm}
Intuitively, two vertices $v, w$ in the same class of a regular partition have roughly the same number of
connections with vertices outside. Hence for any given clustering of the remaining nodes,
            the cost of placing $v$ into any one of the clusters is roughly the same as the cost of
            placing~$w$ there, suggesting they belong together in an optimal clustering (if we can afford to ignore the cost due to internal edges in the regular
                    partition). In other words, a regular partition can be ``coarsened'' into a good clustering;
the best one can be found by
considering all possible combinations of assigning partition classes to clusters and estimating the cost of each resulting clustering.

            We can make this argument rigorous by using bounds derived from the weak regularity lemma to approximate the cost of
            the optimal clustering by a certain quadratic program. If we ignore the terms with a single variable squared, the optimum of this program
            does not change by much as long as the partition is sufficiently fine. Then one can argue that the modified program attains its optimum for an
            assignment of variables which can be interpreted as a clustering that puts everything from the same regular partition into the same cluster.

\begin{lemma}\label{lem:opt_cl}
Let $A$ be the adjacency matrix of a graph $G =(V,E)$ and $k \in \naturals$.
Let $\{CUT(R_i, S_i, d_i)\}_{i \in [s]}$ be a cut decomposition of $A$ with relative error
$\frac{\eps}{2 k}$ and with $|S_i|, |T_i| \le \frac{\eps n}{8k}$ for all $i\in[s]$.
Denote by $C^*$ the optimal $k$-clustering, and by $C$ the optimal
$k$-clustering into classes that belong to the $\sigma$-algebra generated by $\bigcup_{i\in[s]} \{R_i, S_i\}$ over $V$.
Then $ \cost(C) - \epsilon n^2 \le \cost(C^*) \le \cost(C) .$
\end{lemma}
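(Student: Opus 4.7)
The direction $\cost(C^*) \le \cost(C)$ is immediate, since $\sigma$-algebra-respecting $k$-clusterings form a subfamily of all $k$-clusterings; I will focus on the other inequality, $\cost(C) \le \cost(C^*) + \eps n^2$.

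My plan is to parametrise every $k$-clustering $\calC = (C_1,\ldots,C_k)$ by its profile $x=(x_i^l)$ with $x_i^l = |C_i \cap V^l|$, where $\{V^l\}_{l\in[\ell]}$ are the atoms of the $\sigma$-algebra. Following the discussion after Theorem~\ref{thm:fk}, I will assume (at the cost of an exponential blow-up in $\ell$) that $\{R_j,S_j\}_{j\in[s]}$ already partitions $V$, so that each atom equals some $R_j$ or $S_j$ and in particular $|V^l|\le\eps n/(8k)$. Setting $\alpha_j^l = \indic[V^l\subseteq R_j]$ and $\beta_j^l=\indic[V^l\subseteq S_j]$, we get $|C_i\cap R_j| = \sum_l \alpha_j^l x_i^l$ and similarly for $S_j$. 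Substituting the cut-decomposition bound $2 E(C_i) = \sum_j d_j|C_i\cap R_j||C_i\cap S_j| \pm \eps_0 n |C_i|$ (with $\eps_0 = \eps/(2k)$) into $\cost(\calC) = \sum_i \binom{|C_i|}{2} - 2\sum_i E(C_i) + |E|$ and summing over $i$ should give the identity
\[
\cost(\calC) = P(x) \pm \eps_0 n^2, \qquad P(x) = \sum_i\sum_{l,l'} N_{l,l'}\,x_i^l x_i^{l'} + \mathrm{const},
\]
with $N_{l,l'} = \tfrac12 - M_{l,l'}$ and $M_{l,l'} = \sum_j d_j\,\alpha_j^l\beta_j^{l'}$.

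Next I will split $P = P_{\mathrm{off}} + P_{\mathrm{diag}}$ by separating the $l=l'$ monomials. For each fixed atom index $l$, with every other group $(x_{i'}^{l'})_{l'\ne l}$ held fixed, $P_{\mathrm{off}}$ depends linearly on the whole vector $(x_i^l)_{i\in[k]}$, since no $(x_i^l)^2$ terms survive. Hence $P_{\mathrm{off}}$ is multilinear in the groups over the product of scaled simplices $\prod_l \Delta_l$ with $\Delta_l = \{(x_i^l)_i : x_i^l\ge 0,\ \sum_i x_i^l = |V^l|\}$, and a standard coordinate-by-coordinate ``slide to a better vertex'' argument will show that $\min P_{\mathrm{off}}$ is attained at an extreme point of the product---exactly a $\sigma$-respecting profile $x^\circ$. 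Writing $x^*$ for the profile of $C^*$, the chain
\[
P(x^\circ) = P_{\mathrm{off}}(x^\circ) + P_{\mathrm{diag}}(x^\circ) \le P_{\mathrm{off}}(x^*) + P_{\mathrm{diag}}(x^\circ) \le P(x^*) + 2\max|P_{\mathrm{diag}}|
\]
will translate, via the $\cost\leftrightarrow P$ identity at both endpoints, into $\cost(C) \le \cost(C^*) + 2\eps_0 n^2 + 2\max|P_{\mathrm{diag}}|$.

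To close the argument I will bound $|P_{\mathrm{diag}}(x)| \le \sum_l|N_{l,l}|\,|V^l|^2 \le (\max_l|V^l|)\cdot\sum_l|N_{l,l}||V^l|$. Using $|N_{l,l}|\le \tfrac12+|M_{l,l}|$, interchanging the $l$- and $j$-sums in $\sum_l|M_{l,l}||V^l|\le \sum_j|d_j|\,|R_j\cap S_j|$, and applying Cauchy--Schwarz with the coefficient-length bound $\sum_j d_j^2\le 36$ will yield $\sum_j|d_j||R_j\cap S_j| \le 6\sqrt{s}\cdot \eps n/(8k)$; hence $\sum_l|N_{l,l}||V^l| = O(n)$ and $|P_{\mathrm{diag}}| = O(\eps n^2/k)$, which together with $2\eps_0 n^2 = \eps n^2/k$ delivers $\cost(C) \le \cost(C^*) + O(\eps n^2/k) \le \cost(C^*) + \eps n^2$ once $k$ exceeds a small constant. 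The hardest part will be the diagonal control: the size hypothesis $|R_j|,|S_j|\le\eps n/(8k)$ is exactly what keeps every atom (and hence $P_{\mathrm{diag}}$) small, for without it a single large atom could contribute $\Omega(|V^l|^2) = \Omega(n^2)$ to $|P_{\mathrm{diag}}|$ and destroy the $\eps n^2$ budget.
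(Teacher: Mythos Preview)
Your overall strategy is the same as the paper's: write the cost as a quadratic form in the profile variables $x_i^l=|C_i\cap V^l|$, strip out the diagonal monomials, and use that the off-diagonal form is linear in each block $(x_i^l)_{i\in[k]}$ to push the optimum to an extreme point of the product simplex. The two proofs diverge only in how the $|C_i|^2$ contribution is absorbed into the quadratic form, and that is exactly where your argument breaks.

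The gap is in your diagonal bound. You need $\max_l |V^l|\le \eps n/(8k)$, and you try to get it by ``passing to a partition''. But the hypotheses of the lemma do \emph{not} force every atom to be small: with $|R_j|,|S_j|\le\eps n/(8k)$, the atom
\[
V^{l_0}\;=\;V\setminus\bigcup_{j\in[s]}(R_j\cup S_j)
\]
can have size $\Theta(n)$. Replacing the $R_j,S_j$ by the Venn-diagram pieces (so that they partition $V$) does not help, since $V^{l_0}$ is one of those pieces and therefore violates the size bound in the new decomposition. For this atom $\alpha_j^{l_0}=\beta_j^{l_0}=0$ for all $j$, hence $M_{l_0,l_0}=0$ and $N_{l_0,l_0}=\tfrac12$, so the single diagonal term $\tfrac12\sum_i(x_i^{l_0})^2$ can be of order $n^2$. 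Your estimate $|P_{\mathrm{diag}}|\le(\max_l|V^l|)\sum_l|N_{l,l}|\,|V^l|$ then gives only $O(n^2)$, not $O(\eps n^2/k)$, and the budget is blown.

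The paper avoids this precisely because its idealized quadratic form is indexed by triples $(i,t,t')$ with $V^t\subseteq R_i$ and $V^{t'}\subseteq S_i$; a diagonal monomial ($t=t'$) therefore forces $V^t\subseteq R_i\cap S_i$, whence $|V^t|\le\eps n/(8k)$ automatically, with no assumption on the size of the complement atom. In your formulation the $\tfrac12$ coming from $|C_i|^2$ lands on \emph{every} diagonal entry $N_{l,l}$, including the large one; you would need either an extra hypothesis guaranteeing all atoms are small (e.g., throwing the pieces of the initial equitable $8k/\eps$-partition into the generating family), or to handle the $|C_i|^2$ term separately from the cut-decomposition part, as the paper effectively does.
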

\begin{proof}
We use Equation~\eqref{eq:weak_reg} to introduce an ``idealized'' cost function $\ideal$
satisfying the following for any clustering $X$:
\begin{enumerate}
    \item $\abs{\cost(X) - \ideal(X)} \le \frac{\eps n^2}{2}$; and
    \item $\ideal(C) \le \ideal(X) + \frac{\eps n^2}{2}$.
\end{enumerate}
Taken together, these two properties imply the result.

For each $k$-clustering $X$ into $X_1, \ldots, X_k$, define
\begin{align}\label{eq:ideal}
\ideal(X)
  = {-\frac{n}{2}}\,\, &+\,\, \sum_{j \in [k]}\,\,                         \Bigg[\sum_{i \in [s]} \left(\frac{1 - d_i}{2}\right) |X_j \cap R_i| |X_j \cap S_i| \Bigg] \notag \\
  &+ \sum_{\substack{j,j' \in[k]\\j \neq j'}} \Bigg[\sum_{i \in [s]}      d_i  |X_j \cap R_i| |X_{j'} \cap S_i| \Bigg].
\end{align}
For any $j, j' \in [k]$, $j \neq j'$, using Equation~\eqref{eq:weak_reg} it holds that
\begin{alignat*}{3}
E(X_j, X_{j'}) &=             \Big[\, \sum_{i \in [s]} d_i  |X_j \cap R_i| |X_{j'} \cap S_i|  &\Big] \pm& \frac{\eps n}{2k} \sqrt{|X_j| |X_{j'}|}. \\
\intertext{Similarly,}
 E(X_j, X_j)   &= \frac{1}{2} \Big[ \sum_{i \in [s]} d_i  |X_j \cap R_i| |X_j \cap S_i|     &\Big] \pm& \frac{\eps n}{2k} \sqrt{|X_j| |X_{j}|}.
\end{alignat*}
Therefore 
\begin{align*}
   \cost(X) &= \sum_{j\in[k]} \binom{|X_j|}{2} - E(X_j, X_j) + \sum_{\substack{j,j'\in[k]\\j \neq j'}} E(X_j, X_j') \\
           &= -\frac{n}{2} +  \sum_{j\in[k]} \frac{1}{2} |X_j|^2 - E(X_j, X_j) + \sum_{\substack{j,j'\in[k]\\j \neq j'}} E(X_j, X_j')\\
           &\le \ideal(X) + \frac{\eps n}{2k} \cdot \sum_{\substack{j,j'\in[k]}} \sqrt{|X_j| |X_j'|} \\
           &= \ideal(X) + \frac{\eps n}{2k} \cdot \Big(\sum_{\substack{j\in[k]}} \sqrt{|X_j|}\Big)^2 \\
           &\le \ideal(X) + \frac{\eps n}{2k} \cdot k \Big(\sum_{\substack{j\in[k]}} {|X_j|}\Big)\\
           &= \ideal(X) + \frac{\eps n^2}{2},
\end{align*}
where the last inequality is by Cauchy-Schwarz.

It remains to be shown that $\ideal(X) \ge \ideal(C) - \frac{\eps n^2}{2}$; in other words, that there is an almost-optimal $k$-clustering
under the $\ideal$ cost function whose pieces are unions of the pieces  $V^1, \ldots, V^\ell$ of the Venn diagram of $S_1, T_1, \ldots, S_s, T_s$.
To see this, write
$$
R_i = \bigcup_{t \mid V_t \subseteq R_i} V^t,\quad
S_i = \bigcup_{t' \mid V_{t'} \subseteq S_i} V^{t'}.
$$
Then $$
|X_j \cap R_i| = \sum_{t \mid V_t \subseteq R_i} |X_j \cap V^t|,\quad
|X_j \cap S_i| = \sum_{t' \mid V_{t'} \subseteq S_i} |X_j \cap V^{t'}|.
$$
Therefore $\ideal(X)+n/2$ is a quadratic form on the $k \ell$ intersection sizes $|X_j \cap V^t|$,
$j \in [k], t \in [\ell]$:
$$ \ideal(X) + \frac{n}{2} = \sum_{\substack{j,j'\in[k]\\i\in [s]\\V^t\subseteq R_i,V^{t'}\subseteq S_i}} \lambda_{j,j'}^i |X_j \cap V^t| |X_{j'} \cap
V^{t'}|, $$
where $\lambda_{j,j}^i = (1-d_i)/2$ and $\lambda_{j,j'}^i = d_i$ when $j\neq j'$.


Now remove from this expression the terms where $t=t'$. 
Among these, the terms where $j \neq j'$ evaluate to zero because $X_j$ and $X_{j'}$ are disjoint. Each of the terms where $t=t'$ and $j = j'$ has absolute value
at most $$|\lambda_{j,j}^i| |X \cap V^t|^2 \le 4 |X| |V^t| \le \frac{\eps n}{2k} |X_j|,$$
since $|\lambda_{j,j}^i| = |\frac{1-d_i}{2}| \le 4$ from the bound on the coefficient length, and $|V^t| \le \eps n/(8k)$.
Therefore the term removal changes the value of the $\ideal$ cost function by at most $\eps n^2/2$.

For $(t, j) \in [\ell] \times [k]$, let $\alpha_j^t = |X_j \cap V^t|$.
Let $$\kappa_{j,j'}^{t,t'} = \indic[t\neq t'] \cdot \sum_{\substack{i\in[k]\\V_t \subseteq R_i\\V_{t'}\subseteq S_i}} \lambda_{j,j'}^i \, \indic[V^t
\subseteq R_i \wedge V^{t'} \subseteq S_i].$$
Then we have seen that
$$ \ideal(X) + \frac{\eps n^2}{2} \ge {-\frac{n}{2}} +\sum_{\substack{t,t'\in[\ell]\\j,j'\in[k]}} \kappa_{j,j'}^{t,t'}\, \alpha_j^t \alpha_{j'}^{t'} ,$$
and $\kappa_{j,j'}^{t,t} = 0$.
Hence finding the optimal $k$-clustering under the idealized cost function can be reduced, up to an additive error of $\frac{\eps}{2} n^2$, to solving the
following integer quadratic program:
\begin{alignat}{2}\label{eq:qp}
\text{minimize  }\quad{}&       {-\frac{n}{2}} + \sum  \kappa_{j,j'}^{t,t'}\,\alpha_{j}^t \alpha_{j'}^{t'} \\
\text{subject to}\quad{}&           \sum_{j\in[k]} \alpha_j^t  =|V^t|, \quad& \forall t\in [\ell] \notag\\
                             &\alpha_j^t                 \ge 0, \quad& \forall t\in[\ell], j\in[k] \notag\\
                             &\alpha_j^t  \in \naturals.\notag
\end{alignat}
The reason is that any feasible solution for $\{\alpha_j^t\}$ gives a clustering by assigning $\alpha_1^t$ arbitrary elements of $V^t$ to the first cluster,
    another $\alpha_2^t$ elements  of $V^t$ to the second cluster, and so on.

    Because $\kappa_{j,j'}^{t,t} = 0$, there is an optimal solution to~\eqref{eq:qp} in which for all $t \in[\ell]$, exactly one~$\alpha^i_t$ is equal to $|V^t|$ and the rest are zero.
    Indeed, fix $\alpha^{t'}_j$ for all $t'\neq t$ and all $j$ in a solution (which corresponds to fixing a $k$-clustering of $V \setminus
        V^t$). Then the objective function becomes a linear combination of $\alpha^t_1, \ldots, \alpha^t_{k}$, plus a constant term. Therefore it is minimized by picking the
cluster $j \in [k]$ with the smallest coefficient and setting $\alpha^t_j=|V^t|$.
\end{proof}
We sketch now our second local algorithm.

\begin{proofof}{Theorem~\ref{main_dense}}
For any $k \in \naturals, \eps \in (0, 1)$, we show a local algorithm that achieves an $(1,\eps)$-approximation to the optimal $k$-clustering in time
$\poly(k/\eps)$, after a preprocessing stage that uses $\poly(k/\eps)$ queries and $2^{\poly(k/\eps)}$
    time. Theorem~\ref{main_dense} then follows by setting $k = O(1/\eps)$.

Fist compute a cut decomposition of $A$ that satisfies the conditions of Lemma~\ref{lem:opt_cl}. By Theorem~\ref{thm:fk},
 it can be computed implicitly in $\poly(k/\eps)$
time. Let $V^1, \ldots, V^\ell$ be the atoms of the $\sigma$-algebra, where $\ell = 2^{2s}$ and $s = \poly(k/\eps)$. Observe that they can also be
defined implicitly:
given $x \in V$ we can compute in $\poly(s)$ time a $2s$-bit label that determines the unique $V^t$ to which $x$ belongs,
namely the value of the $2 s$ indicator functions $\indic[x \in S_i], \indic[x \in T_i]$.

Next we proceed to the more expensive preprocessing part. Consider a clustering all of whose classes are unions of $V^1,
    \ldots, V^\ell$. Any such clustering is defined by a mapping  $g:[\ell] \to [k]$ that, for every $i \in [\ell]$, identifies the cluster to which all elements of $V^i$
belong.
 We can try all the $\ell^k = \expt{k/\eps}$ possibilities for $g$, and for each of them and estimate the cost of the associated clustering
 to within $\eps/(2k)$ with high enough success probability by sampling. (We omit the details.)
If we select the best of them, by Lemma~\ref{lem:opt_cl}, it will have cost within $\eps n^2$ of the optimal one.

Now we have a ``best'' mapping $g$ from $[\ell]$ to $[k]$ that, for every $i \in \ell$, tells us the cluster of the elements of $V^i$. Finally, note that for any $x \in V$, the appropriate $i\in [\ell]$ such that $x \in V^i$ can be determined in time $\poly(s) =
\poly(k/\eps)$, and then we can get a cluster label for $x$ in time $\poly(k/\eps)$ by computing $g(i)$.
\end{proofof}

$
\vspace{-8.5mm}
$ 

\section{Lower bounds}\label{sec:lb}
We show that our algorithm from Section~\ref{sec:main} is optimal up to constant factors by
proving a matching lower bound for obtaining $(O(1),\eps)$-approximations.
For simplicity we consider expected approximations at first; later we prove that combining upper and lower bounds for expected approximations leads to
lower bounds for finding bounded
approximations with high confidence.

\begin{theorem}\label{lb_neps}
Let $c \ge 1$, $\eps \in(1/n, 1/(100 c))$.
Finding an expected $(c, \eps)$-approximation to the best
clustering with probability $1/2$ requires $\frac{n}{4000 \eps c^2}$ queries to the similarity matrix.

In addition, any local clustering algorithm achieving this approximation has query complexity $\Omega(1/(\eps c^2))$.
(This remains true even if we allow preprocessing, as long as its running time is bounded by a function of $\eps$ and not $n$.)
\end{theorem}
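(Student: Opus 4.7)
The plan is to apply Yao's lemma: I construct a distribution $\calD$ over inputs on which every deterministic algorithm making few queries must, with probability $\ge 1/2$, output a clustering of cost exceeding $c\cdot \opti + \eps n^2$.

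\textbf{Hard distribution.} Let $t = \lfloor 1/(\gamma c \eps)\rfloor$ for a small constant $\gamma$. Partition $V$ into a revealed set $K$ and a hidden set $U$ with $|K|=|U|=n/2$, and fix an equitable partition $K = K_1 \cup \cdots \cup K_t$ of size $n/(2t)$ each. Independently and uniformly, for each $u \in U$, draw $\sigma(u) \in [t]$. Extend $\sigma$ to all of $V$ by setting $\sigma(v)=i$ for $v \in K_i$, and declare $\{x,y\}$ positive iff $\sigma(x)=\sigma(y)$. The resulting graph is a cluster graph, so $\opti = 0$ and any $(c,\eps)$-approximation has cost at most $\eps n^2$.

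\textbf{Cost of misplacement.} For any output clustering $\cl$, call $u \in U$ \emph{misplaced} if $\cl(u)$ differs from the label $\cl$ assigns to a strict majority of $K_{\sigma(u)}$. Each misplaced $u$ forces $\Omega(n/t)$ disagreements: either $u$ is split off from most of $K_{\sigma(u)}$, violating $\Omega(n/t)$ positive edges, or $u$ is merged with a majority of some $K_j$, $j\ne\sigma(u)$, violating $\Omega(n/t)$ negative edges. Consequently, the allowed cost $\eps n^2$ tolerates at most $O(\eps n t) = O(n/c)$ misplacements, so at least $|U| - O(n/c) \ge |U|/2$ hidden vertices must be correctly placed (for $\gamma$ small and $c$ above an absolute constant).

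\textbf{Information-theoretic bound.} For each $u \in U$, let $q_u$ be the number of queries incident with $u$. A direct counting argument shows that the probability (over $\sigma(u)$) of correctly identifying $\sigma(u)$ from these queries is at most $(q_u + 1)/t$: each query to a vertex $w \in K_i$ either certifies $\sigma(u) = i$ or removes one candidate index, and a query to $u' \in U$ reveals only the collision $\indic[\sigma(u')=\sigma(u)]$, which is no more informative since $\sigma(u')$ is itself uniform. By linearity, the expected number of correctly-placed hidden vertices is at most $(2q + |U|)/t$, where $q$ is the total query count (each query contributes to two endpoints). Requiring this quantity to exceed $|U|/2$ yields $q = \Omega(|U| t) = \Omega(n/(c\eps))$. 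A standard Markov reduction from expected-$(c,\eps)$-approximation to success-with-probability-$1/2$, together with slack in converting fractional cost into absolute cost, absorbs one extra factor of $c$ and yields the claimed bound $q \ge n/(4000 c^2 \eps)$. The local bound then follows by running a purported local algorithm at every vertex: an explicit clustering is produced using at most $n q_0$ queries plus preprocessing, and since preprocessing is assumed independent of $n$, we obtain $q_0 \ge \Omega(1/(c^2\eps))$.

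The main obstacle will be coupling the information-theoretic bound with the cost-of-misplacement lemma robustly against arbitrary output clusterings: the algorithm need not respect the $K_i$'s in any way, so \emph{misplacement} has to be defined so that it simultaneously (i) supports the $\Omega(n/t)$ cost charge against \emph{any} labelling, and (ii) is controlled by the information-theoretic estimate of $\sigma(u)$-identifiability. Once this coupling is set up and the Markov argument tightened, the stated constants drop out by choosing $\gamma$ sufficiently small.
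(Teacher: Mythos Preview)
Your high-level plan (Yao's principle, $\Theta(1/(c\eps))$ known blocks, hidden vertices with a uniformly random block assignment, information-theoretic bound on learning the assignment) matches the paper's, but your hard distribution and the way you close the argument are genuinely different from the paper.

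\textbf{Comparison with the paper.} The paper takes a \emph{small} hidden set $B$ of size $\alpha n$ with $\alpha = 1/(4c)$, and gives each $v\in B$ a \emph{single} random edge to a vertex $r_v\in A$. Thus $\opti$ is nonzero (expected $\le (\eps/c) n^2$), and the analysis proceeds by lower-bounding the \emph{distance between the output and the natural clustering $N$}, then invoking the triangle inequality. The key event is simply ``$v$ and $r_v$ lie in the same output cluster'', which is defined for \emph{any} output clustering; this is how the paper sidesteps the coupling difficulty you flagged. The extra factor of $c$ in the paper's bound arises precisely because $|B|=\Theta(n/c)$ and $k=\Theta(1/(c\eps))$.

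By contrast, your hidden set has size $n/2$ and your graph is an exact cluster graph, so $\opti=0$. A consequence you seem not to have noticed: when $\opti=0$, a $(c,\eps)$-approximation is just a clustering of cost $\le \eps n^2$, \emph{independently of $c$}. Hence your construction (if it works) yields the stronger bound $\Omega(n/\eps)$ directly, and your sentence about a ``standard Markov reduction \ldots absorbs one extra factor of $c$'' is misplaced --- there is no missing factor of $c$ to absorb, and the parameter $c$ should not appear in the definition of $t$.

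\textbf{The acknowledged gap is real.} You correctly identify the crux: the information bound controls $\Pr[\text{algorithm identifies }\sigma(u)]$, but what you need to control is $\Pr[\cl(u)=\ell_{\sigma(u)}]$ where $\ell_i$ is the plurality output label on $K_i$. These coincide only when $\ell_1,\ldots,\ell_t$ are distinct; an adversarial algorithm can collapse several $K_i$'s into one output cluster, making ``correct placement'' easier at the price of $K$--$K$ disagreements. Your proposal does not close this loop. Two routes that do work: (a) add a separate charge showing that if many $\ell_i$'s coincide then the $K$--$K$ cost alone already exceeds $\eps n^2$, so one may assume the $\ell_i$ are essentially distinct; or (b) follow the paper and replace ``$u$ is correctly placed'' by an event of the form ``$u$ shares its output cluster with a designated random anchor $r_u\in K_{\sigma(u)}$'', then bound the distance to the true clustering rather than the cost directly. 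Either fix requires more than what you wrote; without it the argument is incomplete.
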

\begin{proof}
The first part implies the second because any $q(\eps)$-query local clustering algorithm with preprocessing $p(\eps)$ can be turned
into an explicit $n \cdot q(\eps) + p(\eps)$-query clustering algorithm. Given a lower bound of $n \cdot l(\eps)$ on the complexity of finding approximate
$(c,\eps)$-clusterings for large enough $n$, we get $n \cdot q(\eps) + p(\eps) \ge n \cdot l(\eps)$ for all large enough $n$, which implies $q(\eps)
    \ge l(\eps)$ since $\lim_{n->\infty} p(\eps) / n = 0$.
So we prove the first claim.

By Yao's minimax principle, it is enough to produce a distribution $\calG$ over graphs with the following properties:
\begin{itemize}
    \item the expected cost of the optimal clustering of $G \sim \calG$ is $\expect[\opti(G)] \le \frac{\eps n^2}{c}; $
    \item for any \emph{deterministic} algorithm making at most $n/(4000 \eps c^2)$ queries, the expected cost (over $G$) of the clustering produced exceeds $2\eps
    n^2 \ge c \cdot \expect[\opti(G)] + \eps n^2$.
\end{itemize}

Let $\alpha = \frac{1}{4 c}$, $k = \frac{1}{32 c \eps}$ and $l = \frac{k^2 \eps n}{3} \ge \frac{n}{4000 c^2 \eps}$.
We can assume that $c$, $k$ and $\alpha n / k$ are integral (here we use the fact that $\eps > 1/n$).
Let $A = \{1, \ldots, (1 - \alpha) n\}$ and
$B  = \{(1-\alpha) n + 1, \ldots, n\}$.
Consider the following distribution $\cal G$ of graphs:
partition the vertices of $A$ into
exactly $k$ equal-sized clusters $C_1, \ldots, C_k$. The set of positive edges will be the union of the cliques
defined by $C_1, \ldots, C_k$, plus an edge joining each vertex $v \in B$ to
a randomly chosen element $r_v \in A$.
Define the natural clustering of a graph $G \in \calG$ by the classes $C_i' = C_i \cup \{ v \in B\mid r_v \in C_i
\}$ ($i \in [k]$). This clustering will have a few disagreements because of the negative edges between different
vertices $v, w\in B$ with $r_v = r_{w}$. The cost of the optimal clustering of $G$ is bounded by that of
the natural clustering $N$, hence
$$ \expect [\opti]  \le \expect[\cost(N)] =  \frac{\binom{\alpha n}{2}}{k} \le \frac{\alpha^2 n^2}{2k} = \frac{\eps}{c} n^2. $$

We have to show that any algorithm making $l$ queries to graphs drawn from~$\calG$
produces a clustering with expected cost larger than $2 \eps n^2$.
This inequality holds provided that the output clustering $C$ and the natural clustering $N$ are
at least $3 \eps$-far apart. Indeed, reasoning about expected distances, $N$ is $\eps/c$-close to $G$, therefore any clustering that is
$2\eps$-close to $G$ is also $2\eps+\eps/c\le 3\eps$-close to $N$ from the triangle
inequality.

Since all graphs in $\calG$ induce the same subgraphs on $A$ and $B$ separately, we can assume
without loss of generality that the algorithm queries only edges between $A$ and $B$. Let us
analyze the distance between the natural clustering and the clustering found by the algorithm.
For $v \in B$,
let $Q_v$ denote set of queries it makes from $v$ to $A$ and put $q_v = |Q_v|$. Clearly we can assume $q_v \le k - 1$.
The total number of queries made is $q = \sum_{v \in B} q_v$.

As $r_v$ is independent of all edges from $[n] - \{v\}$ to
$[n]-\{v\}$, conditioning on the responses to all queries not involving $v$ we still know that
the probability that all responses are negative is $\Pr[r_v \notin Q_v] = 1 - q_v / k$.
When this happens, the probability that $r_v$ coincides with  the algorithm's choice is
at most $\frac{1}{k - q_v}$.

All in all we have that the probability that the algorithm puts $v$ into the same cluster as~$r_v$
is bounded by $\frac{1}{k - q_v} + \frac{q_v}{k}$. Let us associate a 0-1 random variable $a_v$ with
this event and put $R = \sum_{v\in B} a_v$. Consequently,
$$ \expect [R] \le \sum_{v \in B} \left(\frac{1}{k - q_v} + \frac{q_v}{k}\right) = \frac{q}{k} + \sum_{v \in B}
\frac{q_v}{k-q_v}.$$
We will see below (Lemma~\ref{kmqi}) that the last term can be bounded by $2(m+q) / k$,
   where $m = |B| = \alpha n$. Therefore
$\expect [R] \le \frac{3q + 2m}{k}$.

Now note that any vertex with $a_v = 0$ introduces $2 (n - m) / k \ge n / k$ new differences with the natural
clustering. Thus the expected number of differences is at least
\begin{eqnarray*}
\left(m - \frac{3q + 2m}{k}\right) \frac{n}{k} &=& m\left(1-\frac{2}{k}\right)\frac{n}{k} - \frac{3q n}{k^2} \\
 &\geq& \alpha \frac{n^2}{2k} -
\frac{3q n}{k^2} > 4 \eps n^2 - \frac{3q n}{k^2} \ge 3 \eps n^2,
\end{eqnarray*}
because $q \le l$.

\end{proof}
    \begin{lemma}\label{kmqi}
    Let $q_1, \ldots, q_m \in [0, k-1]$ with $\sum_{i=1}^m q_i = q$. Then
    $$ \sum_{i=1}^m \frac{1}{k - q_i} \le \frac{2(m+q)}{k}.$$
    \end{lemma}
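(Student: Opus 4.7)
The plan is to reduce the claim to a pointwise inequality and then sum. Concretely, I would show that for every $q_i \in [0, k-1]$,
$$ \frac{1}{k - q_i} \;\le\; \frac{2(1 + q_i)}{k}, $$
from which the lemma follows immediately by summing over $i \in [m]$:
$$ \sum_{i=1}^m \frac{1}{k-q_i} \;\le\; \frac{2}{k}\sum_{i=1}^m (1+q_i) \;=\; \frac{2(m+q)}{k}. $$

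To establish the pointwise bound, note that since $q_i \le k-1$ we have $k - q_i > 0$, so clearing denominators reduces it to showing $k \le 2(1+q_i)(k-q_i)$. I would then expand the right-hand side and rewrite it in a manifestly non-negative form:
$$ 2(1+q_i)(k-q_i) \;=\; 2k \,+\, 2\,q_i\,(k - 1 - q_i). $$
Since both $q_i \ge 0$ and $k-1-q_i \ge 0$ on our range, the second summand is non-negative, so $2(1+q_i)(k-q_i) \ge 2k \ge k$, which is exactly what is needed.

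There is essentially no obstacle: the only thing to watch is the algebraic identity $(1+q_i)(k-q_i) = k + q_i(k-1-q_i)$, which makes the non-negativity transparent on the interval $[0,k-1]$ that the hypothesis supplies. (A slicker but less transparent alternative would be to check that the quadratic $2q_i^2 - 2(k-1)q_i - k$ has a negative root and a root at least $k-1$, but the factorization above is cleaner.) Once the pointwise bound is in hand, summation yields the lemma directly and no further estimates are required.
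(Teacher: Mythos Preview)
Your proof is correct, and it takes a genuinely different (and cleaner) route from the paper. The paper argues by a threshold split: with $\gamma = q/(m+q)$ it partitions the indices into $A = \{i : q_i \ge \gamma k\}$ and $B = \{i : q_i < \gamma k\}$, bounds the $A$-contribution by $|A| \le q/(\gamma k) = (m+q)/k$ via a counting/Markov argument, and bounds each $B$-term by $1/((1-\gamma)k)$ so that the $B$-contribution is at most $m/((1-\gamma)k) = (m+q)/k$; adding gives $2(m+q)/k$. Your approach instead establishes the pointwise bound $\tfrac{1}{k-q_i} \le \tfrac{2(1+q_i)}{k}$ via the identity $2(1+q_i)(k-q_i) = 2k + 2q_i(k-1-q_i)$, which is transparent on $[0,k-1]$, and then sums. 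Your argument is shorter and avoids the auxiliary parameter $\gamma$; the paper's bucketing technique is more general-purpose (useful when no single pointwise estimate suffices), but here it is unnecessary overhead.
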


    \begin{proof}
    Let $\gamma = \frac{q}{m+q}$. Define the sets
    $$ A = \{ i \in [m] \mid q_i \ge \gamma k \}$$
    and
    $$ B = \{ i \in [m] \mid q_i < \gamma k \}.$$
    Observe that $|A| \le \frac{q}{\gamma k} = \frac{m + q}{k}$.
    Then
    $$ \sum_{i=1}^m \frac{1}{k - q_i} \le |A| + \frac{|B|}{(1-\gamma) k} \le |A| + \frac{m}{(1-\gamma) k}
    \le \frac{2(m+q)}{k}.$$
    \end{proof}

Finally, we argue that similar bounds hold for algorithms that obtain
good approximation with high success probability.
\begin{lemma}\label{conv_expected}
Suppose~$\calA$ finds a           $(c,\eps)$-approximate clustering with success probability $1/2$ using $q$ queries, and
~$\calB$ finds an expected $(c,r \cdot \eps)        $-approximate clustering using $q$ queries.
Then there is an algorithm $\calC$ that finds an expected $\big(c,2\eps+\log(2r) \cdot \exp(-2 \eps^2 q)\big)$-approximation using $2 q \cdot \log(2r)$ queries.
\end{lemma}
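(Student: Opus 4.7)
The plan is to build $\calC$ by amplifying $\calA$ through independent repetition and then using fresh random pair-queries to estimate and compare the costs of the resulting clusterings, keeping $\calB$ in reserve to absorb the rare case in which every $\calA$-run fails. Concretely, let $m := \log_2(2r)$ and split the query budget as follows: run $\calA$ with fresh randomness $m-1$ times to obtain $X_1,\dots,X_{m-1}$ and run $\calB$ once to obtain $Y=:X_m$; this uses $mq$ queries. Then, for each candidate $X_j$, sample $q$ independent uniform pairs of vertices and let $\tilde T_j$ be the empirical disagreement rate between $X_j$ and $G$; this uses another $mq$ queries, for a total of $2q\log_2(2r)$. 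Output $X_{i^\ast}$ where $i^\ast := \arg\min_j \tilde T_j$.

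For the analysis, let $T_j := \cost(X_j)/n^2$ and $d^\ast := \opti/n^2$. Introduce two events: $G$, the amplification event ``some $X_i$ with $i\le m-1$ satisfies $T_i \le c d^\ast + \eps$'', and $F$, the estimation event ``$|\tilde T_j - T_j|\le\eps$ for every $j$''. Independence of the $\calA$-runs gives $\Pr[G^c]\le 2^{-(m-1)} = 1/r$, while one-sided Hoeffding plus a union bound gives $\Pr[F^c]\le 2m\,\exp(-2\eps^2 q)$. On $F$ the defining inequality $\tilde T_{i^\ast}\le \tilde T_j$ for all $j$ gives the standard sandwich $T_{i^\ast}\le \min_j T_j + 2\eps$, and on $G$ we have $\min_j T_j \le cd^\ast+\eps$; these combined (and folding constants into the $2\eps$ slack by using a tighter estimation accuracy such as $\eps/2$ in the intermediate step if needed) yield $T_{i^\ast}\le cd^\ast + 2\eps$ on $F\cap G$.

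The crucial use of $\calB$ is in bounding the tail. On $F^c$ the contribution to $\E[T_{i^\ast}]$ is at most $\Pr[F^c]\le \log(2r)\cdot \exp(-2\eps^2 q)$ (after absorbing the leading constant), since costs are bounded by $1$. On $F\cap G^c$ the sandwich still gives $T_{i^\ast}\le T_Y + 2\eps$; taking expectations and using independence of $Y$ from the $X_i$'s yields
\[
\E[T_Y \mathbb{1}_{G^c}] = \E[T_Y]\Pr[G^c] \le (cd^\ast + r\eps)\cdot \tfrac{1}{r},
\]
whose $r\eps$ term is killed by the $1/r$ factor to contribute only $O(\eps)$, again absorbed into the $2\eps$ slack. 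Summing the three regimes and carefully accounting for constants produces the claimed expected $(c,\,2\eps+\log(2r)\exp(-2\eps^2 q))$-approximation.

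The main obstacle is the constant-chasing: the naive sandwich $T_{i^\ast}\le \min_j T_j+2\eps$ together with the naive bound $\E[\min_j T_j]\le cd^\ast+\eps+\Pr[G^c]$ produces an extra additive $1/(2r)$ term that is not absorbed unless $r$ is large. The fix is precisely to include $Y$ as a candidate and exploit the independence of $\calA$ and $\calB$'s random coins so that the $1/(2r)$ probability is multiplied by $\calB$'s expected fractional cost $cd^\ast+r\eps$, which balances the two sides and keeps the additive slack at $2\eps$; the estimation accuracy has to be tuned (and the one-sided Hoeffding used, avoiding a wasteful factor of $2$) so that the failure probability comes out to $\log(2r)\exp(-2\eps^2 q)$ as stated.
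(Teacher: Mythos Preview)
Your construction is exactly the paper's: run $\calA$ independently $\log r$ times, run $\calB$ once, estimate each candidate's cost with $q$ fresh random pairs, and return the empirical minimizer; the analysis also matches, splitting according to whether some $\calA$-run succeeds and whether the Hoeffding estimates are accurate, and using independence of $\calB$ from the $\calA$-runs so that the $r\eps$ in $\calB$'s guarantee is cancelled by the $1/r$ failure probability. The paper is equally informal about the final constant bookkeeping (it simply writes the output as an ``expected $(c,2\eps)+\tfrac{1}{r}(c,r\eps)+(0,p)$-approximation''), so your acknowledged constant-chasing is not a gap relative to the original proof.
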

\begin{proof}
Algorithm~$\calC$ does the following:
\begin{enumerate}
    \item Let $t \gets \log r$.
    \item Run $t$ independent instantiations of~$\calA$ to find clusterings $C_\calA^1, \ldots, C_{\calA}^t$ with $q t$ queries.
    \item Run~$\calB$ independently to find an expected $(c,r \cdot \eps)$-approximate clustering $C$ with $q$ queries.
    \item Estimate the quality of these $t+1$ clusterings using $q$ random samples for each of them.
    \item Return the clustering with the smallest estimated error.
\end{enumerate}
The query complexity bound of $\calC$ is as stated. When one of the $t + 1$ clusterings found is $(c,\eps)$-approximate, the probability that we fail to
return a $(c,2\eps)$-approximation is at most $p = \exp(-2 \eps^2 q) \cdot (t + 1)$.  In this case we bound the error of the clustering output by $1$. So the contribution to the expected approximation due to this kind
of failure is at most $(0, p)$. We assume from now on that this is not the case.

The probability that none of $C_{\calA}^1,\ldots, C_{\calA}^t$ is a $(c,\eps)$-approximation is at most
$2^{-t} \le \frac{1}{r}$. In this case we output a $(c,2\eps)$-approximation.
On the other hand, with probability at most $\frac{1}{r}$, we output a clustering that in expectation is a $(c,r \cdot \eps)$-approximation.
Therefore, the output is an expected $(c,2\eps) + \frac{1}{r}(c,r) + (0, p)$-approximation.
\end{proof}

\begin{corollary}
Let $\eps > 10^7/n$. Finding a $(c,\eps)$-approximate clustering with confidence $1/2$ requires $q = \frac{n}{2 \cdot 10^6 \cdot c^2 \eps}$  queries.
\end{corollary}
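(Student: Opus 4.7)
The plan is to argue by contradiction using Lemma~\ref{conv_expected} together with Theorem~\ref{lb_neps}. Suppose, to the contrary, that some algorithm $\calA$ returned a $(c,\eps)$-approximate clustering with probability at least $1/2$ using only $q=n/(2\cdot 10^6 c^2\eps)$ queries; from $\calA$ I would build an expected-approximation algorithm whose query complexity falls below the lower bound in Theorem~\ref{lb_neps}, producing the contradiction.

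To instantiate Lemma~\ref{conv_expected}, I would take as $\calB$ the trivial algorithm that places every vertex in its own cluster. It makes no queries and has cost at most $n^2/2$, so it is vacuously an expected $(c,r\eps)$-approximation for $r=1/(2\eps)$. Feeding $(\calA,\calB)$ into the lemma yields an algorithm $\calC$ of query complexity $2q\log(2r)=2q\log(1/\eps)$ whose expected approximation guarantee is $\bigl(c,\,2\eps+\log(1/\eps)\exp(-2\eps^2 q)\bigr)$.

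Next I would use the hypothesis $\eps>10^7/n$ to tame both parameters of $\calC$. With our choice of $q$ one has $2\eps^2 q=\eps n/(10^6 c^2)$, and $\eps n>10^7$ makes $\exp(-2\eps^2 q)$ small enough that $\log(1/\eps)\exp(-2\eps^2 q)\le\eps$; hence $\calC$ is an expected $(c,3\eps)$-approximation. Since $\log(1/\eps)<\log n$ is moderate, the slack between the constant $2\cdot 10^6$ in $q$ and the $4000$ appearing in Theorem~\ref{lb_neps} is wide enough to absorb the extra $\log(1/\eps)$ factor, so $2q\log(1/\eps)$ stays below $n/(4000\cdot 3\eps\cdot c^2)$. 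Applying Theorem~\ref{lb_neps} at parameter $3\eps$ (which lies in the admissible range $(1/n,1/(100c))$ whenever $\eps$ itself does) then forbids the existence of $\calC$, supplying the contradiction.

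The main obstacle is the delicate bookkeeping of constants: one must simultaneously ensure that (i) the exponential tail $\log(1/\eps)\exp(-2\eps^2 q)$ is dominated by $\eps$ and (ii) the blown-up query budget $2q\log(1/\eps)$ still fits below the Theorem~\ref{lb_neps} threshold at parameter $3\eps$. The hypothesis $\eps>10^7/n$ is precisely what makes $2\eps^2 q$ large enough for (i), while the gap between the constants $2\cdot 10^6$ in the corollary and $4000$ in Theorem~\ref{lb_neps} leaves exactly the room needed for (ii) after paying the $\log(1/\eps)$ amplification overhead from Lemma~\ref{conv_expected}.
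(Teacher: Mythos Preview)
Your overall strategy (combine Lemma~\ref{conv_expected} with Theorem~\ref{lb_neps}) matches the paper's, but your choice of $\calB$ creates a genuine gap. Taking $\calB$ to be the trivial singleton clustering forces $r = 1/(2\eps)$, hence $\log(2r) = \log(1/\eps)$. Your claim (ii) then requires
\[
2q\log(1/\eps) \;<\; \frac{n}{4000\cdot 3\eps\cdot c^2},
\]
which, after substituting $q = n/(2\cdot 10^6 c^2 \eps)$, reduces to $\log(1/\eps) < 10^6/12000 \approx 83$. But the corollary allows $\eps$ as small as $10^7/n$, so $\log(1/\eps)$ can be of order $\log n$, which is unbounded. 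The sentence ``since $\log(1/\eps)<\log n$ is moderate'' does not help: the ratio between the constants $2\cdot 10^6$ and $4000$ is fixed, while $\log(1/\eps)$ is not. Thus your argument only works for $\eps$ bounded away from zero by an absolute constant, not in the full range stated.

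The paper avoids this by taking $\calB$ to be the algorithm of Corollary~\ref{explicit4}, which with $q$ queries produces an expected $(3, O(n/q))$-approximation. Since $n/q = 2\cdot 10^6 c^2 \eps$, this gives $r = O(c^2)$, a constant independent of $\eps$ and $n$; hence $\log(2r)$ is an absolute constant (the paper uses $25$), and the query blowup $2q\log(2r) = 50q$ fits comfortably under the Theorem~\ref{lb_neps} threshold. The key point you are missing is that one should bootstrap from the paper's own upper bound rather than from a trivial algorithm, precisely so that $r$ does not depend on $\eps$.
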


\begin{proof} We may assume $c \ge 3$.
Take the algorithm  $\calB$ from Corollary~\ref{explicit4} and plug it into Lemma~\ref{conv_expected}. This gives an
expected approximation of  $(\max(c,3), 2\eps + 25 \exp(-2 \eps^2 q)) \le (c, 3\eps)$ using $50 q$ queries.
The result now follows from Lemma~\ref{lb_neps}.\end{proof}

\section{Extensions}\label{sec:extensions}

\spara{Non-binary similarity function.} In Section 1 we have introduced correlation clustering in its most general form, with a pairwise similarity function~$\similarity: V \times V \rightarrow [0,1]$, while the case we ave studied so far is that of a binary
similarity function~$\similarity: V \times V \rightarrow \{0,1\}$.
%
The general case can be reduced to this by ``rounding the graph'', i.e., by
replacing a non-binary similarity score with either $0$ or $1$ according to which is the closest (breaking ties arbitrarily): Bansal \emph{et al.}~{\cite[Thm. 23]{correlation_clustering}}
showed that if $\calA$ is an algorithm that produces a clustering on a graph $G$ with $0,1$-edges with
approximation ratio $\rho$, then running $\calA$ on the rounding of $G$ achieves an approximation
ratio of $2\rho+1$.
Therefore our algorithms also provide $(O(1),\eps)$ approximations for correlation clustering  in the more general weighted case.


\spara{Neighborhood oracles.} If, given $v$, we can obtain a linked list of the \emph{positive} neighbours of~$v$ (in time linear in its length), then it is possible to obtain a
multiplicative $(O(1), 0)$-approximation in time $O(n^{3/2})$, which is sublinear. Indeed, Ailon and Liberty~\cite{correlation_revisited} argue that with a neighborhood oracle, $\balls$ runs in time $O(n +
        OPT)$; if $OPT \le n^{3/2}$ this is $O(n^{3/2})$. On the other hand, if we set $\eps = n^{-1/2}$ in our algorithm, we obtain in time
$O(n^{3/2})$ a $(O(1), n^{-1/2})$-approximation, which is also a $(O(1), 0)$-approximation when $OPT \ge n^{3/2}$. So we can run $\balls$ for
$O(n^{3/2})$ steps  and output the result; if it doesn't finish, we run our algorithm with $\eps = n^{-1/2}$.

\spara{Distributed/streaming clustering.}
In Section~\ref{sec:intro} we mentioned that there are general transformations from local clustering algorithms into distributed/streaming
algorithms. For  our local algorithm from Section~\ref{sec:main} we can do the following. Suppose that to each processor $P$ is assigned a subset $A_P$ of the pairs $V \times V$, so that $P$ can compute (or has information about) whether there is a
positive edge between $x$ and $y$ for the pairs $(x, y) \in A_P$. (The assignment of vertex pairs to processors can be arbitrary, as long as they
        partition $V \times V$.) Then each processor selects the same random vertex subset $S \subseteq V$ of size $O(1/\eps)$, and discards (or does
            not query/compute) the edges not incident with $S$ among those it can see ($A_P$).
After this, each processor outputs, for each $v$, the
pairs $(v, w)$ in $A_P$ (note that for each $v$, there are only $O(1/\eps)$ different such pairs). With this information the pivot set $T$ is
the subset of $S$ with no neighbour smaller than itself (in some random order), and then the label of $v$'s cluster is the first element of $T$ adjacent to $v$. This can be computed easily in another round.

Note that the sum of the memory used by all processors is $O(n/\eps)$, so for constant $\eps$ we also get a (semi-)streaming algorithm that makes one pass over the data,
with edges arriving in arbitrary order. In two passes we can reduce memory usage to $O(n+1/\eps^2)$: first store the adjacency matrix of the
subgraph induced by the random sample $S$, and compute the set $T$ of pivots. In the second pass, keep an integer for each $v$ that indicates the
first element of $T$ that has appeared as a neighbour of $v$ in the edges seen so far. At the end, this integer will be $v$'s cluster.


\spara{Variants of correlation clustering.}
The second algorithm (based on cut matrices) can easily be extended to chromatic correlation clustering~\cite{chromatic_clustering}, and to bi-clustering (co-clustering)~\cite{biclustering}.

\section{Concluding remarks}\label{sec:conc}
This paper initiates the investigation into local correlation clustering, devising 
algorithms with sublinear time and query complexity. The tradeoff between the running time of our algorithms
and the quality of the solution found  is close to optimal. Moreover, our solutions are amenable to simple implementations and they can also be naturally adapted to the distributed and streaming settings in order to improve their latency or memory usage.

The notion of local clustering introduced in this paper opens an interesting line of work, which might lead to various contributions in more applied scenarios.
For instance, the ability of local algorithms to (among others) quickly estimate the cost of the best clustering
can provide a powerful a primitive for decision-making, around which to build new data analysis frameworks.
The streaming capabilities of the algorithms may also prove useful in clustering large-scale evolving graphs: this might be applied to detect communities in on-line social networks.

Another intriguing question is whether one can devise other graph-querying models that allow for improved theoretical results  while being reasonable from a practical viewpoint.
The $O(n^{3/2})$-time constant-factor approximation algorithm using neighborhood oracles that we discussed in Section~\ref{sec:extensions} suggests that
this may be a fruitful direction to pursue in further research. The question seems particulary relevant in order to apply local techniques to very sparse
graphs.



\begin{thebibliography}{10}

\bibitem{active_queries}
N.~Ailon, R.~Begleiter, and E.~Ezra.
\newblock Active learning using smooth relative regret approximations with
  applications.
\newblock In {\em \colt{25}}, pages 19.1--19.20, 2012.

\bibitem{balls}
N.~Ailon, M.~Charikar, and A.~Newman.
\newblock Aggregating inconsistent information: Ranking and clustering.
\newblock {\em Journal of the ACM}, 55(5), 2008.

\bibitem{data_reconstr}
N.~Ailon, B.~Chazelle, S.~Comandur, and D.~Liu.
\newblock Property-preserving data reconstruction.
\newblock {\em Algorithmica}, 51(2):160--182, 2008.

\bibitem{correlation_revisited}
N.~Ailon and E.~Liberty.
\newblock Correlation clustering revisited: The ``true`` cost of error
  minimization problems.
\newblock In {\em \icalp{36}}, pages 24--36, 2009.

\bibitem{testing_clustering}
N.~Alon, S.~Dar, M.~Parnas, and D.~Ron.
\newblock Testing of clustering.
\newblock {\em SIAM Journal on Discrete Mathematics}, 16(3):393--417, 2003.

\bibitem{regalg_alon}
N.~Alon, R.~A. Duke, H.~Lefmann, V.~R{\"o}dl, and R.~Yuster.
\newblock The algorithmic aspects of the regularity lemma.
\newblock {\em Journal of Algorithms}, 16(1):80--109, 1994.

\bibitem{approx_csp}
N.~Alon, W.~{Fern\'andez de la Vega}, R.~Kannan, and M.~Karpinski.
\newblock Random sampling and approximation of {MAX-CSP}s.
\newblock {\em Journal of Computer and System Sciences}, 67(2):212--243, 2003.

\bibitem{charac_graphs}
N.~Alon and A.~Shapira.
\newblock A characterization of the (natural) graph properties testable with
  one-sided error.
\newblock {\em SIAM Journal on Computing}, 37:1703--1727, 2008.

\bibitem{clustering_similarity}
M.-F. Balcan, A.~Blum, and S.~Vempala.
\newblock A discriminative framework for clustering via similarity functions.
\newblock In {\em \stoc{40}}, pages 671--680, 2008.

\bibitem{correlation_clustering}
N.~Bansal, A.~Blum, and S.~Chawla.
\newblock Correlation clustering.
\newblock {\em Machine Learning}, 56(1-3):89--113, 2004.

\bibitem{clustering_genes}
A.~Ben-Dor, R.~Shamir, and Z.~Yakhini.
\newblock Clustering gene expression patterns.
\newblock {\em Journal of Computational Biology}, 6(3/4):281--297, 1999.


\bibitem{chromatic_clustering}
F.~Bonchi, A.~Gionis, F.~Gullo, and A.~Ukkonen.
\newblock Chromatic correlation clustering.
\newblock In {\em \kdd{18}}, pages 1321--1329, 2012.

\bibitem{biclustering}
S.~Busygin, O.~A. Prokopyev, and P.~M. Pardalos.
\newblock Biclustering in data mining.
\newblock {\em Computers {\&} Operations Research}, 35(9):2964--2987, 2008.

\bibitem{cluster_qualitative}
M.~Charikar, V.~Guruswami, and A.~Wirth.
\newblock Clustering with qualitative information.
\newblock {\em Journal of Computer and System Sciences}, 71(3):360--383, 2005.

\bibitem{sublinear_clustering}
A.~Czumaj and C.~Sohler.
\newblock Sublinear-time approximation algorithms for clustering via random
  sampling.
\newblock {\em Random Structures and Algorithms}, 30(1--2):226--256, 2007.

\bibitem{sublinear_clustering2}
A.~Czumaj and C.~Sohler.
\newblock Small space representations for metric min-sum {\it k}-clustering and
  their applications.
\newblock {\em Theoretical Computer Science}, 46(3):416--442, 2010.

\bibitem{corr_weighted}
E.~D. Demaine, D.~Emanuel, A.~Fiat, and N.~Immorlica.
\newblock Correlation clustering in general weighted graphs.
\newblock {\em Theoretical Computer Science}, 361(2-3):172--187, 2006.

\bibitem{semi_stream}
J.~Feigenbaum, S.~Kannan, A.~McGregor, S.~Suri, and J.~Zhang.
\newblock On graph problems in a semi-streaming model.
\newblock {\em Theoretical Computer Science}, 348(2-3):207--216, 2005.

\bibitem{regalg_hyper}
E.~Fischer, A.~Matsliah, and A.~Shapira.
\newblock Approximate hypergraph partitioning and applications.
\newblock {\em SIAM Journal on Computing}, 39:3155--3185, 2010.

\bibitem{testing_estimation}
E.~Fischer and I.~Newman.
\newblock Testing versus estimation of graph properties.
\newblock {\em SIAM Journal on Computing}, 37(2):482--501, 2007.

\bibitem{approx_matrices_conf}
A.~M. Frieze and R.~Kannan.
\newblock The regularity lemma and approximation schemes for dense problems.
\newblock In {\em \focs{37}}, pages 12--20, 1996.

\bibitem{approx_matrices}
A.~M. Frieze and R.~Kannan.
\newblock Quick approximation to matrices and applications.
\newblock {\em Combinatorica}, 19(2):175--220, 1999.

\bibitem{weak_regularity}
A.~M. Frieze and R.~Kannan.
\newblock A simple algorithm for constructing {Szemer{\'e}di}'s regularity
  partition.
\newblock {\em Electronic Journal of Combinatorics}, 6, 1999.

\bibitem{lowrank_approx}
A.~M. Frieze, R.~Kannan, and S.~Vempala.
\newblock Fast {Monte Carlo} algorithms for finding low-rank approximations.
\newblock {\em Journal of the ACM}, 51(6):1025--1041, 2004.

\bibitem{fixed_clusters}
I.~Giotis and V.~Guruswami.
\newblock Correlation clustering with a fixed number of clusters.
\newblock {\em Theory of Computing}, 2(13):249--266, 2006.

\bibitem{gowers_lb}
T.~Gowers.
\newblock Lower bounds of tower type for {Szemer{\'e}di}'s uniformity lemma.
\newblock {\em Geometric and Functional Analysis}, 7(2):322--337, 1997.

\bibitem{duplicate_detection}
O.~Hassanzadeh, F.~Chiang, R.~J. Miller, and H.~C. Lee.
\newblock Framework for evaluating clustering algorithms in duplicate
  detection.
\newblock {\em PVLDB}, 2(1):1282--1293, 2009.

\bibitem{active_clustering}
T.~Hofmann and J.~M. Buhmann.
\newblock Active data clustering.
\newblock In {\em Advances in Neural Information Processing Systems 10 (NIPS)},
  1997.

\bibitem{lt_gb}
M.~Karpinski and W.~Schudy.
\newblock Linear time approximation schemes for the {Gale-Berlekamp} game and
  related minimization problems.
\newblock In {\em \stoc{41}}, pages 313--322, 2009.

\bibitem{image_segmentation}
S.~Kim, S.~Nowozin, P.~Kohli, and C.~D. Yoo.
\newblock Higher-order correlation clustering for image segmentation.
\newblock In {\em NIPS}, pages 1530--1538, 2011.

\bibitem{reg_appl}
J.~Koml{\'o}s, A.~Shokoufandeh, M.~Simonovits, and E.~Szemer{\'e}di.
\newblock The regularity lemma and its applications in graph theory.
\newblock In {\em \stacs{19}}, pages 84--112, 2000.



\bibitem{sublinear_approx_cluster}
N.~Mishra, D.~Oblinger, and L.~Pitt.
\newblock Sublinear time approximate clustering.
\newblock In {\em \soda{12}}, pages 439--447, 2001.

\bibitem{approx_vc}
M.~Parnas and D.~Ron.
\newblock Approximating the minimum vertex cover in sublinear time and a
  connection to distributed algorithms.
\newblock {\em Theoretical Computer Science}, 381(1-3):183--196, 2007.

\bibitem{fast_local}
R.~Rubinfeld, G.~Tamir, S.~Vardi, and N.~Xie.
\newblock Fast local computation algorithms.
\newblock In {\em \ics{2}}, pages 223--238, 2011.

\bibitem{mono_reconstr}
M.~E. Saks and C.~Seshadhri.
\newblock Local monotonicity reconstruction.
\newblock {\em SIAM Journal on Computing}, 39(7):2897--2926, 2010.



\bibitem{regularity_clustering}
G.~N. S{\'a}rk{\"o}zy, F.~Song, E.~Szemer{\'e}di, and S.~Trivedi.
\newblock A practical regularity partitioning algorithm and its applications in
  clustering.
\newblock {\em Computing Research Repository}, abs/1209.6540, 2012.

\bibitem{cluster_editing}
R.~Shamir, R.~Sharan, and D.~Tsur.
\newblock Cluster graph modification problems.
\newblock {\em Discrete Applied Mathematics}, 144(1-2):173--182, 2004.

\bibitem{regularity_clustering0}
A.~Sperotto and M.~Pelillo.
\newblock Szemer{\'e}di's regularity lemma and its applications to pairwise
  clustering and segmentation.
\newblock In {\em \proceedings{6} Conference in Energy Minimization Methods in
  Computer Vision and Pattern Recognition}, pages 13--27, 2007.

\bibitem{spielman_local}
D.~A. Spielman and S.-H. Teng.
\newblock A local clustering algorithm for massive graphs and its application
  to nearly linear time graph partitioning.
\newblock {\em SIAM Journal on Computing}, 42(1):1--26, 2013.

\bibitem{survey_local}
J.~Suomela.
\newblock Survey of local algorithms.
\newblock {\em ACM Computing Surveys}, 45(2):24:1--24:40, Mar. 2013.

\bibitem{clustering_sdp}
C.~Swamy.
\newblock Correlation clustering: maximizing agreements via semidefinite
  programming.
\newblock In {\em \soda{15}}, pages 526--527, 2004.

\bibitem{Yao77}
A.~C.-C. Yao.
\newblock Probabilistic computations: Toward a unified measure of complexity.
\newblock In {\em \focs{18}}, pages 222--227, 1977.

\end{thebibliography}

\newpage
\appendix
\section{A sharper bound for Lemma 4.10}\label{sec:sharper}

\begin{lemma}\label{lem:a_i}
Let $a_0 \in (0, 1)$ and define a sequence by $a_{i+1} = a_i (1 - a_i)$ for $i \ge 1$.
Then for all $j \ge 1$,
$$ a_j \le \frac{1}{\frac{1}{\widehat{a}_0} + j + \ln{(\widehat{a}_0 j)}-o_j(1)}, $$ 
where  $\widehat{a}_0 = \min(a_0, 1-a_0)$.
\end{lemma}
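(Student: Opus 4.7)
I would use two substitutions: first $b_i := 1/a_i$ (under which the recurrence becomes $b_{i+1} = b_i/(1-a_i)$), and then the Lambert-style linearization $u_i := b_i - \ln b_i$. A direct computation gives
$$ u_{i+1} - u_i \;=\; \frac{1}{1-a_i} + \ln(1-a_i) \;=:\; f(a_i), $$
and one checks $f(0) = 1$ and $f'(x) = x/(1-x)^2 \ge 0$ on $[0,1)$. Hence $f(a_i) \ge 1$ for every $i$, and therefore $u_j \ge u_1 + (j-1)$ for all $j \ge 1$.

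The heart of the proof is the inequality $u_1 \ge 1/\widehat{a}_0 - \ln(1/\widehat{a}_0) + 1$. When $a_0 \le 1/2$ this is immediate: $b_0 = 1/\widehat{a}_0$, so $u_0 = 1/\widehat{a}_0 - \ln(1/\widehat{a}_0)$ and the bound follows from $u_1 \ge u_0 + 1$. When $a_0 > 1/2$ the relation $b_1 = b_0/\widehat{a}_0$ together with the identity $(b_0 - 1)/\widehat{a}_0 = b_0$ (which uses $\widehat{a}_0 = 1 - 1/b_0$) lets me rewrite the claim as $b_0 - \ln b_0 \ge 1$, i.e.\ the textbook inequality $x \ge 1 + \ln x$.

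Combining these yields $b_j - \ln b_j \ge L$, where $L := 1/\widehat{a}_0 + j - \ln(1/\widehat{a}_0)$. Since $\phi(x) := x - \ln x$ is increasing on $[1,\infty)$, and a direct check gives $\phi(L + \ln L) = L + \ln L - \ln(L + \ln L) \le L$ (using $\ln L \ge 0$ for $L \ge 1$), inverting yields $b_j \ge L + \ln L$. Finally, $L \ge j$ is equivalent to $1/\widehat{a}_0 \ge \ln(1/\widehat{a}_0)$, a special case of $x \ge \ln x$, so $\ln L \ge \ln j$ and $b_j \ge L + \ln j = 1/\widehat{a}_0 + j + \ln(\widehat{a}_0 j)$. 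Taking reciprocals proves the lemma, in fact with an $o_j(1)$ term that vanishes identically.

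The main technical insight is the choice of $u_i = b_i - \ln b_i$: a direct bootstrap on the $b_i$-recurrence would lose an $O(1)$ additive constant that does not vanish with $j$, whereas this linearization collapses everything into the single elementary inequality $x \ge 1 + \ln x$, applied once at $b_0$ (for the key identity above) and once at $L$ (for inverting $\phi$). The subtle point will be the case split at $a_0 = 1/2$, since for $a_0 > 1/2$ the value $b_0$ is too small to provide the leading constant $1/\widehat{a}_0$ on its own, and one must pass through $b_1$.
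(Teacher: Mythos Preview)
Your proof is correct and in fact sharper than the paper's: you obtain the bound with the $o_j(1)$ term identically zero, whereas the paper's argument loses a constant factor in the logarithm.

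The two approaches start the same way, passing to $b_i = 1/a_i$ (the paper's $m_i$), but then diverge. The paper writes the telescoping identity $m_i = m_0 + i + \sum_{j<i} 1/(m_j-1)$, feeds in the crude upper bound $m_j \le m_0 + 2j + 1$ to lower-bound each summand, and evaluates the resulting harmonic-type sum by the integral test; this two-step bootstrap is what costs the factor $\tfrac{1}{2}$ in front of the logarithm and produces the $o_j(1)$ slack. Your substitution $u_i = b_i - \ln b_i$ sidesteps the bootstrap entirely: the miracle is that $u_{i+1}-u_i = \tfrac{1}{1-a_i} + \ln(1-a_i) \ge 1$ holds \emph{exactly}, so the logarithmic correction is built in from the start rather than recovered afterwards. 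The price is the extra inversion step for $\phi(x)=x-\ln x$, but as you note this reduces to $x \ge 1 + \ln x$ again. Your handling of the case $a_0 > 1/2$ via the identity $(b_0-1)/\widehat a_0 = b_0$ is also cleaner than the paper's device of simply replacing $a_0$ by $1-a_0$ (which relies on $a_1$ being symmetric in $a_0$ and $1-a_0$).
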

\begin{proof}
Since replacing $a_0$ with $1-a_0$ does not affect the terms $a_j$ for $j \ge 1$, we can assume
$a_0 = \widehat{a}_0 \le 1/2$, in which case the result holds also for $j = 0$. Set
$m_i = \frac{1}{a_i}$ for all $i \ge 0$. Then $m_0 = \frac{1}{a_0} \ge 2$ and
$$m_{i+1}-m_i = \frac{1}{\frac{1}{m_i} (1-\frac{1}{m_i})}-m_i = \frac{m_i^2}{m_i - 1}-m_i =  1 +
\frac{1}{m_i -
1}.$$
In other words, for all $i \ge 1$ we have
$$ m_i = m_0 + i + \sum_{j=0}^{i-1} \frac{1}{m_j - 1} .$$

Thus $m_i \ge m_1 \ge 4$, $m_i \le m_0 + \frac{4}{3} i\le m_0 + 2i+1$, and
          $$ m_i \ge m_0 + i + \sum_{j=1}^{i-1} \frac{1}{m_0 + 2 j} =
                     m_0 + i + \frac{1}{2} \sum_{j=1}^{i-1} \frac{1}{\frac{m_0}{2} + j}. $$
Since by the integral test
$$ \ln \left( \frac{b+1}{a} \right) \le \sum_{k=a}^b \frac{1}{k} \le  \ln \left( \frac{b}{a-1}
        \right), $$
we have
$$ \sum_{j=1}^{i-1} \frac{1}{\frac{m_0}{2} + j} \ge \ln\left(1+\frac{2 (i-1)}{m_0 + 2} \right) \ge
\ln\left( 1+\frac{i-1}{m_0} \right), $$
so
$$ m_i \ge m_0 + i + \frac{1}{2} \ln\left(1+ \widehat{a}_0 \cdot (i - 1) \right),$$
 as we wished to show.
\end{proof}

\section{Finding cut decompositions implicitly}\label{app:fk}
We give here an overview of Frieze and Kannan's method~\cite{approx_matrices}.
In essence, the process works with the submatrix induced by certain randomly chosen subsets $U, V$ of size $\poly(1/\eps)$
and defining $\indic[x \in R_i]$ and $\indic[y \in R_i]$ in terms of the adjacencies (matrix entries) of $x$ and $y$ with $U$ and $V$.

We start with the following simple exponential-time algorithm for finding cut decompositions. Suppose we have
found cut matrices $D_0, \ldots, D_{i-1}$ and we want to find $D_i$. Let $W_i = A - \sum_{j < i} D_j$
be the residual matrix. While there exist sets $R_i', S_i'$ with
\begin{equation}\label{eq:weight}
|W_i(R_i', S_i')| \ge \eps \sqrt{|R_i'| |S_i'|} \sqrt{m n},
\end{equation}
let $R_i = R_i', S_i = S_i'$, $d_i= W_i(R_i, S_i)/(|R_i| |S_i|)$
and add $D_i = CUT(R_i, S_i, d_i)$ to the decomposition. An easy computation shows that the squared Frobenius norm of the residual matrix decreases
by $W_i(R_i', S_i')^2 / (|R_i'| |S_i'|)$, i.e., at least an $\eps^2$ fraction of $\norm{A}_F^2 \le mn$. Therefore this process cannot go on for more than $1/\eps^2$ steps. This gives a non-constructive proof of existence of cut decompositions.

How to make this procedure run in time independent of the matrix size?
We can cut some slack here by replacing $\eps$ with some polynomial of $\epsilon$ with a larger
exponent. Frieze and Kannan pick a row set $R_i \subseteq \calR$ and then use a
sampling-based procedure to construct a column set $S_i \subseteq
\calS$ such that the $R_i\times S_i$
submatrix is sufficiently dense. Provided that the entries in the matrix $W_i$ remain bounded and inequality~\eqref{eq:weight} holds for some $R_i', S_i'$,
          they are able to find           $R_i\in\calR, S_i\in\calS$ such that
          $$|W_i(R_i, S_i)| \ge \poly(\eps) \cdot  m n,$$
which implies an $\poly(\eps)$-fractional decrease in the squared Frobenius norm of the residual
matrix.  They show that, with probability at least $\poly(\eps)$, we can take for $P_i$
the set of all $x\in \calR$ with $W_i(x, v) \cdot \nu \ge \nu^2$ for some randomly chosen $v \in \calC$ and $\nu \in[-1,1]$;
   and for $S_i$ the set of all $y \in \calC$ with $W_i(R_i, y) \cdot \nu \ge 0$. 

We need to deal with how to represent the sets $R_i, S_i$ used in the
decomposition in an implicit manner. We will write down a predicate that, given $i \in [s]$ and $x \in \calR$ (resp., $y \in \calS$), tells us
whether $x \in R_i$ (resp., $y \in S_i$) and can be evaluated in time $\poly(1/\eps)$ by making queries to $A$.
 Although the size of~$R_i\subseteq \calR$ may be linear in $m$, its definition    makes it possible to
 check for membership in $R_i$ with one query to $W_i$.   The set $S_i$, for its part, does not admit such a quick membership test, so Frieze and Kannan work
   with an approximation achieved by replacing $R_i$ with a $\poly(1/\eps)$-sized portion thereof in the definition of $S_i$.
   With the new definition, membership in $R_i$ and $S_i$ can be computed in time $\poly(1/\eps)$, as we shall see.
   Also, the density $d_i=W_i(R_i,S_i)/(|R_i| |S_i|)$ can be estimated to
   within $\pm \epsilon^2 mn / 16$ accuracy by sampling with $\poly(1/\eps)$ queries to $W_i$.

\mycomment{
            \newpage
\begin{algorithm}\label{alg:balls}
\begin{algorithmic}[0]

\Function{InR}{$i, x$}
\State \Return $\Call{W}{i, v_i, x}$.
\EndFunction
\Statex

\Function{InS}{$i, y$}
\State \Return $\sum{j < i} \sum_{v \in U_j} \Call{W}{i, v, x}$.
\EndFunction
\Statex

\Function{W}{$i, x, y$}
    \State $a \gets A_{x,y}$.       \Comment{Query the input matrix.}
    \For $j \in \{0,\ldots,i\}$
    \EndFor
\EndFunction
\Statex

\For {$i \in \poly(1/\eps)$}
    \State Pick $v_i \gets \calC$ at random.
    \State Pick $\nu_i \gets [-1,1]$ at random.
    \State Get a random sample $U_i$ of size $\poly(1/\eps)$.
    \State $U_i \gets \{ x \in \calR \mid W_i(v_i, x) \cdot \nu \ge \nu^2 \}$.
\EndFor
\end{algorithmic}
\end{algorithm}
}

Summarizing, we can build a cut decomposition in the following way. Let $s = \poly(1/\eps)$.
At at stage $i$, $i=0,1\ldots,s-1$, the first $i$ cut matrices $CUT(R_i,
        S_i, d_i)$ are implicitly known. Given the previous $i$ cut matrices, the residual matrix $W_i$ is given by
$$ W_i(x, y) = A(x,y) - \sum_{j < i} d_{j} \cdot \indic[x \in R_j] \cdot \indic[y \in S_j]; $$
extend the notation to sets in the obvious manner.

The set $R_i$ is defined in terms of a random element $v_i \in \calC$ and a random real $\nu_i \in [-1,1]$ by
$$ R_i = \{ x\in\calR \mid W_i(x, v_i) \cdot \nu \ge \nu^2 \}. $$
The set $S_i$ is defined in terms of $R_i$ and a random sample $U_i \subseteq \calR$ of size $\poly(1/\eps)$ by 
$$ S_i = \{ y\in\calS \mid W_i(U_i \cap R_i, y) \cdot \nu \ge 0 \}.$$
Finally, the density $d_i$ is defined in terms of another random sample $Z_i \subseteq \calR \times \calS$ of size $\poly(1/\eps)$ by
$$ d_i = \expect_{(x, y) \in Z_i}[W_i(x, y) \mid (x, y) \in R_i \times S_i] .$$

Let $U = \bigcup_i (U_i \cup \Pi_1(Z_i))$, $V = \bigcup_i (V_i \cup \Pi_2(Z_i))$. We need to compute compute $W_i(u, v)$, $\indic[u \in R_j]$,
    $\indic[v \in S_j]$ and $d_j$ for all $(u, v) \in U \times V$, $j \le i$.
    This can be done in time $\poly(s/\eps)$ using dynamic programming and the formulas above.
    This allows us to compute $W_i(x, y)$ for all $x, y$ in time $\poly(s/\eps) = \poly(1/\eps)$.

\end{document}